\documentclass[12pt]{article}
%%%%%%%%%%%%%%%%%%%%%%%%%%%%%%%%%%%%%%%%%%%%%%%%%%%%%%%%%%%%

\usepackage{enumitem}
\usepackage{mathtools}
%\mathtoolsset{showonlyrefs}
\usepackage{xr}

\newcommand{\change}[1]{{{\small \color{black}  #1}}}
\usepackage{url}
\usepackage{amsfonts}
\usepackage{amsmath,amssymb,color}
\usepackage{graphicx}
\usepackage{enumitem}
\usepackage{multirow}
\usepackage{bm}
\usepackage{verbatim}
\usepackage{float}
\usepackage[toc,page]{appendix}
\usepackage[doublespacing]{setspace}
\usepackage{apacite}
\usepackage{natbib}
\usepackage{adjustbox}
\usepackage{rotating}
\usepackage{subcaption}
\usepackage{amsthm}
\usepackage{booktabs,dcolumn,caption}

\usepackage{mathabx}

\usepackage{array}

\usepackage{amsthm}
\usepackage{cleveref}

\numberwithin{equation}{section}
\numberwithin{figure}{section}

\newcolumntype{L}[1]{>{\raggedright\let\newline\\\arraybackslash\hspace{0pt}}p{#1}}
\newcolumntype{C}[1]{>{\centering\let\newline\\\arraybackslash\hspace{0pt}}p{#1}}
\newcolumntype{R}[1]{>{\raggedleft\let\newline\\\arraybackslash\hspace{0pt}}p{#1}}

 \oddsidemargin=0in
 \evensidemargin=0in
 \topmargin=0in
 \headsep=0in
 \headheight=0in
 \textheight=9in
 \textwidth=6.5in

\newcommand{\bbb}{{\boldsymbol \beta}}

\newcommand{\ppp}{\boldsymbol p}
\newcommand{\qqq}{\boldsymbol q}

\newcommand{\BB}{\mbox{$\mathbf B$}}

\newcommand{\bb}{\mbox{$\mathbf b$}}

\newcommand{\CC}{\mbox{$\mathbf C$}}

\newcommand{\YY}{\mbox{$\mathbf Y$}}
\newcommand{\yy}{\mbox{$\mathbf y$}}

\newcommand{\g}{\mathbf g}
\newcommand{\aaaa}{\mathbf a}

\newcommand{\xx}{\mathbf x}

\newcommand{\argmax}{\operatornamewithlimits{arg\,max}}
\newcommand{\argmin}{\operatornamewithlimits{arg\,min}}

\theoremstyle{plain}
\newtheorem{theorem}{Theorem}[section]
\newtheorem{lemma}[theorem]{Lemma}

\newtheorem{algorithm}{Algorithm}

\newtheorem{remark}{Remark}

\title{\Large Computation for Latent Variable Model Estimation:\\
A Unified Stochastic Proximal Framework}
%\author{Siliang Zhang and Yunxiao Chen}
\date{}

\begin{document}
\maketitle

\doublespacing

\begin{abstract}

Latent variable models have been playing a central role in psychometrics and related fields. In many modern applications, the inference based on latent variable models involves one or several of the following features: (1) the presence of many latent variables, (2) the observed and latent variables being continuous, discrete, or a combination of both, (3) constraints on parameters, and (4) penalties on parameters to impose model parsimony. The estimation often involves maximizing an objective function based on a marginal likelihood/pseudo-likelihood, possibly with constraints and/or penalties on parameters. Solving this optimization problem is highly non-trivial, due to the
complexities brought by the features mentioned above. Although several efficient algorithms have been proposed, there lacks a unified computational framework that takes all these features into account. In this paper, we fill the gap. Specifically, we provide a unified formulation for the optimization problem
and then propose a quasi-Newton stochastic proximal algorithm. Theoretical properties of the proposed algorithms are established. The computational efficiency and robustness are shown by simulation studies under various settings for latent variable model estimation.
\end{abstract}	
\noindent
KEY WORDS: Latent variable models, penalized estimator, stochastic approximation, proximal algorithm, quasi-Newton methods, Polyak-Ruppert averaging

\section{Introduction}

Latent variable models have been playing a central role in psychometrics and related fields. Commonly used latent variable models include item response theory models \citep{embretson2000item,reckase2009multidimensional}, latent class models \citep{clogg1995latent,templin2010diagnostic,von2019handbook}, structural equation models \citep{bollen1989structural}, error-in-variable models \citep{carroll2006measurement}, random-effects models \citep{hsiao2014analysis}, and models for missing data \citep{little2019statistical},  where latent variables have different interpretations, such as hypothetical constructs, `true' variables measured with error, unobserved heterogeneity, and  missing data. We refer the readers to \cite{rabe2004generalized} and \cite{bartholomew2011latent} for a comprehensive review of latent variable models.

A latent variable model contains unobserved latent variables and unknown parameters. For example, an item response theory model contains individual-specific latent traits as latent variables and item-specific parameters as model parameters.
Comparing with models without latent variables, such as linear regression and generalized linear regression, the estimation of latent variable models is typically more involved. This estimation problem can be viewed from three perspectives: (1) fixed latent variables and parameters, (2) random latent variables and fixed parameters, and (3) random latent variables and parameters.

The first perspective, i.e.,  fixed latent variables and parameters,
leads to the \textit{joint maximum likelihood} (JML) estimator. This estimator can often be efficiently computed, for example, by an alternating minimization algorithm \citep{birnbaum1968some,chen2018joint,chen2019structured}. Unfortunately, however, the JML estimator is typically statistically inconsistent \citep{neyman1948consistent,andersen1973conditional,haberman1977maximum,ghosh1995inconsistent}, except under some high-dimensional asymptotic regime that is suitable for large-scale applications \citep{chen2018joint,chen2019structured,haberman1977maximum,haberman2004joint}.  Treating both latent variables and parameters as random variables, the third perspective leads to a full Bayesian estimator, for which many \text{Markov chain Monte Carlo} (MCMC) algorithms have been developed \citep[e.g.,][]{beguin2001mcmc,bolt2003estimation, dunson2000bayesian,dunson2003dynamic, edwards2010markov}.

The second perspective, i.e., random latent variables and fixed parameters, essentially follows an \textit{empirical Bayes} (EB) approach \citep{robbins1956empirical,zhang2003compound}. This perspective is the most commonly adopted one \citep{rabe2004generalized}.
Throughout the paper, we refer to estimators derived under this perspective as EB estimators. Both the full-information \textit{marginal maximum likelihood} (MML) estimator \citep{bock1981marginal} and the limited-information \textit{composite maximum likelihood} (CML) estimator \citep{joreskog2001factor,vasdekis2012composite} can be viewed as special cases.
Such estimators involve optimizing an objective function with respective to the fixed parameters, while the objective function
is often intractable due to an integral with respect to the latent variables.
The most commonly used algorithm for this optimization problem is the \textit{expectation-maximization} (EM) algorithm \citep{dempster1977maximum,bock1981marginal}. This algorithm typically requires to iteratively  evaluate numerical integrals with respective to the latent variables, which is often computationally unaffordable when the dimension of the latent space is high.

A high-dimensional latent space is not the only challenge to the computation of EB estimators. Penalties and constraints on parameters
may also involve in the optimization, further complicating the computation. In fact, penalized estimators have become increasingly more popular in latent variable analysis for learning sparse structure, with applications to restricted latent class analysis, exploratory item factor analysis, variable selection in structural equation models, differential item functioning analysis, among others \citep{chen2015statistical,sun2016latent,chen2018robust,lindstrom2020model,tutz2015penalty,jacobucci2016regularized,magis2015detection}. The penalty function is often non-smooth (e.g., Lasso penalty, \citealp{tibshirani1996regression}), for which many standard optimization tools (e.g., gradient descent methods)  are not applicable.  In addition, complex inequality constraints are also commonly encountered in latent variable estimation, for example, in structural equation models \citep{van2010testing} and restricted latent class models \citep[e.g.,][]{de2011generalized,xu2017identifiability}. Such complex constraints further complicate the optimization.

%\yc{Monte Carlo EM, stochastic EM, and Stochastic approximation.}

In this paper, we propose a quasi-Newton stochastic proximal algorithm that simultaneously tackles the computational challenges mentioned above. This algorithm can be viewed as an extension of
the stochastic approximation (SA) method \citep{robbins1951stochastic}.
Comparing with SA, the proposed method converges faster and is
more robust, thanks to  the use of Polyak-Ruppert averaging \citep{polyak1992acceleration,ruppert1988efficient}.
The proposed method can also be viewed as a stochastic version of a proximal gradient descent algorithm \citep[Chapter 4,][]{parikh2014proximal}, in which constraints and penalties are handled by a proximal update. As will be illustrated by examples later,  the proximal update is easy to evaluate for many commonly used penalties and constraints, making the proposed algorithm computationally efficient.
Theoretical properties of the proposed method are established, showing that the proposed one is almost optimal in  its convergence speed.

%achieves the optimal convergence rate.

The proposed method is closely related to the stochastic-EM algorithm \citep{celeux1985sem,ip2002single,nielsen2000stochastic,zhang2020improved} and the MCMC stochastic approximation algorithms \citep{cai2010high,cai2010metropolis,gu1998stochastic}, two popular methods for latent variable model estimation. Although these methods perform well in many problems,  they are not as powerful as the proposed one. Specifically,
the MCMC stochastic approximation algorithms cannot handle complex inequality constraints or non-smooth penalties{, because they rely on stochastic gradients which do not always exist when there are complex inequality constraints or non-smooth penalties.}
In addition, as will be discussed later, both
the stochastic-EM algorithm and the MCMC stochastic approximation algorithms are computationally less efficient than the proposed method, even for estimation problems without complex constraints or penalties.

% and the proposed method is computationally more efficient than these two methods and has better empirical performance,
%\clearpage
%
%
%
%have bee efficient in the estimation of latent variable models.
%
%All these approaches can be viewed as extensions of the EM algorithm
%Although these methods substantially improve upon the classical EM algorithm,

The proposed method is also closely related to a perturbed proximal gradient algorithm proposed in \cite{atchade2017perturbed}.
The current development improves upon that of \cite{atchade2017perturbed} from two aspects. First,
the proposed method is a Quasi-Newton method, in which the second-order information (i.e., second derivatives) of the objective function is used in the update. Although this step may only change the asymptotic convergence speed by a constant factor (when the number of iterations grows to infinity), our simulation study suggests that the new method converges much faster than that of \cite{atchade2017perturbed} empirically.
Second, the theoretical analysis of \cite{atchade2017perturbed} only considers a convex optimization setting, while we consider a non-convex setting which is typically the case for latent variable model estimation. Note that the analysis is much more involved when the objective function is non-convex. Therefore, our proof of sequence convergence is different from that of \cite{atchade2017perturbed}. Specifically,
the convergence theory is established
by analyzing the convergence of a set-valued generalization of an ordinary differential equation (ODE).

The rest of the paper is organized as follows. In Section~\ref{sec:problem}, we formulate latent variable model estimation as a general optimization problem which covers many commonly used estimators as special cases. In Section~\ref{sec:algorithm}, a quasi-Newton stochastic proximal algorithm is proposed. Theoretical properties of the proposed algorithm are established in Section~\ref{sec:theory}, suggesting that the proposed algorithm achieves the optimal convergence rate.  The performance of the proposed algorithm is demonstrated and compared with other estimators by simulation studies in Section~\ref{sec:simulation}. We conclude with some discussions in Section~\ref{sec:conclusion}.
An R package has been developed and will be published online upon the acceptance of the current paper.

\section{Estimation of Latent Variable Models}\label{sec:problem}

\subsection{Problem Setup}\label{subsec:problem}

We consider the estimation of a parametric latent variable model. We adopt a general setting, followed by concrete examples
in Sections~\ref{subsec:example1} and \ref{subsec:example2}. Let $\YY$ be a random object representing observable data and let $\yy$ be its realization. For example, in item factor analysis (IFA), $\YY$ represents (categorical) responses to all the items from all the respondents.
A latent variable model specifies the distribution of $\YY$ by introducing a set of latent variables $\boldsymbol\xi \in \Xi$, where $\Xi$ denotes the state space of the latent vector $\boldsymbol\xi$. For example, in item factor analysis, $\boldsymbol\xi$ consists of the latent traits of all the respondents and $\Xi$ is a Euclidian space.
Let $\boldsymbol\beta = (\beta_1, ..., \beta_p)^\top \in \mathcal B$ be a set of parameters in the model, where $\mathcal B$ denotes the parameter space.  The goal is to estimate $\boldsymbol\beta$ given observed data $\yy$.

We consider an EB estimator which takes the form
\begin{equation}\label{eq:lik}
l(\boldsymbol \beta) = \log \left(\int_{\Xi} f(\yy, \boldsymbol\xi\mid \boldsymbol\beta)d \boldsymbol\xi\right),
\end{equation}
where $f(\yy, \boldsymbol\xi\mid \boldsymbol\beta)$ is a complete-data likelihood/pseudo-likelihood function that has an analytic form.
We assume that the objective function $l(\boldsymbol \beta)$ is finite for any $\bbb\in \mathcal B$
and is also smooth in $\boldsymbol \beta$.

%\yc{Focus on situations where the integral does not have a close form. Still accaliable, but may be less efficient than EM.}

The estimator is given by solving the following optimization problem
\begin{equation}\label{eq:obj}
  %\begin{aligned}
\hat \bbb =   \argmin_{\boldsymbol \beta \in \mathcal{B}}  -l(\boldsymbol \beta) + R(\boldsymbol\beta),
  %\end{aligned}
\end{equation}
where $R(\boldsymbol\beta)$ is a penalty function that has an analytic form, such as Lasso, ridge, or elastic net regularization functions. Note that the penalty function often depends on tuning parameters. Throughout this paper, we assume these tuning parameters are fixed and thus do not explicitly indicate them in the objective function~\eqref{eq:obj}. In practice, tuning parameters are often unknown and need to be chosen by cross validation or certain information criterion. We point out that many commonly used estimators take the form of \eqref{eq:obj}, including the MML estimator,  the CML estimator, and regularized estimators based on the MML and CML.  {We also point out that
despite its general applicability to latent variable estimation problems, the proposed method is more useful for complex problems that cannot be easily solved by the classical EM algorithm. For certain problems, such as the estimation of linear factor models and simple latent class models, both the E- and M-step of the EM algorithm have closed-form solutions. In that situation, the classical EM algorithm may be computationally more efficient, though the proposed method can still be used.}

\subsection{High-dimensional Item Factor Analysis} \label{subsec:example1}

Item factor analysis models are commonly used in social and behavioral sciences for analyzing categorical response data. For exposition, we focus on binary response data and point out that the extension to ordinal response data is straightforward. Consider $N$ individuals responding to $J$ binary-scored items. Let $Y_{ij}\in \{0, 1\}$ be a random variable denoting person $i$'s response to item $j$ and let $y_{ij}$ be its realization. Thus, we have $\YY = (Y_{ij})_{N\times J}$ and $\yy = (y_{ij})_{N\times J}$, where $\YY$ and $\yy$ are the generic notations introduced in Section~\ref{subsec:problem} for our data. A comprehensive review of IFA models and their estimation can be found in \cite{chen2020estimation}.

%\yc{Chen and Zhang (2020)}.

It is assumed that the dependence among an individual's responses is driven by a set of latent factors, denoted by $\boldsymbol \xi = (\xi_{ik})_{N\times K}$, where $\xi_{ik}$ represents person $i$'s $k$th factor. Recall that $\boldsymbol \xi$ is our generic notation for the latent variables in Section~\ref{subsec:problem} and here the state space $\Xi = \mathbb R^{N\times K}$. Throughout this paper, we assume the number of factors $K$ is known. %In practice,  $K$ is often chosen by information criteria when it is unknown.

An IFA model makes the following assumptions:
\begin{enumerate}
  \item $\boldsymbol \xi_i = (\xi_{i1}, ..., \xi_{iK})^\top$, $i = 1, ..., N$, are independent and identically distributed (i.i.d.) random vectors,  following a multivariate normal distribution $N(\mathbf 0, \boldsymbol\Sigma)$. The diagonal terms of $\boldsymbol\Sigma = (\sigma_{kk'})_{K\times K}$ are set to one for model identification. As $\boldsymbol\Sigma$ is a positive semi-definite matrix, it is common to reparametrize $\boldsymbol\Sigma$ by Cholesky decomposition,
 $$\boldsymbol\Sigma = \BB\BB^\top,$$
      where $\mathbf B = (b_{kk'})_{K\times K}$ is a lower triangular matrix. Let $\mathbf b_k$ be the $k$th row of $\BB$. Then $\Vert \mathbf b_k \Vert = 1$, $k=1, ..., K$, since the diagonal terms of $\boldsymbol\Sigma$ are constrained to value 1.
  \item $Y_{ij}$ given $\boldsymbol \xi_i$ follows a Bernoulli distribution satisfying
                  \begin{equation}\label{eq:logistic}
                    \mathbb{P}(Y_{ij} = 1\mid \boldsymbol \xi_i) = \frac{\exp(d_j + \aaaa_j^\top \boldsymbol\xi_i)}{1+\exp(d_j + \aaaa_j^\top \boldsymbol\xi_i)},
                  \end{equation}
  where $d_j$ and $\aaaa_j = (a_{j1}, ..., a_{jK})^\top$ are item-specific parameters. The parameters $a_{jk}$ are often known as the loading parameters.
  \item $Y_{i1}$, ..., $Y_{iJ}$ are assumed to be conditionally independent given $\boldsymbol\xi_i$, which is known as the local independence assumption.
\end{enumerate}

Note that we consider the most commonly used logistic model in \eqref{eq:logistic}.
% This setting is more natural than its logistic counterpart, given its connection with linear factor analysis; see \cite{chen2020estimation} for more discussions.
It is worth pointing out that the proposed algorithm also applies to the normal ogive (i.e. probit) model which assumes that
$ \mathbb{P}(Y_{ij} = 1\mid \boldsymbol \xi_i) =\Phi(d_j + \aaaa_j^\top \boldsymbol\xi_i)$. Under the current setting and using the reparametrization for $\boldsymbol\Sigma$,  our model parameters are
$\bbb = \{\mathbf B, d_j, \aaaa_j, j =1, ..., J\}$.
%and the parameter space .
The marginal likelihood function takes the form
\begin{equation}\label{eq:IFA}
%\begin{aligned}
l(\bbb) = \sum_{i=1}^N \log \left(\int_{\xx \in \mathbb R^{K}} \prod_{j=1}^J \frac{\exp[y_{ij}(d_j + \aaaa_j^\top \boldsymbol x)]}{1+\exp(d_j + \aaaa_j^\top \boldsymbol x)} \phi(\xx\mid \BB) d\xx\right),
%\end{aligned}
\end{equation}
where $\phi(\xx\mid \BB)$ is the density function for multivariate normal distribution $N(\mathbf 0, \BB\BB^\top )$. The $K$-dimensional integrals involved in \eqref{eq:IFA} cause a high computational burden for a relatively large $K$ (e.g., $K \geq 5$).

IFA models are commonly used for both exploratory and confirmatory analyses. In exploratory IFA, an important problem is to learn a sparse loading matrix $(a_{ij})_{J\times K}$ from data, which facilitates the interpretation of the factors.
One approach is by the $L_1$-regularized estimator \citep{sun2016latent} which takes the form
\begin{equation}\label{eq:eifa}
\begin{aligned}
  \hat \bbb =& \argmin_{\bbb \in \mathcal B} - l(\bbb) + R(\bbb),
%\\
%  s.t.~& b_{kk'} = 0, 1\leq k < k' \leq K, \mbox{~~and~~} \sum_{k' = 1}^K b_{kk'}^2 = 1, k = 1, ..., K,
\end{aligned}
\end{equation}
where the parameter space
$$\mathcal B = \{\bbb \in \mathbb R^{p}:  b_{kk'} = 0, 1\leq k < k' \leq K, \sum_{k' = 1}^K b_{kk'}^2 = 1, k = 1, ..., K\},$$
and the penalty term
\begin{equation}\label{eq:L1}
R(\bbb) = \lambda \sum_{j = 1}^J\sum_{k=1}^K \vert a_{jk}\vert.
\end{equation}
In $R(\bbb)$, $\lambda > 0$ is a  tuning parameter assumed to be fixed throughout this paper.
This regularized estimator resolves the rotational indeterminacy issue in exploratory IFA, as the $L_1$ penalty term is not rotational invariant.
Consequently, under mild regularity conditions, the loading matrix can be consistently estimated only up to a column swapping.
%\clearpage
% {The introduced $L_1$ regularization fix the loading matrix since the $L_1$-norm of the loading matrix varies uppon a rotation. So the loading matrix can be identified with respect to an arbitrary column permutation.}
Note that only the $\mathbf B$ matrix has constraints, as reflected by the parameter space $\mathcal B$. Here $b_{kk'} = 0$ is due to that $\mathbf B$ is a lower triangle matrix and $\sum_{k' = 1}^K b_{kk'}^2 = 1$ is due to that the diagonal terms of $\boldsymbol\Sigma = \BB\BB^\top$ are all 1. We remark that it is possible to replace the $L_1$ penalty in $R(\bbb)$ by
other penalty functions for imposing sparsity,
%Due to the $K$-dimensional integrals in $l(\bbb)$, the classical EM algorithm is computationally infeasible when $K$ is large. %\yc{Connect to the previous notation for g and h.}
%We point out that there are . For example, one may replace the $L_1$ penalty in $R(\bbb)$ by
such as the elastic net penalty \citep{zou2005regularization} %which is a combination of $L_1$ and $L_2$ terms
\begin{equation}\label{eq:els}
R(\bbb) =  \lambda_1 \sum_{j = 1}^J\sum_{k=1}^K a_{jk}^2 + \lambda_2 \sum_{j = 1}^J\sum_{k=1}^K \vert a_{jk}\vert,
\end{equation}
where $\lambda_1, \lambda_2 > 0$ are two tuning parameters.

%The penalty term $R(\bbb)$ may be replaced by other penalty functions

In confirmatory IFA, zero constraints are imposed on loading parameters, based on prior knowledge about the measurement design.
More precisely, these zero constraints can be coded by a binary matrix $\mathbf Q = (q_{jk})_{J\times K}$.  If $q_{jk} = 0$,  then item $j$ does not load on factor $k$ and $a_{jk}$ is set to 0. Otherwise, $a_{jk}$ is freely estimated. These constraints lead to parameter space
$\mathcal B = \{\bbb:  b_{kk'} = 0, 1\leq k < k' \leq K; \sum_{k' = 1}^K b_{kk'}^2 = 1, a_{jk} = 0 \mbox{~for~} q_{jk} = 0, j = 1, ...,J, k = 1, ..., K\}$.
The MML estimator for confirmatory IFA is then given by
\begin{equation}\label{eq:cifa}
\begin{aligned}
  \hat \bbb =& \argmin_{\bbb \in \mathcal B} - l(\bbb).
  %\\
  %s.t.~& b_{kk'} = 0, k' > k, \mbox{~~and~~} \sum_{k' = 1}^K b_{kk'}^2 = 1, k = 1, ..., K, \\
  %& a_{jk} = 0 \mbox{~for~} q_{jk} = 0, j = 1, ...,J, k = 1, ..., K.
\end{aligned}
\end{equation}

Besides parameter estimation, another problem of interest in confirmatory IFA is to make statistical inference, for which
%In that case,
it is required
%that often requires
to compute the asymptotic variance of $\hat \bbb$. The estimation of the asymptotic variance often requires to compute the Hessian matrix of $l(\bbb)$ at $\hat \bbb$, which also involves intractable $K$-dimensional integrals.
As we will see in Section~\ref{subsec:algo}, this Hessian matrix, as well as quantities taking a similar form, can be easily obtained as a by-product of the proposed algorithm.

%A confirmatory IFA model is

\subsection{Restricted Latent Class Model} \label{subsec:example2}

Our second example is restricted latent class  models which are also widely used in social and behavioral sciences. For example, they are commonly used in education for cognitive diagnosis \citep{von2019handbook}. These models differ from IFA models in that they assume discrete latent variables. Here, we consider a setting for cognitive diagnosis when both data and latent variables are binary.
%Again, consider $N$ individuals responding to $J$ binary-scored items.
Consider data taking the same form as that for IFA, denoted by $\YY = (Y_{ij})_{N\times J}$ and $\yy = (y_{ij})_{N\times J}$. In this context, $Y_{ij} = 1$ means that item $j$ is answered correctly and $Y_{ij} = 0$  means an incorrect answer.

% Let $Y_{ij}\in \{0, 1\}$ be a random variable denoting person $i$'s response to item $j$ and let $y_{ij}$ be its realization. Thus, , where $\YY$ and $\yy$ are the generic notations introduced in Section~\ref{subsec:problem} for our data.

The restricted latent class model assumes that each individual is characterized by
a $K$-dimensional latent vector $\boldsymbol \xi_i = (\xi_{i1}, ..., \xi_{iK})^\top$, $i = 1, ..., N$, where $\xi_{ik} \in \{0, 1\}$. Thus, the latent variables are $\boldsymbol\xi = (\xi_{ik})_{N\times K}$, whose state space $\Xi = \{0,1\}^{N\times K}$ contains all $N\times K$ binary matrices.
Each dimension of $\boldsymbol\xi_i$ represents a skill, and $\xi_{ik} = 1$ indicates that person $i$ has mastered the $k$th skill and $\xi_{ik} = 0$ otherwise.

The restricted latent class model can be parameterized as follows.
\begin{enumerate}
  \item The person-specific latent vectors $\boldsymbol\xi_i$, $i = 1, ..., N$, are i.i.d., following a categorical distribution satisfying
  $$\mathbb{P}(\boldsymbol\xi_i = \boldsymbol\alpha) = \frac{\exp(\nu_{\boldsymbol\alpha})}{\sum_{\boldsymbol\alpha' \in \{0, 1\}^K}\exp(\nu_{\boldsymbol\alpha'})},$$
  where  $\boldsymbol\alpha \in \{0, 1\}^K$ represents an attribute profile representing the mastery status on all $K$ attributes, and we set $\nu_{\boldsymbol\alpha'} = 0$ as the baseline, for $\boldsymbol\alpha' = (0, ..., 0)^\top$.
  \item $Y_{ij}$ given $\boldsymbol \xi_i$ follows a Bernoulli distribution, satisfying
  $$\mathbb{P}(Y_{ij} = 1\mid\boldsymbol \xi_i =  \boldsymbol\alpha) = \frac{\exp(\theta_{j, \boldsymbol\alpha})}{1+\exp(\theta_{j, \boldsymbol\alpha})}, ~\boldsymbol\alpha \in \{0, 1\}^K.$$
  \item Local independence is still assumed. That is,  $Y_{i1}$, ..., $Y_{iJ}$ are conditionally independent given $\boldsymbol\xi_i$.
\end{enumerate}
The above model specification leads to a marginal likelihood function
\begin{equation}\label{eq:dcmlik}
l(\bbb) = \sum_{i=1}^N \log \left( \sum_{\boldsymbol\alpha\in \{0,1\}^K} \frac{\exp(\nu_{\boldsymbol\alpha})}{\sum_{\boldsymbol\alpha' \in \{0, 1\}^K}\exp(\nu_{\boldsymbol\alpha'})} \prod_{j=1}^J \frac{\exp(y_{ij}\theta_{j, \boldsymbol\alpha})}{1+\exp(\theta_{j, \boldsymbol\alpha})}  \right),
\end{equation}
where $\bbb = \{\nu_{\boldsymbol\alpha}, \theta_{j, \boldsymbol\alpha}, \boldsymbol\alpha\in \{0, 1\}^K,  j = 1, ..., J\}$.

We consider a confirmatory setting where there exists a design matrix, similar to the $\mathbf Q$-matrix in confirmatory IFA. With slight abuse of notation, we still denote $\mathbf Q = (q_{jk})_{J\times K}$, where $q_{jk} \in \{0, 1\}$. Here, $q_{jk} = 1$ indicates that solving item $j$ requires the $k$th skill and $q_{jk} = 0$ otherwise. As will be explained below, this design matrix leads to equality and inequality constraints in model parameters.

Denote $\mathbf q_{j} = (q_{j1}, ..., q_{jK})^\top$ as the design vector for item $j$. For $\boldsymbol\alpha = (\alpha_1, ..., \alpha_K)^\top$, we write
$$\boldsymbol\alpha \succeq \mathbf q_{j}, \mbox{~if~} \alpha_k \geq  q_{jk} \mbox{~for all~} k \in \{1, ..., K\},$$
and write
$$\boldsymbol\alpha \nsucceq \mathbf q_{j}, \mbox{~if there exists $k$ such that~}  \alpha_k <  q_{jk}.$$
That is, $\boldsymbol\alpha \succeq \mathbf q_{j}$ if profile $\boldsymbol\alpha$ has all the skills needed for solving item $j$ and $\boldsymbol\alpha \nsucceq \mathbf q_{j}$ if not. The design information leads to the following constraints:
\begin{enumerate}
  \item $\mathbb{P}(Y_{ij} = 1\mid\boldsymbol \xi_i =  \boldsymbol\alpha) = \mathbb{P}(Y_{ij} = 1\mid\boldsymbol \xi_i =  \boldsymbol\alpha')$, if both
      $\boldsymbol\alpha, \boldsymbol\alpha' \succeq \mathbf q_{j}$. That is,  individuals who have mastered all the required skills have the same chance of answering the item correctly.
  \item $\mathbb{P}(Y_{ij} = 1\mid\boldsymbol \xi_i =  \boldsymbol\alpha) \geq \mathbb{P}(Y_{ij} = 1\mid\boldsymbol \xi_i =  \boldsymbol\alpha')$
  if $\boldsymbol\alpha\succeq \mathbf q_{j}$ and $\boldsymbol\alpha'\nsucceq \mathbf q_{j}$. That is, students who have mastered all the required skills have a higher chance of answering the item correctly than those who do not.
  \item $\mathbb{P}(Y_{ij} = 1\mid \boldsymbol\xi_i=\boldsymbol \alpha)\geq \mathbb{P}(Y_{ij}=1\mid \boldsymbol \xi_i=\boldsymbol 0)$ for all $\boldsymbol \alpha$. That is, students who have not mastered any skill have the lowest chance of answering correctly.
       %than those who have mastered some of the required skills.
  % $\boldsymbol\alpha \succeq \mathbf q_{j}$
\end{enumerate}
We refer the readers to \cite{xu2017identifiability} for more discussions on these constraints which are key to the identification of this model. Under these constraints, the MML estimator is given by
\begin{equation}\label{eq:dcm}
\begin{aligned}
  \hat \bbb =& \argmin_{\bbb \in \mathcal B} - l(\bbb), %\\
 % s.t.~& \max_{\boldsymbol\alpha\succeq \mathbf q_{j}} \theta_{j, \boldsymbol\alpha} = \min_{\boldsymbol\alpha\succeq \mathbf q_{j}} \theta_{j, \boldsymbol\alpha} \geq \theta_{j, \boldsymbol\alpha'}, \mbox{~for all~} \boldsymbol\alpha',\\
 % & p_{\boldsymbol\alpha} \geq 0, \mbox{~for all~} \boldsymbol\alpha, \mbox{~and~} \sum_{\boldsymbol\alpha \in \{0, 1\}^K} p_{\boldsymbol\alpha}=1.
%
 %  ~\boldsymbol\alpha \in \{0, 1\}^K
\end{aligned}
\end{equation}
where
$$
\begin{aligned}
\mathcal B =  \{\bbb:  \max_{\boldsymbol\alpha\succeq \mathbf q_{j}} \theta_{j, \boldsymbol\alpha} = \min_{\boldsymbol\alpha\succeq \mathbf q_{j}} \theta_{j, \boldsymbol\alpha} \geq \theta_{j, \boldsymbol\alpha'} \geq \theta_{j,\boldsymbol 0}, \mbox{~for all~} \boldsymbol\alpha', \nu_{\boldsymbol 0} = 0\}.
  \end{aligned}
  $$

When $K$ is relatively large, the computation for solving \eqref{eq:dcm} becomes challenging, due to both the summation over $2^K$ possible values of $\boldsymbol\alpha$ in $l(\bbb)$, and the large number of inequality constraints.

\section{Stochastic Proximal Algorithm}\label{sec:algorithm}

In this section, we propose a quasi-Newton stochastic proximal algorithm for the computation of \eqref{eq:obj}. The description in this section will focus on the computation aspect, without emphasizing the regularity conditions needed for its convergence.
A rigorous theoretical treatment will be given in Section~\ref{sec:theory}. In what follows, we describe the algorithm in its general form in Section~\ref{subsec:algo}, followed by details for two specific models in Sections~\ref{subsec:exmp1} and \ref{subsec:exmp2}, and finally comparisons with related algorithms in Section~\ref{subsec:connect}.

\subsection{General Algorithm}\label{subsec:algo}

For ease of exposition, we introduce some new notations. We write the penalty function as the sum of two terms,
$R(\bbb) = R_1(\bbb) + R_2(\bbb)$, where $R_1(\bbb)$ is a smooth function and $R_2(\bbb)$ is non-smooth. In the example of regularized estimation for  exploratory IFA,
$R_1(\bbb) = 0$ and $R_2(\bbb) = \lambda \sum_{j = 1}^J\sum_{k=1}^K \vert a_{jk}\vert$,
when $R(\bbb)$ is an $L_1$ penalty as in \eqref{eq:L1}.
When an elastic net penalty is used as in \eqref{eq:els},
%for the $L_1$-regularized estimator for exploratory IFA.
$R_1(\bbb) = \lambda_1 \sum_{j = 1}^J\sum_{k=1}^K  a_{jk}^2 $ and $R_2(\bbb) = \lambda_2 \sum_{j = 1}^J\sum_{k=1}^K \vert a_{jk}\vert$.

The optimization problem can be reexpressed as
\begin{equation}\label{eq:obj2}
\min_{\boldsymbol \beta}~  h(\bbb) + g(\bbb),
\end{equation}
where $h(\bbb) = -l(\boldsymbol \beta) + R_1(\boldsymbol\beta)$ and $g: \mathbb  R^{p} \rightarrow \mathbb R \cup \{+\infty\}$ is a generalized function taking the form $g(\bbb) = R_2(\bbb) + I_{\mathcal B}(\bbb)$, where
\begin{equation}\label{eq:h_fun}
I_{\mathcal B}(\bbb) = \left\{\begin{array}{ll}
              0, & \mbox{~if~} \bbb\in \mathcal B,\\
              +\infty, & \mbox{~otherwise.~}
            \end{array}\right.
\end{equation}
Note that since both $l(\bbb)$ and $R_1(\bbb)$ are smooth in $\bbb$,  $h(\bbb)$   is still smooth in $\bbb$. The second term
$g(\bbb)$
is non-smooth in $\bbb$, unless it is degenerate (i.e., $g(\bbb) \equiv 0$). We further write
\begin{equation}\label{eq:H_fun}
H(\boldsymbol\xi, \bbb) = - \log f(\yy, \boldsymbol\xi\mid \bbb) + R_1(\bbb),
\end{equation}
which can be viewed as a complete-data version of $h(\bbb)$ that will be used in the algorithm.
%Note that when $R_1(\bbb) = 0$, $-H(\boldsymbol\xi, \bbb)$ is nothing but the complete data log likelihood function.

%In addition, $R_1(\bbb) = \lambda_2 \sum_{j = 1}^J\sum_{k=1}^K  a_{jk}^2 $ and $R_2(\bbb) = \lambda_2 \sum_{j = 1}^J\sum_{k=1}^K \vert a_{jk}\vert$, if an elastic net penalty is used as in \eqref{}.

%\clearpage
%
%For ease of exposition, we reexpress the optimization problem~\eqref{eq:obj} in the form of
%$$\min_{\bbb} g(\bbb) + g(\bbb),$$
%where $g: \mathbb R^{p} \rightarrow \mathbb R$ is a smooth function and $h: \mathbb  R^{p} \rightarrow \mathbb R \cup \{\infty\}$ is a nonsmooth function. In particular, $g$ contains $-l(\bbb)$ and the smooth part of $R(\bbb)$, and $h$ absorbs the nonsmooth part of  $R(\bbb)$ (e.g., $L_1$ terms) and the constraints. Thus, $g(\bbb) < \infty$ if $\bbb \in \mathcal B$ and $g(\bbb) = \infty$ otherwise.
%For example, for the regularized estimator for exploratory IFA given discussed in Section~\ref{subsec:example1}, $g(\bbb) = - l(\bbb)$ and
%$$g(\bbb) = \lambda \sum_{j = 1}^J\sum_{k=1}^K \vert a_{jk}\vert + \delta_{\{b_{kk'} = 0, ~1\leq k < k' \leq K,~ \sum_{k' = 1}^K b_{kk'}^2 = 1, k = 1, ..., K\}}.$$

%\yc{The regularized EFA example.}

The algorithm relies on a scaled proximal operator \citep{lee2014proximal} for the $g$ function, defined as
$$\text{Prox}_{\gamma, g}^{\mathbf D}(\bbb) = \argmin_{\xx\in \mathbb R^{p}} \left\{g(\xx) + \frac{1}{2\gamma} \Vert\xx - \bbb\Vert^2_{\mathbf D} \right\},$$
where $\gamma>0$, $\mathbf D$ is a strictly positive definite matrix, and $\Vert \cdot\Vert_{\mathbf D}$ is a norm defined by $\Vert \xx\Vert_{\mathbf D}^2 = \langle \boldsymbol{x},\boldsymbol{x}\rangle_{\mathbf D}= \xx^\top \mathbf D \xx$.
The choices of $\gamma$, $\mathbf D$, and the intuition behind the proximal operator will be explained in the sequel.
%and the intuition behind the proximal operator will be explained in Remark~\ref{} right after

Our general algorithm is described in Algorithm~\ref{alg:alg1}, followed by implementation details. The proposed algorithm is an extension of a
perturbed proximal gradient algorithm \citep{atchade2017perturbed}. The major difference is that the proposed algorithm makes use of
second-order information from the smooth part of the objective function, which can substantially speed up its convergence. See Section~\ref{subsec:connect} for further comparison.

\begin{algorithm}[Stochastic Proximal Algorithm]\label{alg:alg1}~
\begin{itemize}
\item[] \textbf{Input:} Data $\yy$, initial parameters $\bbb^{(0)}$,  a sequence of step size $\gamma_s, s= 1, 2, ...$, pre-specified tuning parameters $c_{2} \geq c_{1} > 0$, and burn-in size $\varpi$.
%\footnote{As discussed below, if we need to sample $\boldsymbol\xi$ by MCMC in the stochastic step below,  initial samples $\boldsymbol\xi^{(0)}$ are also needed.}
\item[] \textbf{Update:} At $t$th iteration where $t \geq 1$, we perform the following two steps:
\begin{itemize}
  \item[1.]\textbf{Stochastic step:}  Sample $\boldsymbol\xi$ from the conditional distribution of $\boldsymbol\xi$ given $\yy$,
  $$\psi(\boldsymbol\xi) \propto f(\yy, \boldsymbol\xi\mid \bbb^{(t-1)}),$$
  and obtain $\boldsymbol\xi^{(t)}$. The sampling can be either exact or approximated by MCMC.

  %Either exact sampling or approximate sampling by  is fine.
  \item[2.]\textbf{Proximal step:}  Update model parameters by
  \begin{equation}\label{eq:prox}
  \bbb^{(t)} = \text{Prox}_{\gamma_t, g}^{\mathbf D^{(t)}}\big(\bbb^{(t-1)} -  \gamma_t(\mathbf D^{(t)})^{-1}\mathbf G^{(t)}\big),
  \end{equation}
  where
     \begin{gather*}
      \mathbf G^{(t)} = \frac{\partial H(\boldsymbol\xi^{(t)}, \bbb)}{\partial \bbb} \bigg\vert_{\bbb = \bbb^{(t-1)}}.
      % \\
      % \mathbf D^{(t)} = \min\left[\max\left[\frac{(t-1)}{t}\mathbf D^{(t-1)} + \frac{1}{t}\tilde{\mathbf D}^{(t)},\mathbf C_1\right],\mathbf C_2\right],\\
      % \tilde{\mathbf D}^{(t)} = \tilde{\mathbf D}^{(t-1)} + \gamma_t(\text{diag}(\delta_1^{(t)}, ..., \delta_p^{(t)})-\tilde{\mathbf D}^{(t-1)}).
      \end{gather*}
  $\mathbf D^{(t)}$ is a diagonal matrix with diagonal entries $$\delta_i^{(t)} = \frac{t-1}{t} \delta_i^{(t-1)} + \frac{1}{t}T\left(\tilde \delta_i^{(t)}; c_{1}, c_{2}\right),$$
%
%  \begin{equation}
%      d_i^{(t)} = \left\{\begin{array}{ll}
%                     c_{i2}, & \mbox{~if~}  \frac{t-1}{t}d_i^{(t-1)} + \frac{1}{t}\tilde{d}_i^{(t)} \geq c_{i2}, \\
%                    \frac{t-1}{t}d_i^{(t-1)} + \frac{1}{t}\tilde{d}_i^{(t)},     & \mbox{~if~}  \frac{t-1}{t}d_i^{(t-1)} + \frac{1}{t}\tilde{d}_i^{(t)}  \in (c_{i1}, c_{i2}), \\
%                     c_{i1}, & \mbox{~if~} \frac{t-1}{t}d_i^{(t-1)} + \frac{1}{t}\tilde{d}_i^{(t)} \leq c_{i1},
%                   \end{array}\right.
%      %\min\left[\max\left[\frac{t-1}{t}d_i^{(t-1)} + \frac{1}{t}\tilde{d}_i^{(t)},c_{i2}\right],c_{i1}\right],
%  \end{equation}
  where
  $T(x;c_1,c_2)$ is a truncation function defined as
  \begin{equation}\label{eq:trun}
  T(x;c_1,c_2) = \left\{\begin{array}{ll}
               c_1, & \mbox{~if~} x < c_1, \\
               x, & \mbox{~if~} x \in [c_1,c_2],\\
               c_2, & \mbox{~if~} x > c_2.
             \end{array}\right.
  \end{equation}
  %  Here $\bar \delta_i^{(t)}$ is defined iteratively by
  % $$\bar \delta_i^{(t)} = \frac{t-1}{t}\bar \delta_i^{(t-1)} + \frac{1}{t} \tilde \delta_{i}^{(t)},$$
  % where
  Here $\tilde\delta_i^{(t)} = \tilde\delta_{i,1}^{(t)} + (\tilde\delta_{i,2}^{(t)})^2,$ where
  \begin{gather*}
      \tilde \delta_{i,1}^{(t)} = (1-\gamma_t)\tilde \delta_{i,1}^{(t-1)} + \gamma_t \left(\frac{\partial^2 H(\boldsymbol\xi^{(t)}, \bbb)}{\partial \beta_i^2}\bigg\vert_{\bbb = \bbb^{(t-1)}} - \left(\frac{\partial H(\boldsymbol\xi^{(t)},\bbb)}{\partial\beta_i}\right)^2\bigg\vert_{\bbb=\bbb^{(t-1)}}\right),\\
      \tilde\delta_{i,2}^{(t)} = (1-\gamma_t)\tilde\delta_{i,2}^{(t-1)} + \gamma_t \left(\frac{\partial H(\boldsymbol\xi^{(t)},\bbb)}{\partial\beta_i}\right)\bigg\vert_{\bbb=\bbb^{(t-1)}}.
      % \\ \tilde\delta_{i,2}^{(t)} = (1-\gamma_t)\tilde\delta_{i,2}^{(t-1)} + \gamma_t\frac{\partial H(\boldsymbol\xi^{(t)},\bbb)}{\partial\beta_i}\vert_{\bbb=\bbb^{(t-1)}}.
  \end{gather*}

  %
%
%
%  \begin{equation}
%      \tilde{d}_i^{(t)} = (1-\gamma_t)\tilde{d}_i^{(t-1)} + \gamma_t\frac{\partial^2  H(\boldsymbol\xi^{(t)}, \bbb) }{\partial \beta_i^2} \bigg|_{\bbb = \bbb^{(t-1)}},
%  \end{equation}
%  and $\tilde{d}_i^{(0)}=1,\ i=1,\ldots,p.$
   % and $\text{diag}(\delta_1^{(t)}, ..., \delta_p^{(t)})$ is a $p\times p$ diagonal matrix with diagonal entries
  % \begin{equation}
  % \delta_{i}^{(t)} = \left\{\begin{array}{ll}
  %                   c_{i2}, & \mbox{~if~} \frac{\partial^2  H(\boldsymbol\xi^{(t)}, \bbb) }{\partial \beta_i^2} \vert_{\bbb = \bbb^{(t-1)}}  \geq c_{i2} \\
  %                   \frac{\partial^2  H(\boldsymbol\xi^{(t)}, \bbb) }{\partial \beta_i^2} \vert_{\bbb = \bbb^{(t-1)}},     & \mbox{~if~}  \frac{\partial^2  H(\boldsymbol\xi^{(t)}, \bbb) }{\partial \beta_i^2} \vert_{\bbb = \bbb^{(t-1)}}  \in (c_{i1}, c_{i2}) \\
  %                   c_{i1}, & \mbox{~if~} \frac{\partial^2  H(\boldsymbol\xi^{(t)}, \bbb) }{\partial \beta_i^2} \vert_{\bbb = \bbb^{(t-1)}} \leq c_{i1}
  %                 \end{array}\right.
  % \end{equation}

%  Update $$M^{(t)} = M^{(t-1)} + \gamma_tm(\yy,\boldsymbol\xi^{(t)}\vert \bbb^{(t-1)})-M^{(t-1)}.$$
  %\yc{Update for $\int_{\Xi} m(\yy, \boldsymbol\xi\vert \boldsymbol\beta)d \boldsymbol\xi$?}
\end{itemize}
Iteratively perform these two steps until a stopping criterion is satisfied and let $n$ be the last iteration number.
\item[] \textbf{Output:} $\bar \bbb_n = {\sum_{t=\varpi+1}^n \bbb^{(t)}}/(n-\varpi)$.
\end{itemize}

\end{algorithm}

In what follows, we make a few remarks to provide some intuitions about the algorithm.

\begin{remark}[Connection with stochastic gradient descent] \label{rmk:sgd}
To provide some intuition about the proposed method, we first make a connection between the proposed method and
the stochastic gradient descent (SGD) algorithm.
%The SGD method is a stochastic version of gradient descent method that replaces the actual gradient of the objective function by an unbiased estimate of it.
In fact, when the sampling of $\boldsymbol\xi$ is exact in the stochastic step, then $\mathbf G^{(t)}$ is a stochastic gradient of the smooth part of our objective function, in the sense that
$\mathbb{E}(\mathbf G^{(t)}\mid \yy, \bbb^{(t-1)}) = \nabla h(\bbb)\vert_{\bbb = \bbb^{(t-1)}}.$
If, in addition, there is no constraint or non-smooth penalty, i.e., $g(\bbb) \equiv 0$, then the proximal step degenerates to an SGD update
$\bbb^{(t)} = \bbb^{(t-1)} -  \gamma_t(\mathbf D^{(t)})^{-1}\mathbf G^{(t)}$. In that case, the proposed method becomes a version of SGD.

\end{remark}

\begin{remark}[Proximal step]\label{rmk:prox}
 We provide some intuitions about the proximal step.
We start with two special cases. First, as mentioned in Remark~\ref{rmk:sgd},
if there is no constraint or non-smooth penalty,  then the proximal step is nothing but a stochastic gradient descent step. This is because,
the scaled proximal operator degenerates to an identity map, i.e., $\text{Prox}_{\gamma, g}^{\mathbf D}(\bbb) = \bbb$.
Second, when the $g$ function involves constraints but does not contain a non-smooth penalty, then the proximal step is a projected stochastic gradient descent step. That is, one first performs a stochastic gradient descent update $\tilde \bbb^{(t)} = \bbb^{(t-1)} -  \gamma_t(\mathbf D^{(t)})^{-1}\mathbf G^{(t)}$. Then $\tilde \bbb^{(t)}$ is projected back to the feasible region $\mathcal B$ by the scaled proximal operator:
$$\hat \bbb = \argmin_{\bbb\in \mathcal B} \Vert \bbb - \tilde \bbb^{(t)}\Vert_{\mathbf D},$$
which is a projection under the norm $\Vert\cdot \Vert_{\mathbf D}$. When $\mathbf D$ is an identity matrix as in the vanilla (i.e., non-scaled) proximal operator, then the projection is based on the Euclidian distance.

More generally, when the $g$ function involves non-smooth penalties, then the proximal step can be viewed as minimizing the sum of $g(\bbb)$ and a quadratic approximation of $h(\bbb)$ at $\bbb^{(t-1)}$; see \cite{lee2014proximal} for more explanations. We provide an example to facilitate the understanding. Suppose that
$$g(\bbb) = \lambda \sum_{i=1}^p \vert \beta_i\vert$$
is the Lasso penalty, and $\mathbf D = diag(\delta_1, ..., \delta_p)$ is a diagonal matrix, where $\lambda, \delta_i > 0$, $i = 1, ..., p$.
% and
Then $\text{Prox}_{\gamma, g}^{\mathbf D}(\tilde \bbb^{(t)})$ involves solving $p$ optimization problems separately, each of which takes the form
\begin{equation}\label{eq:soft}
\hat \beta_i = \argmin_{x} \frac{1}{2} (x - \tilde \beta_i^{(t)})^2 + \frac{\lambda\gamma}{\delta_i} |x|.
\end{equation}
It is well known that \eqref{eq:soft} has a closed-form solution given by soft-thresholding \citep[see Chapter 3,][]{friedman2001elements}:
$$\hat \beta_i = \left\{\begin{array}{ll}
                   \tilde \beta_i^{(t)} -  \frac{\lambda\gamma}{\delta_i}, & \mbox{~if~} \tilde \beta_i^{(t)} >  \frac{\lambda\gamma}{\delta_i}, \\
                   \tilde \beta_i^{(t)} + \frac{\lambda\gamma}{\delta_i}, & \mbox{~if~} \tilde \beta_i^{(t)} < -  \frac{\lambda\gamma}{\delta_i}, \\
                   0, & \mbox{~otherwise}.
                 \end{array}\right.$$

\end{remark}

\begin{remark}[Role of $\mathbf D^{(t)}$]
Our proximal step is a quasi-Newton proximal update proposed in \cite{lee2014proximal} under a non-stochastic optimization setting. As shown in \cite{lee2014proximal}, quasi-Newton proximal methods converge faster than first-order proximal methods under the non-stochastic setting. Here, the diagonal matrix $\mathbf D^{(t)}$ is used to approximate the Hessian matrix of
$h(\bbb)$ at $\bbb^{(t)}$. When $\bbb^{(t)}$ converges to $\hat \bbb$, then
$\delta_i^{(t)}$,
the $i$th diagonal term of $\mathbf D^{(t)}$,
converges to $T\left(\frac{\partial^2 h}{\partial \beta_i^2}\vert_{\bbb = \hat \bbb}; c_{1}, c_{2}\right)$ where $T$ is the truncation function defined in \eqref{eq:trun}; see Remark~\ref{rmk:biprod} for more explanations.

In the proposed update, we choose $\mathbf D^{(t)}$ to be a diagonal matrix for computational convenience. Specifically, as discussed in Remark~\ref{rmk:prox}, the proximal step is in a closed form when $\mathbf D^{(t)}$ is a diagonal matrix. In addition, the proximal step requires to calculate the inverse of $\mathbf D^{(t)}$, whose complexity is much lower when $\mathbf D^{(t)}$ is diagonal.

We point out that using a diagonal matrix to approximate the Hessian matrix is a popular and effective trick in numerical optimization \citep[e.g., Chapter 5,][]{bertsekas1992data,becker1988improving}, especially for large-scale optimization problems.
In principle, it is possible to allow $\mathbf D^{(t)}$ to be non-diagonal. In fact, it is not difficult to generalize the BFGS updating formula for
 $\mathbf D^{(t)}$ given in \cite{lee2014proximal} to a stochastic version.

Our choice of $\mathbf D^{(t)}$ guarantees its eigenvalues
to be constrained in the interval $[c_{1}, c_{2}]$.
It rules out the singular situation when $\mathbf D^{(t)}$ is not strictly positive definite.
%As will be shown in Section~\ref{sec:theory}, these constraints guarantee the fast convergence of the proposed algorithm.
%It is worth pointing out that, however, these constraints are set mainly for theoretical purposes. Even without these constraints, the proposed algorithm tends to work well for many latent variable models.
In the implementation, we set $c_{1}$'s to be a sufficiently small constant and set $c_{2}$'s to be a sufficiently large constant.
%and $c_2$ are set to be sufficiently small and large constants, respectively.
According to simulation, the algorithm tends to be insensitive to these choices.
%However, as will be discussed in Section~\ref{subsec:exmp1}, in item factor analysis, we set $c_{i1} = c_{i2} = c$ for parameters in
%$\mathbf B$, which substantially simplifies the computation. \yc{Need to further think about it.}
% all values of $c_{i1}$ to be a sufficiently small constant and set all values of $c_{i2}$ to be a sufficiently large constant. The algorithm tends to be insensitive to their choices. However, as will be discussed in Section~\ref{subsec:exmp1}, in item factor analysis, we set $c_{i1} = c_{i2} = c$ for those correspond to parameters in the covariance matrix $\boldsymbol\Sigma$, which substantially simplifies the computation.

%\yc{Need to revise this paragraph as we set the constants to be parameter-specific. I set the two constants to be parameter specific, so that we do not use second-order information for the covariance matrix.}

\end{remark}

%This Hessian matrix, whose evaluation also involves an integral with respect to $\boldsymbol\xi$, , and is also key to the inference of regularized estimators \citep[e.g.,][]{ning2017general}.

We further provide some remarks regarding the implementation details.

\begin{remark}[Choices of step size] \label{rmk:step}
As will be shown in Section~\ref{sec:theory},
the convergence of the proposed method requires the step size to satisfy
$\sum_{t=1}^{\infty} \gamma_t = \infty$ and $\sum_{t=1}^{\infty} \gamma_t^2 < \infty$.
This requirement is also needed in the Robbins-Monro algorithm. Here, we choose the step size $\gamma_t = \mu t^{-\frac{1}{2} - \varepsilon}$ so that the above requirement is satisfied, where
$\mu$ is a positive constant and $\varepsilon$ is a small positive constant. As will be shown in Section~\ref{sec:theory}, with sufficiently small $\varepsilon$, $\bar \bbb_n$ is almost optimal in terms of its convergence speed. We point out that $\varepsilon$ is needed to prove the convergence of $\bar \bbb_n$, under our non-convex setting. It is not needed, if the objective function~\eqref{eq:obj} is convex; see \cite{atchade2017perturbed}. The requirement of $\varepsilon$ may be an artifact due to our proof strategy. Simulation results show that the algorithm converges well even if we set $\varepsilon = 0$.
For the numerical analysis in this paper, we set $\varepsilon = 10^{-2}$.

We point out that our choice of step size
is very different from the step size in the Robbins-Monro algorithm, for which asymptotic results \citep{fabian1968asymptotic} suggest that the optimal choice of step size satisfies
$\gamma_t = O(1/t)$.

%for the convergence of

%\clearpage
%As will be shown in Section~\ref{sec:theory},
%this choice of $\gamma_t$ leads to the fastest convergence.

\end{remark}

\begin{remark}[Starting point] As the objective function~\eqref{eq:obj} is typically non-convex for most latent variable models, the choice of the starting point $\bbb^{(0)}$ matters. The algorithm is more likely to converge to the global optimum given
a good starting point. One strategy is to run the proposed algorithm with multiple random starting points and then choose the best-fitting solution. Alternatively, one may find a good starting point using less accurate but computationally faster estimators, such as the
constrained joint maximum likelihood estimator \citep{chen2018joint,chen2019structured} or spectral methods \citep{zhang2019note}.
 {Moreover, to further avoid convergence  to local optima, one may also use multiple random starting points and choose the one with the smallest objective function value.}

\end{remark}

\begin{remark}[Sampling in stochastic step]
As mentioned in Remark~\ref{rmk:sgd}, when the latent variables $\boldsymbol\xi$ can be sampled exactly in the stochastic step, then
$\mathbf G^{(t)}$ is a stochastic gradient of $h(\bbb)$.
Unfortunately, exact sampling is only possible under some situations such as restricted latent class analysis. In most cases,
we only have
approximate samples from an MCMC algorithm. For example,
as discussed below, the latent variables in IFA can be sampled by a block-wise Gibbs sampler. With approximate samples, $\mathbf G^{(t)}$ is only approximately unbiased.
As we show in Section~\ref{sec:theory}, such $\mathbf G^{(t)}$ may still yield convergence of $\bar \bbb_n$.

%, though the theoretical analysis becomes more complex.

%this bias leads to further complexity in analyzing the algorithm.
%~ \yc{Exact and MCMC sampling. Point out $G^{(t)}$ as stochastic gradient. Discuss some sampling tools, such as adaptive rejection, slice sampling, MH, Gibbs, etc. Discuss the sampling for IFA (Gibbs; we use the probit model) and LCA (trivial). }
\end{remark}

\begin{remark}[Stopping criterion]\label{stopping_cre} In the implementation of Algorithm~\ref{alg:alg1}, we stop the iterative update by monitoring
a window of successive differences in $\bbb^{(t)}$. More precisely, we stop the iteration if all differences in the
window are less than a given threshold. Unless otherwise stated, the numerical analysis in this paper uses a window size 3.  {The same stopping criterion is also adopted by the Metroplis-Hasting Robins-Monro algorithm proposed by \cite{cai2010high}.}

% and threshold $10^{-3}$.
%which is set to equal 1.0×10−4 in the computer
%applications reported in this paper. The window size is set to 3

%~\yc{When do we stop? Just describe the criterion you use for stopping.}

\end{remark}

Finally, as we explain in Remark~\ref{rmk:biprod}, certain quantities, including the Hessian matrix of $l(\bbb)$, can be obtained as a by-product of the proposed algorithm.

\begin{remark}[By-product]\label{rmk:biprod}
It is often of interest to compute quantities of the form
\begin{equation}\label{eq:info}
\hat M = \mathbb{E}\left[m(\yy, \boldsymbol\xi\mid \bbb)\mid \yy, \bbb\right]\big\vert_{\bbb = \hat \bbb},
\end{equation}
where $m(\yy, \boldsymbol\xi\mid \bbb)$ is a given function with an analytic form and the conditional expectation $\mathbb{E}\left[\cdot\mid \yy, \bbb\right]$ is with respect to the conditional distribution of $\boldsymbol\xi$ given $\yy$.
The quantity \eqref{eq:info}
is intractable due to the high-dimensional integral with respect to $\boldsymbol\xi$.
%Quantities of this form often need to be computed in making statistic inference, such as in computing statistics for Wald and score tests.
One such example is the Hessian matrix
of $l(\boldsymbol \beta)$ at $\hat \bbb$ as discussed in Section~\ref{subsec:example1} that is a key quantity for the statistical inference of
$\hat \bbb$. In fact, by Louis' formula \citep{louis1982finding},
 $$
\begin{aligned}
\frac{\partial^{2} l(\boldsymbol{\beta})}{\partial \boldsymbol{\beta} \partial \boldsymbol{\beta}^{\top}}=& \mathbb{E}\left[ \left.\frac{\partial^{2} (\log f(\yy, \boldsymbol\xi\mid \boldsymbol\beta))}{\partial \bbb \partial \bbb^{\top}} + \frac{\partial (\log f(\yy, \boldsymbol\xi\mid \boldsymbol\beta))}{\partial \bbb}\left[\frac{\partial (\log f(\yy, \boldsymbol\xi\mid \boldsymbol\beta))}{\partial \bbb}\right]^{\top} \right\vert \yy, \bbb\right]\\
& - \mathbb{E}\left[\left. \frac{\partial (\log f(\yy, \boldsymbol\xi\mid \boldsymbol\beta))}{\partial \bbb} \right\vert \yy, \bbb \right] \left(\mathbb{E}\left[\left. \frac{\partial (\log f(\yy, \boldsymbol\xi\mid \boldsymbol\beta)))}{\partial \bbb} \right\vert \yy, \bbb \right]\right)^\top.
\end{aligned}
$$
% \yc{Check here. We probably don't need the third term at $\bbb = \hat \bbb$. }
%where $$\mathbf{H}(\boldsymbol{\beta} | \yy,\boldsymbol\xi) = \frac{\partial^{2} (\log f(\yy, \boldsymbol\xi\vert \boldsymbol\beta))}{\partial \bbb \partial \bbb^{\top}} \mbox{~~and~~} \mathbf{s}(\boldsymbol{\beta} | \yy,\boldsymbol\xi) = \frac{\partial (\log f(\yy, \boldsymbol\xi\vert \boldsymbol\beta))}{\partial \bbb}.$$
%Thus, the Hessian matrix  of $l(\boldsymbol \beta)$ at $\hat \bbb$
%can be obtained by computing $$\left[\int_{\Xi} \left( \frac{\partial^{2} (\log f(\yy, \boldsymbol\xi\vert \boldsymbol\beta))}{\partial \bbb \partial \bbb^{\top}} - \frac{\partial (\log f(\yy, \boldsymbol\xi\vert \boldsymbol\beta))}{\partial \bbb}\left[\frac{\partial (\log f(\yy, \boldsymbol\xi\vert \boldsymbol\beta))}{\partial \bbb}\right]^{\top}\right) d\boldsymbol{\xi}\right]\bigg\vert_{\bbb = \hat \bbb}$$ and $$ \left[\int_{\Xi} \frac{\partial (\log f(\yy, \boldsymbol\xi\vert \boldsymbol\beta))}{\partial \bbb} d\boldsymbol{\xi}\right]\bigg\vert_{\bbb = \hat \bbb},$$
%respectively, both of which take the form of \eqref{eq:info}.

% when $\hat \bbb$ is an MML or CML estimator \citep[e.g.,][]{varin2011overview}.

The computation of \eqref{eq:info} is a straightforward by-product of the proposed algorithm. To approximate $\hat M$, we only need to add the following update in each iteration
\begin{equation}
M^{(t)} = M^{(t-1)} + \gamma_t\big(m(\yy,\boldsymbol\xi^{(t)}\mid \bbb^{(t)})-M^{(t-1)}\big),
\end{equation}
for $t \geq 2$, where $M^{(1)} = m(\yy,\boldsymbol\xi^{(1)}\mid \bbb^{(1)})$. We approximate $\hat M$ by the Polyak-Ruppert averaging
$\bar M_n = (\sum_{t=\varpi + 1}^n M^{(t)})/(n-\varpi)$. When the sequence $\bbb^{(t)}$ converges to $\hat \bbb$ (see Theorem~\ref{thm:thm1} for the convergence analysis), under mild conditions, Theorem~3.17 of \cite{benveniste1990adaptive} suggests the convergence of $M^{(n)}$ to $\hat M$ with probability 1, which further implies the convergence of $\bar M_n$ to $\hat M$. %Note that $\bar M_n$ is a Polyak-Ruppert averaging of the sequence $\{M^{(t)}\}$.
Note that we use the averaged estimator $\bar M_n$ as it tends to converge faster than the pre-average  sequence $M^{(n)}$.
We point out that the updating rule for the diagonal matrix $\mathbf D^{(t)}$ in Algorithm~\ref{alg:alg1} makes use of such an averaged estimator.

\begin{remark}[Burn-in size]

Like MCMC algorithms, the proposed method also has a burn-in period, where parameter updates from that period are not used in the Polyak-Ruppert averaging. The choice of the burn-in size will not affect the asymptotic property of the method, but does affect the empirical performance. This is because, the parameter updates may be far away from the solution due to the effect of the starting point. Including them in the Polyak-Ruppert averaging may introduce a high bias.
In our numerical analysis, the burn-in size $\varpi$ is fixed to be sufficiently large in each of our examples.
Adaptive choice of the burn-in size is possible; see \cite{zhang2020improved}.

\end{remark}

\end{remark}

\subsection{Example I: Item Factor Analysis}\label{subsec:exmp1}

We now explain the details of using the proposed method to solve \eqref{eq:eifa} for exploratory IFA. The computation is similar when replacing the $L_1$ regularization by the elastic net regularization.
For confirmatory IFA,
the stochastic step is the same as that of exploratory IFA and the proximal update step is straightforward as no penalty is involved.
Therefore, the details for the computation of confirmatory IFA are omitted here.

We first consider the stochastic step for solving \eqref{eq:eifa}. Note that $\boldsymbol\xi_1$, ..., $\boldsymbol\xi_N$ are conditionally independent given data, and thus can be sampled separately.
For each $\boldsymbol\xi_i$, we sample its entries by Gibbs sampling. More precisely, each entry is sampled by
adaptive rejection sampling \citep{gilksAdaptiveRejectionSampling1992,zhang2020improved},
as the conditional distribution of $\xi_{ik}$ given data and the other entries of $\boldsymbol\xi_i$ is log-concave. We refer the readers to \cite{zhang2020improved} for more explanations of this sampling procedure.
If a normal ogive IFA is considered instead of the logistic model above, then we can sample $\boldsymbol\xi^{(t)}_i$ by a similar Gibbs method with a data augmentation trick; see \cite{chen2020estimation} for a review.

We now discuss the computation for the proximal step. Recall that
$\bbb = \{\mathbf B, d_j, \aaaa_j, j =1, ..., J\}$. We denote $$\tilde \bbb^{(t)} = \bbb^{(t-1)} -  \gamma_t(\mathbf D^{(t)})^{-1}\mathbf G^{(t)}$$ as the input of the scaled proximal operator. The parameter update is given by
$$\bbb^{(t)} = \text{Prox}_{\gamma, g}^{\mathbf D}(\tilde \bbb^{(t)}) = \argmin_{\bbb} \left\{g(\bbb) + \frac{1}{2\gamma_t} \sum_{i=1}^p \delta_i^{(t)}(\beta_i - \tilde\beta_i )^2\right\},$$
where the parameter space $$\mathcal B = \{\bbb \in \mathbb R^{p}:  b_{kk'} = 0, 1\leq k < k' \leq K, \sum_{k' = 1}^K b_{kk'}^2 = 1, k = 1, ..., K\},$$
and
$g(\bbb) = \lambda \sum_{j = 1}^J\sum_{k=1}^K \vert a_{jk}\vert + I_{\mathcal{B}}(\bbb)$
only involves loading parameters $a_{jk}$ and parameters $\mathbf B$ for the covariance matrix.

We first look at the update for $d_j$s. As the $g$ function does not involve $d_j$, its update is simply
$d_j^{(t)} =  \tilde d_j^{(t)}$, where $\tilde d_j^{(t)}$ is the corresponding component in $\tilde \bbb^{(t)}$. We then look at the update for the loading parameters $a_{jk}$. Suppose that $a_{jk}$ corresponds to the $i_{a_{jk}}$th component of $\bbb$. Then the update is given by solving the optimization
$$a_{jk}^{(t)} = \argmin_{a_{jk}}~ \lambda |a_{jk}| + \frac{1}{2\gamma_t} \delta_{i_{a_{jk}}}^{(t)} (a_{jk} - \tilde a_{jk}^{(t)})^2.$$
As discussed in Remark~\ref{rmk:prox}, this optimization has a closed-form solution via soft-thresholding.
We finally look at the update for $\mathbf B$.
Suppose that $b_{kl}$ corresponds to the $i_{b_{kl}}$th component of $\bbb$.
Then the update of  $\bb_k$,
the $k$th row of $\mathbf B$, is given by solving the following optimization problem:
$$\bb_k^{(t)} = \argmin_{\mathbf{b}_k: \Vert\mathbf{b}_k\Vert=1, b_{kk'} = 0, k' > k} \left\{ \sum_{l=1}^K \delta_{i_{b_{kl}}}^{(t)} (b_{kl} - \tilde b_{kl}^{(t)})^2 \right\},$$
which can be easily solved by the method of Lagrangian multiplier.

\subsection{Example II: Restricted LCA}\label{subsec:exmp2}

We now provide a brief discussion on the computation for the restricted LCA model. First, the stochastic step is straightforward, as the posterior distribution for each $\boldsymbol\xi_i$ is still a categorical distribution which can be sampled exactly. Second, the proximal step requires to solve a quadratic programming problem. Again, we denote
$$\tilde \bbb^{(t)} = \bbb^{(t-1)} -  \gamma_t(\mathbf D^{(t)})^{-1}\mathbf G^{(t)}.$$
The proximal step requires to solve the following quadratic programming problem
\begin{equation}\label{eq:dcm2}
\begin{aligned}
& \min_{\bbb} (\bbb - \tilde \bbb^{(t)})^\top \mathbf D^{(t)} (\bbb - \tilde \bbb^{(t)}), \\
  s.t.~& \max_{\boldsymbol\alpha\succeq \mathbf q_{j}} \theta_{j, \boldsymbol\alpha} = \min_{\boldsymbol\alpha\succeq \mathbf q_{j}} \theta_{j, \boldsymbol\alpha} \geq \theta_{j, \boldsymbol\alpha'}\geq \theta_{j,\boldsymbol 0}, \mbox{~for all~} \boldsymbol\alpha'\nsucceq \boldsymbol q_j,\\
    &  \mbox{~and~} \nu_{\mathbf 0} = 0.
\end{aligned}
\end{equation}
Quadratic programming is the most studied nonlinear convex optimization problem \citep[Chapter 4,][]{boyd2004convex} and many efficient solvers exist. In our simulation study in Section~\ref{subsec:simlca}, we use the dual method of \cite{goldfarb1983numerically} implemented in the R package \textsf{quadprog} \citep{quadprog}.

%\yc{XXX method is used to solve it.}

\subsection{Comparison with Related Algorithms} \label{subsec:connect}

We compare Algorithm~\ref{alg:alg1} with several related algorithms in more details.

\paragraph{Robbins-Monro SA and variants.} The proposed method is closely related to the stochastic approximation approach first proposed in \cite{robbins1951stochastic}, and its variants given in \cite{gu1998stochastic} and \cite{cai2010high} that are specially designed for latent variable model estimation. Note that
the Robbins-Monro method is the first SGD method with convergence guarantee.
Both the methods of \cite{gu1998stochastic} and \cite{cai2010high} approximate the original Robbins-Monro method by using MCMC sampling to generate an approximate stochastic gradient in each iteration, when an unbiased stochastic gradient is difficult to obtain.
All these methods do not handle complex constraints or non-smooth objective functions.

When there is no constraint or penalty on parameters (i.e., $g(\bbb) \equiv 0$), the proximal operator degenerates to an identity map.
% and thus
%the proximal step in Algorithm~\ref{alg:alg1} becomes a stochastic gradient descent step.
In this case, the proposed method is essentially the same as \cite{gu1998stochastic} and \cite{cai2010high}, except for the sampling method in the stochastic step, the way the Hessian matrix is approximated, the specific choices of step size, and the averaging in the last step of the proposed method. Among these differences, the step size and the trajectory averaging are key to the advantage of the proposed method.

As pointed out in Remark~\ref{rmk:step}, the Robbins-Monro procedure  has the same general requirement on the step size as the proposed method. Specially, the Robbins-Monro procedure, as well as its MCMC variants \citep{gu1998stochastic,cai2010high}, typically let the step size $\gamma_t$ decay in the order $1/t$ as suggested by asymptotic theory \citep{fabian1968asymptotic}. However, this step is often too short at the early stage of the algorithm, resulting in poor performance in practice \citep[Section 4.5.3.,][]{spall2003introduction}.
On the other hand, the proposed method adopts a longer step size. By further adopting Polyak-Ruppert averaging \citep{ruppert1988efficient,polyak1992acceleration}, we show in Section~\ref{sec:theory} that the proposed method almost achieve the optimal convergence speed.

%\clearpage
%
%
%The averaging step adopted in the proposed method is known as Polyak-Ruppert averaging \citep{ruppert1988efficient,polyak1992acceleration} that is key to the fast convergence of the proposed method.
%Even for the optimization with a smooth objective function, stochastic approximation with Polyak-Ruppert averaging tends to perform better than the Robbins-Monro procedure. This is because, achieving the optimal convergence of the Robbins-Monro procedure requires the step size to decay in the order $1/t$ and also requires to know
%the Hessian matrix of the objective function at the optimum, where the Hessian matrix is typically unknown in most situations and is hard to estimated with the $1/t$ step size.
%In contrast,
%by Polyak-Ruppert averaging and a longer step size, optimal convergence can be achieved without knowing
%the Hessian matrix. We remark that the methods of \cite{gu1998stochastic} and \cite{cai2010high} share the same slow convergence issue as in the original Robbins-Monro procedure. \yc{Need to revise.}

\paragraph{Perturbed proximal gradient algorithm.} Proximal gradient descent algorithm \citep{parikh2014proximal}
is a non-stochastic algorithm for solving nonsmooth and/or constrained optimization algorithms.
For example, the widely used gradient projection algorithm for oblique rotation in factor analysis \citep{jennrich2002simple} is a special case. The vanilla proximal gradient descent algorithm does not use the second-order information of the objective function and thus sometimes converges slowly. To improve convergence speed, proximal Newton-type methods have been proposed in
\cite{lee2014proximal} that utilize the second-order information of the smooth part of the objective function.

The perturbed proximal gradient algorithm \citep{atchade2017perturbed} solves a similar optimization problem as in \eqref{eq:obj}
by combining the methods of stochastic approximation, proximal gradient decent, and Polyak-Ruppert averaging.
The proposed method extends \cite{atchade2017perturbed} by adopting a Newton-type proximal update suggested in \cite{lee2014proximal}.
The method of
\cite{atchade2017perturbed} can be viewed as a special case of the proposed one with $c_{1} = c_{2}$.
As shown by simulation study in the sequel, thanks to the second-order information, the proposed method converges much faster than that of
\cite{atchade2017perturbed}. We also point out that the theoretical analysis of \cite{atchade2017perturbed} focuses on convex optimization, while in Section~\ref{sec:theory} we consider a more general setting of non-convex optimization that includes a wide range of latent variable model estimation problems as special cases.

\paragraph{Stochastic EM algorithm.} The proposed method is also closely related to the stochastic-EM algorithm \citep{celeux1985sem,ip2002single,nielsen2000stochastic,zhang2020improved}.
The stochastic-EM algorithm is a similar iterative algorithm, consisting of a stochastic step and a maximization step in each iteration, where the stochastic step is the same as that in the proposed algorithm.
The maximization step plays a similar role as the proximal step in the proposed algorithm. More precisely, when there is no constraint or penalty, the maximization step of the stochastic-EM algorithm obtains parameter update $\bbb^{(t)}$  by minimizing the negative complete data log-likelihood function $-\log f(\yy, \boldsymbol\xi^{(t)}\mid \bbb)$, instead of
a stochastic gradient update.
% moving one step along the direction of the stochastic gradient.\footnote{This step is named maximization step, because equivalently it maximizes the complete data log-likelihood function.}
It is also recommended to perform a trajectory averaging in the stochastic-EM algorithm \citep{nielsen2000stochastic,zhang2020improved}, like the last step of the proposed algorithm. As pointed out in \cite{zhang2020improved}, the stochastic EM algorithm
can potentially handle constraints and non-smooth penalties on parameters by incorporating them into the maximization step.

The stochastic-EM algorithm  is typically not as fast as the proposed method, which is revealed by simulation studies below. This is because, it requires to solve an  optimization problem completely in each iteration, which is time consuming, especially when constraints and non-smooth penalties are involved.
On the other hand,
the proximal step of the proposed algorithm can often be efficiently performed.

%often has closed form solution that can be easily obtained.

\section{Theoretical Properties} \label{sec:theory}

In what follows, we establish the asymptotic properties of the proposed algorithm{, under suitable technical conditions. For readers who are not interested in the asymptotic theory, this section can be skipped without affecting the reading of the rest of the paper}. Note that in this section, we view data as fixed and the randomness comes from sampling of the latent variable. The following expectation is taken with respect to latent variable $\boldsymbol\xi$ given data $\boldsymbol y$ and parameters $\bbb,$ denoted by $\mathbb{E}(\cdot\mid \bbb)=\int \cdot \pi_{\bbb}(\boldsymbol\xi)d\boldsymbol\xi$, where $\pi_\bbb$ is the posterior distribution for $\boldsymbol\xi$ given $\boldsymbol y$ and $\boldsymbol \bbb.$ Let $\Vert\cdot\Vert$ denote the vector $l_2$-norm. Following the typical convergence analysis of non-convex optimization \citep[e.g., Chapter 3,][]{floudas1995nonlinear}, we will first discuss the convergence of the sequence $\bbb^{(t)}$ to a stationary point of the objective function $h(\bbb) + g(\bbb)$ in Theorem~\ref{thm:thm1}, which follows the theoretical development in \cite{duchiStochasticMethodsComposite2018}.
Then with some additional assumptions on the local geometry of the objective function at the stationary point being converged to, we will show the convergence rate of the Polyak-Ruppert averaged sequence $\bar \bbb_n$ in Theorem~\ref{thm:thm2} which extends the results of \cite{atchade2017perturbed} to the setting of non-convex optimization.

% \yc{Some requirement on $h$ and $g$. Also the definition of Fr\'{e}chet subdifferential \citep[Chap. 8.B][]{rockafellar1998variational}. See the Duchi and Ruan paper. }

For a function $f:\mathbb{R}^d\mapsto \mathbb{R}\cup +\infty,$ denote the Fr\'{e}chet subdifferential \citep[Chapter 8.B][]{rockafellar1998variational} of $f$ at the point $\boldsymbol x$ by $\partial f(\boldsymbol x),$
\[
  \partial f(\boldsymbol x)=\left\{\boldsymbol z \in \mathbb{R}^{p}: f(\boldsymbol y) \geq f(\boldsymbol x)+ \boldsymbol z^\top(\boldsymbol y-\boldsymbol x)+o(\|\boldsymbol y-\boldsymbol x\|) \text { as } \boldsymbol y \rightarrow \boldsymbol x\right\}.
\]
Define the set of stationary points of the objective function as
$$\mathcal B^* = \{\bbb \in \mathcal B: \exists\ \mathbf x \in \partial h(\bbb) + \partial g(\bbb) \mbox{~with~} \xx^\top (\yy -\bbb) \geq 0, \mbox{~for all~} \yy \in \mathcal B\}.$$
Note that the global minimum $\hat \bbb$ is a stationary point, i.e., $\hat \bbb \in \mathcal B^*$. In addition, when the objective function is smooth, i.e. $g(\bbb) \equiv 0$, then
$\mathcal B^* = \{\bbb \in \mathcal B: \nabla h(\bbb) = 0 \},$
which is the standard definition of stationary points set for a smooth function.
%Note that for marginal maximum likelihood estimation with a smooth objective function, any point in $\mathcal B^*$ solves the corresponding score equation.
%\clearpage
%smooth in $\mathcal B$, i.e., $g(\bbb) = 0$, for all $\bbb \in \mathcal B$, then the definition of  stationary points becomes the standard definition.
% For functions $h$ and $g$ satisfying assumptions below, we have Theorem~\ref{thm:thm1} that guarantees the convergence of the objective function and the sequence $\bbb^{(t)}$.

The following assumptions are assumed for our objective function.

\begin{itemize}
    \item[H1.] $\mathcal{B}$ is compact and contains finite stationary points. For stationary points $\bbb,\bbb^\prime\in\mathcal{B}^*,$ $h(\bbb)+ g(\bbb)= h(\bbb^\prime)+g(\bbb^\prime)$ if and only if $\bbb = \bbb^\prime.$
    % \item[H2.] For $\bbb\in\mathcal{B},$ and $\boldsymbol\xi\in\Xi,$
    % \begin{equation*}
    %      \nabla h(\bbb) = \mathbb{E}[\nabla H(\boldsymbol\xi,\bbb)].
    % \end{equation*}
    \item[H2.] $H(\boldsymbol\xi,\bbb)$ is a differentiable function with resepect to $\bbb$ for given $\boldsymbol\xi$ and let $\mathbf G_{\bbb}(\boldsymbol \xi)= \partial H(\boldsymbol\xi,\bbb)/\partial \bbb.$ Define function $M_\epsilon$: $\Theta\times\Xi\rightarrow \mathbb{R}_+$ as
    \[
    M_\epsilon(\bbb;\boldsymbol\xi) = \sup_{\bbb'\in\mathcal{B},\Vert\bbb'-\bbb\Vert<\epsilon}\Vert\mathbf G_{\bbb'}(\boldsymbol \xi)\Vert.
    \]
    There exists $\epsilon_0>0$ such that for all $0<\epsilon<\epsilon_0,$
    \[
    \mathbb{E}[M_\epsilon(\bbb;\boldsymbol\xi)^2\mid \bbb]<\infty \text{ for all }\bbb\in\mathcal{B}.
    \]
    \item[H3.] There exists $\epsilon_0>0$ such that for all $\bbb^\prime\in\mathcal{B},$ there exists $\lambda(\boldsymbol\xi,\bbb^\prime)\geq 0$ such that
    \[
    \bbb \mapsto H(\boldsymbol\xi,\bbb) + \frac{\lambda(\boldsymbol\xi,\bbb^\prime)}{2}\Vert\bbb-\bbb_0\Vert^2
    \]
    is convex on the set $\{\bbb: \Vert\bbb-\bbb^\prime\Vert\leq\epsilon_0\}$ for any $\bbb_0,$ and $\mathbb{E}[\lambda(\boldsymbol\xi,\bbb^\prime)\mid \boldsymbol\bbb]<\infty.$
    \item[H4.] The stochastic gradient $\mathbf G_{\bbb^{(t-1)}}(\boldsymbol\xi^{(t)})$
    % \begin{equation*}
    %     \mathbf G^{(t)}  = \mathbf G_{\bbb^{(t-1)}}(\boldsymbol \xi^{(t)})= \frac{\partial H(\boldsymbol\xi^{(t)},\bbb)}{\partial \bbb} \bigg\vert_{\bbb = \bbb^{(t-1)}},
    % \end{equation*}
    is a Monte Carlo approximation of $\nabla h(\bbb^{(t-1)}).$ That is,
    if computationally feasible, we take $\boldsymbol\xi^{(t)}$ as an exact sample from $\pi_{\bbb^{(t-1)}}$, where, as defined earlier, $\pi_{\bbb^{(t-1)}}$ is the posterior distribution of $\boldsymbol\xi$ given $\boldsymbol y$ and $\bbb^{(t-1)}$. If not, we sample $\boldsymbol\xi^{(t)}$ from
    % where $\boldsymbol \xi^{(t)}$ is sampled directly from the distribution $\pi_{\bbb}$ when applicable. Otherwise, they are generated from a Markov chain with transition kernel $P_{\bbb^{(t-1)}}$. For all $\bbb\in \mathcal{B},$ $P_{\bbb}$ is
    a Markov kernel $P_{\bbb^{(t-1)}}$ with invariant distribution $\pi_{\bbb^{(t-1)}}$.
    \item[H5.] Define
\begin{equation}
    \begin{aligned}
        &\bbb_{\gamma}^+(\boldsymbol\xi) = \underset{\boldsymbol x \in \mathcal{B}}{\operatorname{argmin}}\left\{[\boldsymbol G_{\bbb}(\boldsymbol\xi)]^\top(\boldsymbol x-\bbb)+\g(\boldsymbol x)+\frac{1}{2 \gamma}\left\|\boldsymbol x-\bbb\right\|^{2}_{\mathbf D}\right\},\\
        &\boldsymbol{U}_{\gamma}(\boldsymbol\xi;\bbb) = \frac{1}{\gamma}(\bbb - \bbb_{\gamma}^+(\boldsymbol\xi)),
    \end{aligned}
\end{equation}
    where step size satisfy $\sum_{t=1}^\infty\gamma_t = \infty,$ $\sum_{t=1}^\infty\gamma_t^2<\infty.$ Then with probability 1, $$\lim_{n\rightarrow \infty}\sum_{t=1}^n\gamma_t\left(\boldsymbol{U}_{\gamma_t}(\boldsymbol\xi^{(t)};\bbb^{(t-1)}) - \mathbb{E}[\boldsymbol{U}_{\gamma_t}(\boldsymbol\xi^{(t)};\bbb^{(t-1)})\mid \bbb^{(t-1)}]\right)$$ exists and is finite.
\end{itemize}

We remark that conditions H1 through H5 are quite mild. Condition H1 imposes mild requirements on the compactness of the parameter space and the properties of the stationary points of the objective function.  Specifically, the compactness of the parameter space is often assumed when analyzing stochastic optimization problems without assuming convexity; see e.g., \cite{gu1998stochastic}, \cite{nielsen2000stochastic}, \cite{cai2010metropolis}, and \cite{duchiStochasticMethodsComposite2018}.   {It also requires that the objective function has different values at different stationary points.}
%This requirement may be checked by theoretical analysis of the model. That is, consider $-\mathbb E(l(\bbb)) + R(\bbb)$ as a function of $\bbb$, where the expectation is with respect to the data under the true model. If this function has distinct values at its stationary points, then it will hold with high probability for its sample version $- l(\bbb)  + R(\bbb)$.}
%\clearpage
% {The parameter space in real problems is often assumed closed and bounded. Further assuming model being identifiable, the H1 generally holds.}
Conditions H2 and H3 require the complete-data log-likelihood function $H(\boldsymbol\xi,\cdot)$ is locally Lipschitzian and weakly convex, respectively.   {These conditions hold when the complete-data log-likelihood function $H(\boldsymbol\xi,\cdot)$ is Lipschitzian and convex on the entire parameter space.
Requiring locally Lipschitzian and weakly convex enables our theory to be applicable to
 a wider range of problems.}  Similar conditions are imposed in \cite{duchiStochasticMethodsComposite2018}. For the examples that we consider in Sections \ref{subsec:example1} and \ref{subsec:example2}, these two conditions are satisfied because $H(\boldsymbol\xi,\bbb)$ is smooth and convex in $\bbb$.  Condition H4 is automatically satisfied according to the way the latent variables are sampled in Algorithm~\ref{alg:alg1}. Finally,
H5 is a key condition for the convergence of the sequence $\bbb^{(t)}.$ When exact samples from the posterior distribution are used,
Lemma~\ref{lem:lem1} below guarantees that H5 is satisfied. With approximate samples from an MCMC algorithm, H5 may still hold when the bias from the MCMC samples is small.

%s when the dependence between MCMC draws is small. And its performance is also verifed in simulations studies in Section~\ref{sec:simulation}.

%
% {Need revision.} The compactness on parameter space assumed in this paper can also be found in asymptotic results of the stochastic EM method \citep[]{nielsen2000stochastic}. The rest of H1 are mild regularity conditions for the objective function.
%H2 and H3 are local Lipschitzian and weak convexity assumptions on $H(\boldsymbol\xi,\cdot)$. Similar assumptions can be found in \cite{duchiStochasticMethodsComposite2018}. Note that H3 is weaker than convexity. For examples considered in Section~\ref{subsec:example1} and \ref{subsec:example2}, H2 and H3 are satisfied because $H(\boldsymbol\xi,\bbb)$ is smooth and convex for $\bbb$ given $\boldsymbol\xi$. Under the updating rule of parameters in Algorithm~\ref{alg:alg1}, H4 is also satisfied. H5 is the key condition for the convergence of the unaveraged sequence $\bbb^{(t)}.$ In particular, when the sampling is exact, the Lemma~\ref{lem:lem1} below guarantees that H5 is satisfied. For the MCMC sampling, H5 still holds when the dependence between MCMC draws is small. And its performance is also verifed in simulations studies in Section~\ref{sec:simulation}.
%  Moreover, by Fisher's identity \citep[]{fisher_1925}, we have
% \begin{equation*}
%      \nabla h(\bbb) = \mathbb{E}\left[\frac{\partial H(\boldsymbol\xi,\bbb)}{\partial\bbb}\mid \boldsymbol \bbb\right].
% \end{equation*}
\begin{lemma}\label{lem:lem1}
    Define the filtration of $\sigma$-algebra $\mathcal{F}_{t-1} = \sigma\left(\bbb^{(0)}, \boldsymbol\xi^{(k)}, 0 \leq k \leq t-1\right).$ $\boldsymbol\xi$ is a sample from $\pi_\bbb.$ Let $$\boldsymbol\epsilon_{\gamma}(\boldsymbol\xi;\bbb) = \boldsymbol{U}_{\gamma}(\boldsymbol\xi;\bbb) - \mathbb{E}[\boldsymbol{U}_{\gamma}(\boldsymbol\xi;\bbb)\mid\bbb],$$ then $\gamma_t\boldsymbol{\epsilon}_{\gamma_t}(\boldsymbol\xi^{(t)},\bbb^{(t-1)})$ is a square-integrable martingale difference sequence adpated to $\mathcal{F}_{t-1},$ and with probability 1, $\lim_n \sum_{t=1}^n \gamma_t \boldsymbol{\epsilon}_{\gamma_t}(\boldsymbol\xi^{(t)},\bbb^{(t-1)})$ exists and is finite.
    % i.e., for any $\gamma>0,$ there exists $\sigma^2>0,$ such that
    % \[
    %     \mathbb{E}[\boldsymbol\epsilon_{\gamma}(\boldsymbol\xi^{(t)};\bbb^{(t-1)})\mid \mathcal{F}_{t-1}] = 0,\quad \mathbb{E}[\Vert\boldsymbol\epsilon_{\gamma}(\boldsymbol\xi^{(t)};\bbb^{(t-1)})\Vert^2\mid \mathcal{F}_{t-1}] \leq \sigma^2.
    % \]
\end{lemma}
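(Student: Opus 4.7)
The plan is to verify the three assertions of the lemma sequentially: (i) the martingale difference property of $\gamma_t\boldsymbol\epsilon_{\gamma_t}$ with respect to the filtration $\{\mathcal{F}_t\}_{t\geq 0}$; (ii) square-integrability via a uniform conditional variance bound; and (iii) almost-sure convergence of the series as a consequence of the $L^2$-martingale convergence theorem.

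For (i), the key observation is that $\bbb^{(t-1)}$ is $\mathcal{F}_{t-1}$-measurable while, by H4 (exact sampling from $\pi_{\bbb^{(t-1)}}$), the sample $\boldsymbol\xi^{(t)}$ has conditional distribution $\pi_{\bbb^{(t-1)}}$ given $\mathcal{F}_{t-1}$. Hence by the tower property,
\[
\mathbb{E}\bigl[\boldsymbol{U}_{\gamma_t}(\boldsymbol\xi^{(t)},\bbb^{(t-1)}) \mid \mathcal{F}_{t-1}\bigr] = \mathbb{E}\bigl[\boldsymbol{U}_{\gamma_t}(\boldsymbol\xi,\bbb^{(t-1)}) \mid \bbb^{(t-1)}\bigr],
\]
which immediately yields $\mathbb{E}[\gamma_t\boldsymbol\epsilon_{\gamma_t}\mid\mathcal{F}_{t-1}]=0$, establishing the martingale difference property.

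For (ii), I would exploit the optimality of $\bbb_\gamma^+(\boldsymbol\xi)$ by comparing its objective value against the feasible point $\xx=\bbb$, yielding
\[
[\mathbf G_\bbb(\boldsymbol\xi)]^\top(\bbb_\gamma^+-\bbb) + g(\bbb_\gamma^+) + \frac{1}{2\gamma}\|\bbb_\gamma^+-\bbb\|^2_{\mathbf D} \leq g(\bbb).
\]
The matrix $\mathbf D=\mathbf D^{(t)}$ has eigenvalues confined to $[c_1,c_2]$ by the truncation rule in Algorithm~\ref{alg:alg1}, so $\|\cdot\|^2_{\mathbf D}\geq c_1\|\cdot\|^2$. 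Combining this with Lipschitz continuity of the non-smooth component of $g$ on $\mathcal{B}$ (which holds for the $L_1$ and elastic-net penalties considered) and Cauchy--Schwarz applied to the inner product against $\mathbf G_\bbb(\boldsymbol\xi)$, one solves the resulting quadratic-in-norm inequality to obtain
\[
\|\boldsymbol{U}_{\gamma_t}(\boldsymbol\xi^{(t)},\bbb^{(t-1)})\| \leq \frac{2}{c_1}\bigl(L_g + \|\mathbf G_{\bbb^{(t-1)}}(\boldsymbol\xi^{(t)})\|\bigr).
\]
Squaring, conditioning on $\mathcal{F}_{t-1}$, and invoking H2 together with the compactness of $\mathcal{B}$ in H1, one obtains a constant $K$ independent of $t$ with $\mathbb{E}[\|\boldsymbol{U}_{\gamma_t}\|^2\mid\mathcal{F}_{t-1}]\leq K$, and hence $\mathbb{E}[\|\gamma_t\boldsymbol\epsilon_{\gamma_t}\|^2]\leq 4K\gamma_t^2$.

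For (iii), I would invoke the $L^2$-martingale convergence theorem: since $\sum_t\mathbb{E}[\|\gamma_t\boldsymbol\epsilon_{\gamma_t}\|^2]\leq 4K\sum_t\gamma_t^2<\infty$ by the step-size condition in H5, the $\mathbb{R}^p$-valued martingale $S_n=\sum_{t=1}^n\gamma_t\boldsymbol\epsilon_{\gamma_t}$ is $L^2$-bounded, and therefore converges almost surely and in $L^2$ to a finite limit. The main technical obstacle I anticipate is step (ii): carefully extracting the proximal inequality and propagating the eigenvalue bounds on $\mathbf D^{(t)}$ together with Lipschitzness of $g$ into a uniform second-moment bound on $\boldsymbol{U}_{\gamma_t}$. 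In particular, upgrading the pointwise finiteness stipulated by H2 to a uniform bound over the compact set $\mathcal{B}$ requires a modest continuity argument on the map $\bbb\mapsto\mathbb{E}[M_\epsilon(\bbb;\boldsymbol\xi)^2\mid\bbb]$, which for the IFA and restricted LCA models of Sections~\ref{subsec:example1}--\ref{subsec:example2} follows from the smoothness of the complete-data log-likelihood in $\bbb$.
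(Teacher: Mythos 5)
Your proposal is correct and follows the same three-step skeleton as the paper's proof (martingale-difference property via the tower rule and H4, a conditional second-moment bound on $\boldsymbol{U}_{\gamma}$, then a martingale convergence theorem under $\sum_t\gamma_t^2<\infty$), but it differs in how the middle step is executed. The paper does not derive the bound on $\Vert\boldsymbol{U}_{\gamma}\Vert$ itself: it cites Lemma~7 of \cite{duchiStochasticMethodsComposite2018} for $\Vert\boldsymbol{U}_{\gamma}(\boldsymbol\xi;\bbb)\Vert\leq\Vert\boldsymbol U(\boldsymbol\xi;\bbb)\Vert$ with $\boldsymbol U=\partial H/\partial\bbb+\partial g$, and Lemma~8 there for finiteness of the resulting moment $L_\epsilon(\bbb)$, whereas you rederive the analogous bound $\Vert\boldsymbol{U}_{\gamma}\Vert\leq \tfrac{2}{c_1}(L_g+\Vert\mathbf G_{\bbb}(\boldsymbol\xi)\Vert)$ from first principles by testing the proximal subproblem at the feasible point $\xx=\bbb$ and using $c_1\Vert\cdot\Vert^2\leq\Vert\cdot\Vert^2_{\mathbf D}$; this is a legitimate, more self-contained route (note that the indicator part of $g$ contributes zero at both $\bbb$ and $\bbb_\gamma^+$ since both lie in $\mathcal B$, so only Lipschitzness of $R_2$ is needed, as you say). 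For the final step the paper invokes the local (conditional-variance) martingale convergence theorem, requiring only that $\sum_t\mathbb{E}[\gamma_t^2\Vert\boldsymbol\epsilon_{\gamma_t}\Vert^2\mid\mathcal F_{t-1}]<\infty$ almost surely, while you use the $L^2$-bounded martingale theorem, which requires the uniform deterministic bound $K$; both work, and yours additionally yields $L^2$ convergence, but it leans harder on upgrading the pointwise finiteness in H2 to a supremum over the compact $\mathcal B$ --- a gap you explicitly flag and which the paper also glosses over when it writes $\sup_{\bbb}L_\epsilon(\bbb)^2$ without justification. In short: same architecture, a more elementary derivation of the key bound on your side versus an appeal to external lemmas on the paper's side, and an interchangeable choice of martingale convergence theorem.
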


% By standard convergence result for $l_2$-summable martingale difference sequences \citep[][Thm. 5.3.33]{demboProbabilityTheorySTAT310},  we have with probability 1, the limit $\lim_n \sum_{t=1}^n \gamma_t \boldsymbol{\epsilon}_{\gamma_t}(\boldsymbol\xi^{(t)},\bbb^{(t-1)})$ exist and is finite.

\begin{theorem}\label{thm:thm1}
  Apply Algorithm~\ref{alg:alg1} to optimization problem~(\ref{eq:obj2}) with step size $\gamma_t=t^{-\frac{1}{2}-\epsilon},\epsilon\in (0,\frac{1}{2}]$, for which conditions H1-H5 hold. Then with probability 1, the sequence $\bbb^{(n)}$ converges to a stationary point in $\mathcal B^*$.
\end{theorem}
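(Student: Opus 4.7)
The plan is to view the update
\[
\bbb^{(t)} = \bbb^{(t-1)} - \gamma_t \boldsymbol U_{\gamma_t}(\boldsymbol\xi^{(t)};\bbb^{(t-1)})
\]
(which is just a rewriting of the proximal step via the operator $\boldsymbol U_\gamma$ introduced in H5) as a noisy Euler discretization of a mean-field \emph{differential inclusion} whose rest points are precisely $\mathcal B^*$, and then combine stochastic-approximation arguments with a Lyapunov analysis in the spirit of \cite{duchiStochasticMethodsComposite2018}. Decompose the drift as
\[
\bbb^{(t)} = \bbb^{(t-1)} - \gamma_t \bar{\boldsymbol U}_{\gamma_t}(\bbb^{(t-1)}) - \gamma_t \boldsymbol\epsilon_{\gamma_t}(\boldsymbol\xi^{(t)};\bbb^{(t-1)}),
\]
where $\bar{\boldsymbol U}_\gamma(\bbb)=\mathbb E[\boldsymbol U_\gamma(\boldsymbol\xi;\bbb)\mid \bbb]$. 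By Lemma~\ref{lem:lem1} (exact sampling) or directly by H5 (MCMC sampling), the cumulative noise $\sum_t \gamma_t\boldsymbol\epsilon_{\gamma_t}(\boldsymbol\xi^{(t)};\bbb^{(t-1)})$ converges almost surely; square summability of $\gamma_t=t^{-1/2-\epsilon}$ is essential here.

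Next I would identify the limiting mean-field direction. The first-order optimality condition for the proximal subproblem gives
\[
\tfrac{1}{\gamma}\mathbf D^{(t)}\bigl(\bbb - \bbb_\gamma^+(\boldsymbol\xi)\bigr) - \mathbf G_\bbb(\boldsymbol\xi) \in \partial g\bigl(\bbb_\gamma^+(\boldsymbol\xi)\bigr),
\]
so $\boldsymbol U_\gamma(\boldsymbol\xi;\bbb)=(\mathbf D^{(t)})^{-1}[\mathbf G_\bbb(\boldsymbol\xi)+v_\gamma]$ with $v_\gamma\in\partial g(\bbb_\gamma^+)$. Using H2 to pass expectation under the limit and the outer semicontinuity of $\partial g$, any accumulation point of $\bar{\boldsymbol U}_{\gamma_t}(\bbb)$ lies in $(\mathbf D^*)^{-1}\bigl(\nabla h(\bbb)+\partial g(\bbb)\bigr)$, where $\mathbf D^*$ is the limit of $\mathbf D^{(t)}$ (well defined because the diagonal entries are Polyak-Ruppert averages truncated into $[c_1,c_2]$). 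This identifies $\mathcal B^*$ as the zero set of the mean-field inclusion.

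The key step is a Lyapunov/sufficient-descent inequality for $V(\bbb):=h(\bbb)+g(\bbb)$. Using the local Lipschitz bound from H2 and the weak convexity from H3 (the quadratic buffer $\tfrac{\lambda(\boldsymbol\xi,\bbb^\prime)}{2}\Vert\bbb-\bbb_0\Vert^2$ is exactly what one needs to linearize $H$ along the iterate and absorb the second-order error), together with the spectral bounds $c_1\le\delta_i^{(t)}\le c_2$, I expect to obtain
\[
V(\bbb^{(t)}) \le V(\bbb^{(t-1)}) - c\,\gamma_t\,\Vert \bar{\boldsymbol U}_{\gamma_t}(\bbb^{(t-1)})\Vert^2 + \gamma_t\,\langle \boldsymbol w_t, \boldsymbol\epsilon_{\gamma_t}(\boldsymbol\xi^{(t)};\bbb^{(t-1)})\rangle + O(\gamma_t^2),
\]
with $\boldsymbol w_t$ bounded. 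Summing, invoking H5 for the martingale term and $\sum\gamma_t^2<\infty$ for the remainder, yields that $V(\bbb^{(t)})$ converges almost surely and $\sum_t \gamma_t\Vert \bar{\boldsymbol U}_{\gamma_t}(\bbb^{(t-1)})\Vert^2<\infty$. Since $\sum\gamma_t=\infty$, there must be a subsequence along which $\Vert\bar{\boldsymbol U}_{\gamma_t}(\bbb^{(t-1)})\Vert\to 0$; combined with compactness of $\mathcal B$ (H1) and the characterization of the limit set, every cluster point of $\{\bbb^{(t)}\}$ lies in $\mathcal B^*$.

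Finally I would upgrade cluster-point convergence to full-sequence convergence using H1: $\mathcal B^*$ is finite and $V$ takes distinct values on $\mathcal B^*$, while $V(\bbb^{(t)})$ has an almost-sure limit and the step sizes $\gamma_t\to 0$ prevent the iterates from jumping between isolated stationary points with distinct Lyapunov values. Hence the limit set is a singleton and $\bbb^{(t)}\to \bbb^*\in\mathcal B^*$ almost surely.

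The main obstacle is the Lyapunov inequality above: in the non-convex, non-smooth, stochastic, quasi-Newton setting one has to control three coupled sources of error simultaneously --- the weak-convexity slack $\lambda(\boldsymbol\xi,\bbb^\prime)$ from H3, the adaptive scaling $\mathbf D^{(t)}$ (whose fluctuations must be shown negligible because its entries are truncated averages that vary only at rate $1/t$), and the discretization gap between $\boldsymbol U_{\gamma_t}$ and the idealized subgradient $\nabla h+\partial g$. Making all three remainder terms summable in $t$ when $\gamma_t=t^{-1/2-\epsilon}$ is the technical heart of the argument, and is exactly where the set-valued ODE generalization (as opposed to the convex analysis of \cite{atchade2017perturbed}) becomes necessary.
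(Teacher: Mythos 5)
Your overall framing coincides with the paper's: both treat the iteration as a noisy Euler discretization of the set-valued dynamics $\dot\bbb \in -\boldsymbol U(\bbb)-\mathcal N_{\mathcal B}(\bbb)$ with $\boldsymbol U(\bbb)=\nabla h(\bbb)+\partial g(\bbb)$, both use Lemma~\ref{lem:lem1} (or H5 directly) to kill the cumulative martingale noise, and both finish by combining compactness with the H1 requirement that the finitely many stationary points have distinct objective values. The divergence, and the genuine gap, is in the middle. You propose to prove a per-iteration Lyapunov descent inequality $V(\bbb^{(t)})\le V(\bbb^{(t-1)})-c\,\gamma_t\Vert\bar{\boldsymbol U}_{\gamma_t}(\bbb^{(t-1)})\Vert^2+\gamma_t\langle \boldsymbol w_t,\boldsymbol\epsilon_{\gamma_t}\rangle+O(\gamma_t^2)$ for $V=h+g$ itself. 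Under H1--H5 this is not available: the $O(\gamma_t^2)$ remainder requires a descent lemma, i.e.\ Lipschitz continuity of $\nabla h$, which is only assumed locally around $\bbb_*$ in H6 and is not part of the hypotheses of Theorem~\ref{thm:thm1}; H2 and H3 only give that $H(\boldsymbol\xi,\cdot)$ is locally Lipschitz and weakly convex, under which $V$ need not decrease along stochastic prox-gradient steps. Moreover, even granting the inequality, the conclusion $\sum_t\gamma_t\Vert\bar{\boldsymbol U}_{\gamma_t}(\bbb^{(t-1)})\Vert^2<\infty$ together with $\sum_t\gamma_t=\infty$ only yields a \emph{subsequence} along which the prox-gradient mapping vanishes; it does not show that \emph{every} cluster point of $\{\bbb^{(t)}\}$ lies in $\mathcal B^*$, which is what your final "upgrade to a singleton" step needs.

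The paper avoids both problems by not proving any discrete descent at all. It verifies conditions (i)--(iv) of Theorem~2 of \cite{duchiStochasticMethodsComposite2018} (boundedness from H1, step sizes from H5, noise summability from Lemma~\ref{lem:lem1}, closedness of the map $-\boldsymbol U(\bbb)-\mathcal N_{\mathcal B}(\bbb)$), concludes that the linearly interpolated, time-shifted process is an asymptotic solution of the differential inclusion, and then invokes their Theorem~1 to obtain $[\liminf_t F(\bbb^{(t)}),\limsup_t F(\bbb^{(t)})]\subset F(\mathcal B^*)$. The Lyapunov property of $F$ is thus exploited along the \emph{continuous-time} trajectories of the inclusion (where it follows from the subgradient-flow structure, with no Lipschitz-gradient hypothesis) and is transferred to the iterates by the asymptotic-pseudo-trajectory machinery. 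If you want to complete your argument, either import that machinery, or restrict to the Moreau-envelope descent available for weakly convex composite problems --- but the latter still leaves the "all cluster points are stationary" step to be filled.
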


We remark that the convergence of the proposed method is similar to that of the EM algorithm. In fact, for marginal maximum likelihood estimation that is non-convex, the EM algorithm also only guarantees
the convergence to
a stationary point \citep{wu1983convergence}. Moreover, when the objective function has a single stationary point (e.g., when the objective function is strictly convex), then Theorem~\ref{thm:thm1} guarantees global convergence.

The convergence of $\bbb^{(n)}$ guarantees the convergence of the Polyak-Ruppert averaging sequence $\bar \bbb_n$. However, Theorem~\ref{thm:thm1} does not provide information on the convergence speed. In what follows, we establish the convergence speed of $\bar \bbb_n$. Without loss of generality, by Theorem~\ref{thm:thm1}, we assume that $\bbb^{(n)}$ converges to $\bbb_* \in \mathcal B^*.$
%Further suppose that the following conditions hold in $\mathcal{B}_1 = \{\bbb \in \mathcal B: \Vert \bbb - \bbb_*\Vert \leq \delta\},$ for some $\delta >0.$

\begin{itemize}
    \item[H6.] There exists $\delta >0$, such that $h(\bbb)$ is strongly convex in $\mathcal{B}_1 = \{\bbb \in \mathcal B: \Vert \bbb - \bbb_*\Vert \leq \delta\}$ and $\nabla h(\bbb)$ is Lipschitz in $\mathcal{B}_1$ with Lipschitz constant $L$.
    % $g(\bbb)$ is lower semi-continuous, strongly convex in $\mathcal{B}_1$.
    \item[H7.] For $\bbb,\bbb' \in \mathcal{B}_1,$ any $\gamma>0,$ and diagonal matrix $\mathbf D$ with diagonal entries $\delta_i\in[c_1,c_2],$ the following conditions hold.
    \begin{itemize}
        \item[(i)]\label{eq:g_bound1}
        $g\left(\operatorname{Prox}^{\mathbf D}_{\gamma, g}(\bbb)\right)-g\left(\bbb^{\prime}\right) \leq-\frac{1}{\gamma}\left\langle\operatorname{Prox}^{\mathbf D}_{\gamma, g}(\bbb)-\bbb^{\prime}, \operatorname{Prox}^{\mathbf D}_{\gamma, g}(\bbb)-\bbb\right\rangle_{\mathbf D}$.
        \item[(ii)]\label{eq:g_bound2}
        $\left\|\operatorname{Prox}^{\mathbf D}_{\gamma, g}(\bbb)-\operatorname{Prox}^{\mathbf D}_{\gamma,{} g}\left(\bbb^{\prime}\right)\right\|_{\mathbf D}^{2}+\left\|\left(\operatorname{Prox}^{\mathbf D}_{\gamma, g}(\bbb)-\bbb\right)-\left(\operatorname{Prox}^{\mathbf D}_{\gamma, g}\left(\bbb^{\prime}\right)-\bbb^{\prime}\right)\right\|_{\mathbf D}^{2} \leq\left\|\bbb-\bbb^{\prime}\right\|_{\mathbf D}^{2}$.
        \item[(iii)] $\sup _{\gamma \in(0,c_1 / L]} \sup _{\bbb \in \mathcal{B}_1} \gamma^{-1}\left\|\operatorname{Prox}_{\gamma, g}^{\mathbf D}(\bbb)-\bbb\right\|<\infty$.
    \end{itemize}

    % \item[H8.] $\mathbf D^{(t)}$ is a diagonal matrix with elements $\delta_i^{(t)}$ defined by averaging of uniformly bounded $\check\delta_i^{(t)},$ i.e., $\delta_i^{(t)}=\sum_{k=1}^t\check\delta_i^{(k)}/t$, where $\check \delta_i^{(k)}\in [c_1,c_2]$ for all $i=1,\ldots,p,k\geq 1,$ and $0< c_1 \leq c_2.$

    \item[H8.] For a measurable function $V: \Xi \rightarrow [1,+\infty),$ a signed measure $\mu$ on the $\sigma$-field of $\Xi,$ and a function $f: \Xi\rightarrow \mathbb{R},$ define
    \begin{equation*}
        |f|_{V} \stackrel{\text { def }}{=} \sup _{\boldsymbol\xi \in \Xi} \frac{|f(\boldsymbol\xi)|}{V(\boldsymbol\xi)}, \quad\|\mu\|_{V} \stackrel{\text { def }}{=} \sup _{f,|f|_{V} \leq 1}\left|\int f \mathrm{d} \mu\right|.
    \end{equation*}
    There exist $\lambda\in(0,1), b<\infty, m\geq 4$ and a measurable function $W: \Xi\rightarrow [1,+\infty)$ such that
    \begin{equation*}
        \sup _{\bbb \in \mathcal{B}_1}\left|\mathbf G_{\bbb}\right|_{W}<\infty, \quad \sup _{\bbb \in \mathcal{B}_1} P_{\bbb} W^{m} \leq \lambda W^{m}+b,
    \end{equation*}
    where $\mathbf G_{\bbb}(\boldsymbol \xi)= \partial H(\boldsymbol\xi,\bbb)/\partial \bbb$ and  $P_{\bbb}$ is the Markov kernel defined in condition H4.
%    \clearpag
    %... [Explain $G_{\beta}$ and $P_{\beta}$ again]}
    In addition, for any $\ell\in (0,m], $ there exists $C<\infty, \rho \in (0,1)$ such that for any $\boldsymbol\xi\in\Xi,$
    \begin{equation*}
        \sup _{\bbb \in \mathcal{B}_1}\left\|P_{\bbb}^{n}(\boldsymbol\xi, \cdot)-\pi_{\bbb}\right\|_{W^{\ell}} \leq C \rho^{n} W^{\ell}(\boldsymbol\xi).
    \end{equation*}
    \item[H9.] There exists a constant $C$ such that for any $\bbb,\bbb^{\prime} \in \mathcal{B}_1, $
        \begin{equation*}
            \left|\mathbf G_{\bbb}-\mathbf G_{\bbb^{\prime}}\right|_{W}+\sup _{\boldsymbol\xi\in\Xi} \frac{\left\|P_{\bbb}(\boldsymbol\xi, \cdot)-P_{\bbb^{\prime}}(\boldsymbol\xi, \cdot)\right\|_{W}}{W(\boldsymbol\xi)}+\left\|\pi_{\bbb}-\pi_{\bbb^{\prime}}\right\|_{W} \leq C\left\|\bbb-\bbb^{\prime}\right\|.
        \end{equation*}
    %     \item[(iii)] $\sum_t \vert\gamma_{t+1}-\gamma_t\vert<\infty. $
    % \end{itemize}
\end{itemize}

We provide a few remarks on conditions H6-H9, which are needed for establishing the convergence speed in addition to conditions H1-H5.  {Condition H6 requires that the smooth part of the objective function is  strongly convex and its derivative is Lipschitz continuous in a small neighborhood of $\bbb_*$. Specifically, $h(\bbb)$ being strongly convex in $\mathcal{B}_1$ means that there exists a positive constant $C$, such that $(\nabla h(\bbb)-\nabla h(\bbb'))^\top (\bbb - \bbb') \geq C\Vert \bbb - \bbb'\Vert^2$, for any $\bbb$ and $\bbb' \in \mathcal{B}_1$.}
Condition H7 imposes some requirements on the non-smooth part of the objective function, with regard to the proximal operator.  {As verified in Lemma~C.1, H7 holds when $g$ is a generalized function that indicates constraints or when
$g$ is locally Lipschitz continuous and convex that holds when $g$ is a $L_1$ regularization function. Thus,
 H7 holds for the examples we consider in Sections 2.2 and 2.3.}
% As verified in Lemma~C.1 in the supplementary material, this condition holds for our examples in Sections \ref{subsec:example1} and \ref{subsec:example2}.
Conditions H8 and H9 imposes mild regularity conditions on the stochastic gradient in a local neighborhood of $\bbb_*$, especially when the stochastic gradients are generated by a Markov kernel. These conditions are used to control the bias caused by MCMC sampling. H8 is essentially a
uniform-in-$\bbb$ ergodic condition and H9 is a local Lipschitzian condition on the Markov kernel. These regularity conditions
are commonly adopted in the stochastic approximation literature \citep[]{benveniste1990adaptive,andrieu2005stability,fort2016convergence}, and
have been shown to hold
for general families of MCMC kernels including Metropolis-Hastings and Gibbs samplers (\citealp{andrieuErgodicityPropertiesAdaptive2006}; \citealp[][]{fortConvergenceAdaptiveInteracting2011}; \citealp[][]{schmidtConvergenceRatesInexact2011a}).

% {Need revision.} We remark that the local strong convexity is easy to verify for our examples in Section~\ref{subsec:example1} and \ref{subsec:example2} as $H(\boldsymbol\xi,\bbb)$ is strongly convex for $\bbb$ given $\boldsymbol\xi$. Boundness condition for $\mathbf D$ in H7 is satisfied according to Algorithm~\ref{alg:alg1}.
%Lipschitz and boundness properties of the proximal map $\text{Prox}_{\gamma,g}^{\mathbf{D}}$ in H7 are verified in Lemma~\ref{lemma:g_bound} in the supplementary material.
%% Boundness condition for $\mathbf D$ in H7-(iii) is satisfied according to Algorithm~\ref{alg:alg1}.
%For the Markov kernel $P_\bbb$, the uniform-in-$\bbb$ ergodic condition in H8 and Lipschitz-regularity in H9 are proved to be satisfied for general families of MCMC kernels including Metropolis-Hastings sampler and Gibbs sampler (\citealp[see Proposition 12,][]{andrieuErgodicityPropertiesAdaptive2006}; \citealp[][]{fortConvergenceAdaptiveInteracting2011}; \citealp[][]{schmidtConvergenceRatesInexact2011a}). Such assumptions on Markov kernel are common in stochastic approximation literatures \citep[]{benveniste1990adaptive,andrieu2005stability,fort2016convergence}
% % Also, when the $g$ is proper, convex, Lipschitz on $\mathcal{B}_1$, which holds for $L_1$ penalty or $g=I(\mathcal{B})$, H9-(ii) is satisfied (see Lemma~\ref{lemma:prox_bound} in the supplementary material).
% % Note that H8 and H10-(ii) hold trivially if $g=I(\mathbb{B})$, $\mathbb{B}$ is the unit ball on $\mathbb{R}^p.$

\begin{theorem}\label{thm:thm2}
Suppose that H1-H9 hold. Then there exists a constant $C$, such that for the Polyak-Ruppert averaging sequence $\bar\bbb_{n} = \frac{1}{n}\sum_{t=1}^n \bbb^{(t)}$ from Algorithm 1,
\begin{equation}\label{eq:speed}
\mathbb{E}\Vert\bar\bbb_n - \bbb_\star\Vert^2 \leq C n^{-\frac{1}{2} + \varepsilon}.
\end{equation}
Note that the expectation is taken with respect to $\boldsymbol\xi^{(1)},\ldots,\boldsymbol\xi^{(t)}$ given $\bbb^{(0)}$ and $\boldsymbol\xi^{(0)}.$

\end{theorem}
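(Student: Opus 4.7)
The plan is to leverage Theorem~\ref{thm:thm1}, which already delivers almost-sure convergence of $\bbb^{(t)}$ to some stationary point $\bbb_\star$, and then perform a local convergence-rate analysis inside the neighborhood $\mathcal{B}_1$ on which $h$ is strongly convex and $\nabla h$ is Lipschitz (condition H6). Concretely, I would first fix a (random but finite) index $t_0$ such that $\bbb^{(t)} \in \mathcal{B}_1$ for all $t \geq t_0$; restricted to this event, the tail of the iteration can be analyzed as a strongly-convex stochastic proximal recursion, while the initial segment contributes only an $O(1)$ additive term that is absorbed in the averaging $\bar\bbb_n$. Since the diagonal approximations $\mathbf D^{(t)}$ are uniformly bounded by H7(iii) and have eigenvalues constrained to $[c_1, c_2]$, I can treat the preconditioner as a bounded, slowly-varying perturbation.

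Within $\mathcal{B}_1$, I would write the one-step descent relation for $\|\bbb^{(t)} - \bbb_\star\|_{\mathbf D^{(t)}}^2$ using the firm-nonexpansiveness properties of the scaled proximal operator (H7(i)--(ii)) together with the optimality characterization $\bbb_\star = \operatorname{Prox}^{\mathbf D^{(t)}}_{\gamma_t,g}(\bbb_\star - \gamma_t (\mathbf D^{(t)})^{-1} \nabla h(\bbb_\star))$. Strong convexity of $h$ plus Lipschitzness of $\nabla h$ then yield a contraction of the form
\begin{equation*}
\mathbb{E}\|\bbb^{(t)} - \bbb_\star\|^2 \;\leq\; (1 - c\,\gamma_t)\,\mathbb{E}\|\bbb^{(t-1)} - \bbb_\star\|^2 + \gamma_t\,B_t + \gamma_t^2\,V_t,
\end{equation*}
where $V_t$ bounds the conditional variance of the stochastic gradient $\mathbf G^{(t)}$ and $B_t$ bounds the bias coming from using MCMC samples rather than exact draws from $\pi_{\bbb^{(t-1)}}$. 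The bias $B_t$ is controlled through a Poisson-equation decomposition: conditions H8 and H9 guarantee that a solution to the Poisson equation associated with the family of kernels $\{P_\bbb\}_{\bbb\in\mathcal{B}_1}$ exists, is bounded in the weighted $W$-norm, and is Lipschitz in $\bbb$. Applying the standard Benveniste--Métivier--Priouret splitting to $\mathbf G^{(t)} - \nabla h(\bbb^{(t-1)})$ isolates a martingale-difference term, a telescoping term, and a small remainder, each of whose contributions to $\sum_t \gamma_t B_t$ and to the drift can be bounded by geometric series in $\gamma_t$.

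For the averaged sequence, I would then follow the Polyak--Ruppert argument adapted to the composite/perturbed setting of \cite{atchade2017perturbed}. Writing $\bar\bbb_n - \bbb_\star = n^{-1}\sum_{t=1}^n (\bbb^{(t)} - \bbb_\star)$, summation by parts rewrites the sum in terms of the stochastic gradient increments weighted by $\gamma_t^{-1}$, which is the standard device that converts an $O(\gamma_t)$ per-step error into an $O(n^{-1})$ variance contribution after averaging, up to boundary terms depending on the step size exponent. With $\gamma_t = t^{-1/2-\varepsilon}$, these boundary terms dominate and produce the rate $n^{-1/2+\varepsilon}$, matching the claim. The main obstacle, and the step that deserves the most care, is coupling the global non-convex convergence result of Theorem~\ref{thm:thm1} with the local rate analysis: one must argue that the expectation in \eqref{eq:speed} can indeed be taken unconditionally despite the fact that entry into $\mathcal{B}_1$ happens only almost surely and at a random time. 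This is handled by a stopping-time / event-splitting argument that uses the moment bound in H2 together with the uniform drift condition $P_\bbb W^m \leq \lambda W^m + b$ in H8 to dominate the pre-entry contribution by a constant independent of $n$, which is then absorbed into $C$.
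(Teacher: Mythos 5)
Your route differs from the paper's in a substantive way. The paper does not set up a per-iterate contraction on $\mathbb{E}\Vert\bbb^{(t)}-\bbb_\star\Vert^2$ at all. Instead it applies the descent inequality of Lemma~\ref{lemma2} to obtain a one-step bound on $F(\bbb^{(t+1)})-F(\bbb_\star)$, sums over $t$, telescopes the $\frac{1}{\gamma_t}\Vert\bbb^{(t)}-\bbb_\star\Vert^2_{\mathbf D^{(t)}}$ terms, controls the noise sums $\sum_t\langle T_{\gamma_t}(\bbb^{(t-1)})-\bbb_\star,\boldsymbol\eta_t\rangle_{\mathbf D^{(t)}}$ and $\sum_t\gamma_t\Vert\boldsymbol\eta_t\Vert^2_{\mathbf D^{(t)}}$ in $L_2$ via Lemmas~\ref{lemma:approx_bound3} and \ref{lemma:operator_T_bound} together with Propositions 18--19 of \cite{atchade2017perturbed} (which is where the Poisson-equation machinery under H8--H9 enters), then passes to $F(\bar\bbb_n)-\min F$ by convexity of $F$ on $\mathcal{B}_1$ and only at the very end converts a function-value bound into the distance bound via strong convexity, $F(\bar\bbb_n)-F(\bbb_\star)\geq\frac{\mu}{2}\Vert\bar\bbb_n-\bbb_\star\Vert^2$. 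Your plan instead runs the classical strongly-convex recursion on the iterates and then averages. That is a legitimate alternative skeleton, but as written it has two concrete gaps.

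First, the step you yourself flag as the main obstacle is not resolved by what you propose. If the iterates enter $\mathcal{B}_1$ only after a random time $t_0$, the pre-entry segment contributes roughly $t_0\cdot\operatorname{diam}(\mathcal{B})/n$ to $\bar\bbb_n-\bbb_\star$; saying this is ``dominated by a constant independent of $n$ \ldots absorbed into $C$'' does not work, because a constant additive contribution to $\mathbb{E}\Vert\bar\bbb_n-\bbb_\star\Vert^2$ cannot be absorbed into $Cn^{-1/2+\varepsilon}$. What you actually need is a moment bound on $t_0$ (e.g.\ $\mathbb{E}[t_0^2]<\infty$), and neither H2 nor the drift condition in H8 supplies that: Theorem~\ref{thm:thm1} gives only almost-sure convergence with no rate on the entry time. (To be fair, the paper's own proof simply applies the local inequalities as if all iterates lie in $\mathcal{B}_1$ and does not address this either, but your proposal explicitly promises a resolution that it does not deliver.) Second, the asserted contraction $\mathbb{E}\Vert\bbb^{(t)}-\bbb_\star\Vert^2\leq(1-c\gamma_t)\mathbb{E}\Vert\bbb^{(t-1)}-\bbb_\star\Vert^2+\gamma_tB_t+\gamma_t^2V_t$ is the technical crux with Markovian noise, and you do not show how the telescoping and remainder terms produced by the Poisson-equation splitting interact with the $(1-c\gamma_t)$ factor inside the recursion; these terms are not of the form $\gamma_tB_t$ with $B_t$ summable in the way the recursion requires. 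This is precisely the difficulty the paper sidesteps by bounding the \emph{aggregated} noise sums in $L_2$ (via the cited propositions) rather than controlling the bias iteration by iteration. Until those two points are filled in, the proposal is a plausible outline rather than a proof.
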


We now provide a few remarks regarding the convergence speed \eqref{eq:speed}.
% First, the above framework includes both situations of exact sampling and MCMC sampling of $\boldsymbol\xi^{(t)}$. Specifically, for exact sampling, the $\boldsymbol\xi^{(t)}$ is independent of the past draws $\{\boldsymbol\xi^{(s)}, s<t\}$ given $\bbb^{(t-1)}.$ Thus $\mathbf G_{\bbb^{(t-1)}}(\boldsymbol\xi^{(t)})$ is an unbiased estimator of $\nabla h(\bbb^{(t-1)}),$ given $\sigma$-algebra $\mathcal{F}_{t-1} \stackrel{\text { def }}{=} \sigma\left(\bbb^{(0)}, \boldsymbol\xi^{(k)}, 0 \leq k \leq t-1\right).$ The MCMC sampling is more complex as the dependence between $\boldsymbol\xi^{(t)}$ and last draw $\boldsymbol\xi^{(t-1)}$ given $\bbb^{(t-1)}$ introduces bias in the approximation, i.e., $$\mathbb{E}\left[\mathbf G_{\bbb^{(t-1)}}(\boldsymbol\xi^{(t)}) | \mathcal{F}_{t-1}\right] \neq \nabla h(\bbb^{(t-1)}).$$
First, the small positive constant $\varepsilon$ comes from
the requirement on step size that  $\sum_{t=1}^\infty \gamma_t^2 < \infty$  %the choice of step size required
in H5. Since $\sum_{t=1}^\infty \gamma_t^2 < \infty$ is satisfied when $\gamma_t = \mu t^{-\frac{1}{2} - \varepsilon}$, for any $\mu, \varepsilon > 0$, the convergence speed of $\mathbb{E}\Vert\bar\bbb_n - \bbb_\star\Vert^2$ can be arbitrarily close to $O(n^{-\frac{1}{2}})$ by choosing an arbitrarily small $\varepsilon$. Second, this $\varepsilon$ might be an artifact due to our proof strategy to overcome the non-convexity of the problem. In fact, if the objective function is convex, similar to \cite{atchade2017perturbed}, we can choose $\epsilon = 0$ and then
prove under similar conditions that
$\mathbb{E}\Vert\bar\bbb_n - \bbb_\star\Vert^2 \leq C n^{-\frac{1}{2}}$.
Lastly, it is well-known that for non-smooth convex optimization, the minimax optimal convergence rate is $O(n^{-\frac{1}{2}})$; see Chapter 3, \cite{nesterov2004introductory}.  In this sense, our algorithm is almost minimax optimal, when $\varepsilon$ is very close to zero.
% Finally, we point out that it is much harder to quantify the rate of $\mathbb{E}\Vert \bbb^{(n)} - \bbb_\star\Vert^2$ and we believe that it converges slower than \eqref{eq:speed}.
It is well-known that Polyak-Ruppert averaging typically improves the convergence speed of a slowly convergent sequence \citep{ruppert1988efficient,bonnabel2013stochastic}.

\section{Simulation Study}\label{sec:simulation}

\subsection{Study I: Confirmatory IFA}
 \begin{table}
   \centering
   \begin{tabular}{l|cccc}
     \hline
     % after \\: \hline or \cline{col1-col2} \cline{col3-col4} ...
     Estimator & Step size & Averaging &Quasi-Newton  & MCMC \\
     \hline
     USP   &   $\gamma_t = t^{-0.51}$ & Yes  &Yes&Gibbs\\
     USP-PPG   &  $\gamma_t = t^{-0.51}$ & Yes  &No&Gibbs \\
     USP-RM1   &  $\gamma_t = t^{-0.51}$ & No  &Yes &Gibbs \\
     USP-RM2   &  $\gamma_t = t^{-1}$    & No  &Yes &Gibbs\\
     StEM      &NA     &Yes & NA &Gibbs\\
     \hline
   \end{tabular}
   \caption{Comparison of five stochastic algorithms.}\label{tab:methods}
 \end{table}

In the first study, we compare the performance of four variants of the proposed method and the stochastic EM (StEM) algorithm.
The five methods, including their abbreviations are given in Table~\ref{tab:methods}. For a fair comparison, the same Gibbs sampling method is used. We further explain the differences below.
\begin{enumerate}
  \item USP is the method that we recommend. It has a step size $\gamma_t$ close to $t^{-1/2}$, applies Polyak-Ruppert averaging, and uses a quasi-Newton update in the proximal step.
  \item The USP-PPG method is the perturbed proximal gradient method that is implemented the same as the USP method except that  $c_1 = c_2$ so that it does not involve a quasi-Newton update. $c_1$ is set to be 1 without tuning in this study.
  \item The USP-RM1 method is implemented the same as the USP method, except that $\bbb^{(n)}$ from the last iteration is taken as the estimator instead of applying Polyak-Ruppert averaging. This method is very similar to a Robbins-Monro algorithm, except for the update of parameters $\mathbf B$ for the covariance matrix where constraints involve.
  \item The USP-RM2 method is the same as USP-RM1, except that we set the step size $\gamma_t = 1/t$ which is the asymptotic optimal step size for the Robbins-Monro algorithm \citep{fabian1968asymptotic}.
  \item The implementation of the StEM algorithm is the same as USP, except for the proximal step. Instead of making stochastic gradient update, StEM obtains $\bbb^{(t)}$ by completely solving an optimization problem
        $$\bbb^{(t)} = \argmax_{\bbb \in \mathcal B} H(\boldsymbol\xi^{(t)}, \bbb) + g(\bbb).$$
  In our implementation, this optimization problem is solved by making the quasi-Newton proximal update \eqref{eq:prox} iteratively until convergence.
\end{enumerate}

We consider a confirmatory IFA setting with only two factors (i.e., $K=2$), so that an EM algorithm with sufficient numbers of quadrature points and EM steps can be used to obtain a more accurate approximation of $\hat \bbb$ that will be used as the standard when comparing the five methods.  {We emphasize that it is important to compare the convergence speed of difference algorithms based on $\hat \bbb$ rather than the true model parameters. This is because, under suitable conditions, these algorithms converge to $\hat \bbb$ rather than the true model parameters. If we compare the algorithms based on the true model parameters, the difference in the convergence speed cannot be observed clearly, as the statistical error  (i.e., the difference between $\hat \bbb$ and the true model parameters) tends to dominate the computational errors (i.e., the difference between $\hat \bbb$ and the results given by the stochastic algorithms).}

More precisely, we consider sample size $N = 1000$ and the number of items $J = 20$. The design matrix $\mathbf  Q$ is specified by the assumptions that items 1 through 5 only measure the first factor, items 6 through 10 only measure the second factor, and items 11 through 20 measure both.
The intercept parameters $d_j$ are drawn i.i.d. from the standard normal distribution, and the non-zero loading parameters are drawn i.i.d. from a uniform distribution over the interval $(0.5, 1.5)$. The variances of the two factors are set to be 1 and the covariance is set to be 0.4.
% covariance matrix for the two factors is set to be
%\[
%    \Sigma = \begin{bmatrix}
%        1~ & 0.4\\
%    0.4 & 1~
%    \end{bmatrix}.
%\]
Under these parameters, 100 independent datasets are generated, based on which the five methods are compared. To ensure a fair comparison, the true parameters are used as the starting point for all the methods.  In addition, 1000 iterations are run (i.e., $n = 1000$) for each method, instead of using an adaptive stopping criterion. For USP, USP-PPG, and StEM, the burn-in size $\varpi$ is chosen to be 500.
All algorithms are implemented in {\CC} and run on the same platform\footnote{CPU: 2.6 GHz 6-Core Intel Core i7; RAM: 16 GB 2400 MHz DDR4.} using a single core.

The results regarding the accuracy of the proposed methods are given in Figures~\ref{fig:1} and \ref{fig:2} that are based on the following  performance metrics. Specifically, for the intercept parameters $d_j$, the following mean squared error (MSE) is calculated for each simulated dataset and each method,
$$\frac{1}{J}\sum_{j=1}^{J} \left(\tilde d_j - \hat d_j\right)^2,$$
where $\hat d_j$, which is treated as the global optimum, is obtained by an EM algorithm with 31 Gaussian-Hermite quadrature points per dimension, and
$\tilde d_j$ is given by one of the five stochastic methods after 1000 iterations. Similarly, the MSEs for the loading parameters and for the correlation $\sigma_{12}$ between the factors are calculated, where the MSE for the loading parameters is calculated for the unrestricted ones, i.e.,
$$\frac{\sum_{j,k} 1_{\{q_{jk}\neq 0\}} (\tilde a_{jk} - \hat a_{jk})^2 }{\sum_{j,k} 1_{\{q_{jk}\neq 0\}} }.$$
Again,  $\hat a_{jk}$ is given by the EM algorithm, and $\tilde a_{jk}$ is given by one of the five methods.

Figure~\ref{fig:1} compares the accuracy of all the five methods. As we can see, the USP, USP-RM1, and StEM methods have much smaller MSEs than the
USP-PPG and USP-USP-RM2 methods. %The inferior performance of the USP-PPG method is due to
Since the USP-PPG method only differs from the USP method by whether using a quasi-Newton update, the inferior performance of USP-PPG implies the importance of the second-order information in the stochastic proximal gradient update.
As the USP-RM2 method only differs from USP-RM1 by their step sizes, the inferior performance of USP-RM2 is mainly due to the use of short step size.

In Figure~\ref{fig:2}, we zoom in to further compare the USP, USP-RM1, and StEM methods. First, we see that the USP method performs the best among the three, for all the parameters. As the
USP-RM1 method is the same as the USP method except for not applying Polyak-Ruppert averaging, this result suggests that averaging does improve accuracy.  Moreover, the USP method and the StEM method only differ by the way the parameters are updated, where the USP method takes a quasi-Newton proximal update, while the StEM method completely solves an optimization problem. It is likely that the way parameters are updated in the USP method yields more smoothing (i.e., averaging) than the StEM, which leads to the outperformance of the USP method.

\begin{figure}
\centering
\begin{subfigure}[b]{.32\textwidth}
  \centering
  \includegraphics[width=1\linewidth]{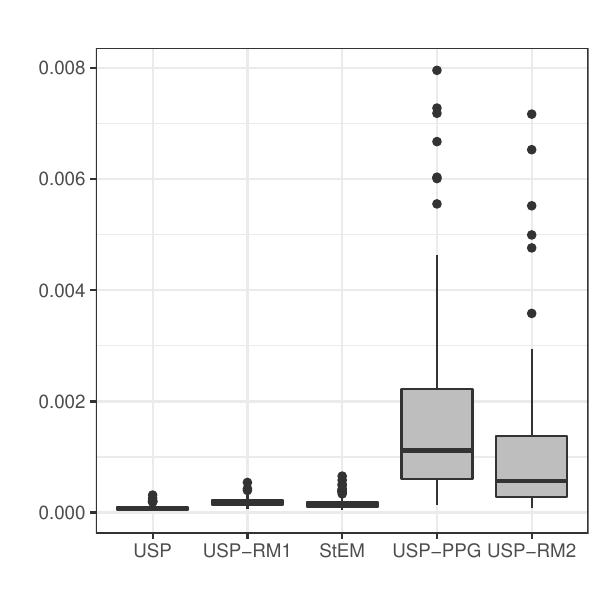}
  \caption{MSE for unrestricted loading parameters.}
\end{subfigure}%
\hspace{\fill}
\begin{subfigure}[b]{.32\textwidth}
  \centering
  \includegraphics[width=1\linewidth]{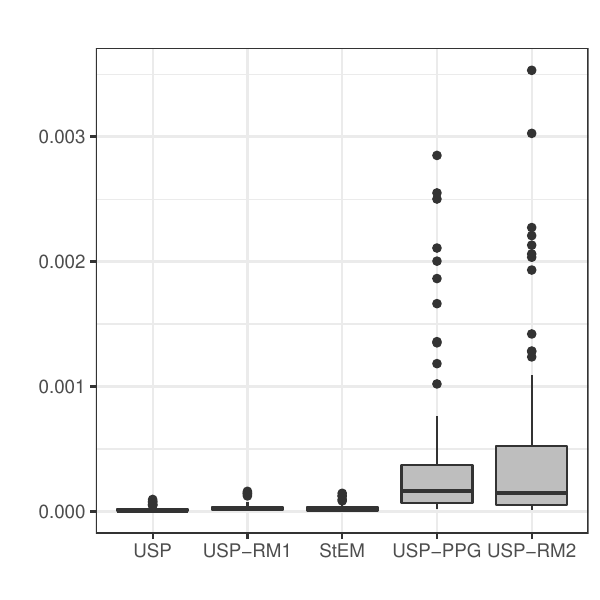}
  \caption{MSE for intercept parameters.}
\end{subfigure}
\hspace{\fill}
\begin{subfigure}[b]{.32\textwidth}
\centering
\includegraphics[width=1\linewidth]{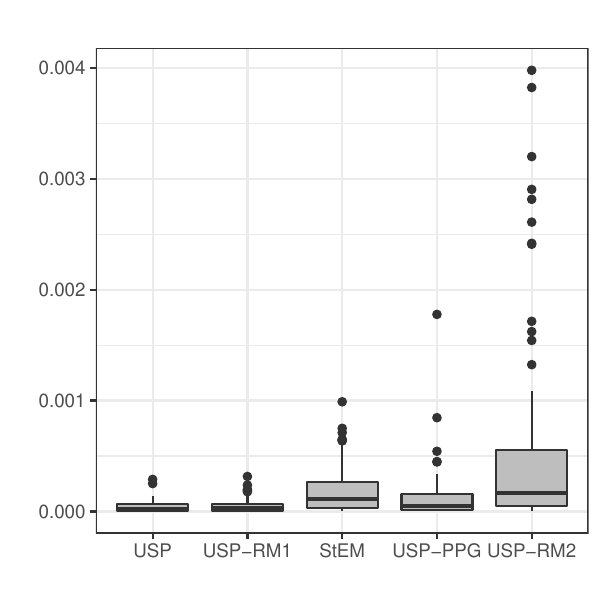}
\caption{MSE for correlation parameter $\sigma_{12}$.}
\end{subfigure}
    \caption{The boxplot of mean squared errors for estimated parameters from the five methods. }\label{fig:1}
\end{figure}

\begin{figure}
\centering
\begin{subfigure}[b]{.32\textwidth}
  \centering
  \includegraphics[width=1\linewidth]{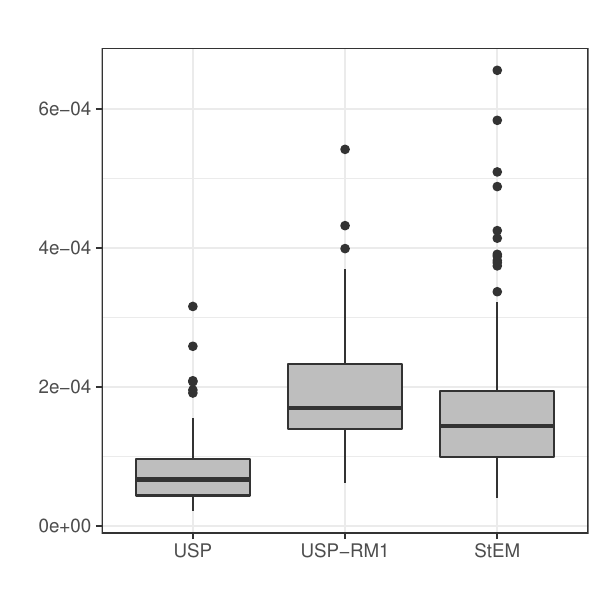}
  \caption{MSE for unrestricted loading parameters. }
\end{subfigure}%
\hspace{\fill}
\begin{subfigure}[b]{.32\textwidth}
  \centering
  \includegraphics[width=1\linewidth]{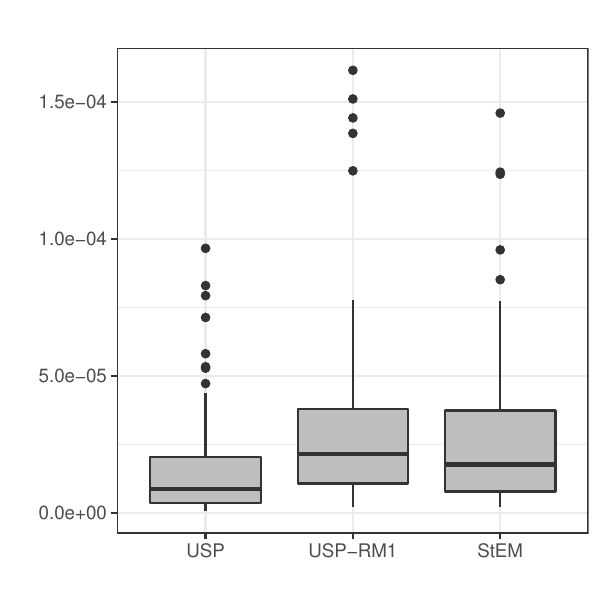}
  \caption{MSE for intercept parameters.}
\end{subfigure}
\hspace{\fill}
\begin{subfigure}[b]{.32\textwidth}
\centering
\includegraphics[width=1\linewidth]{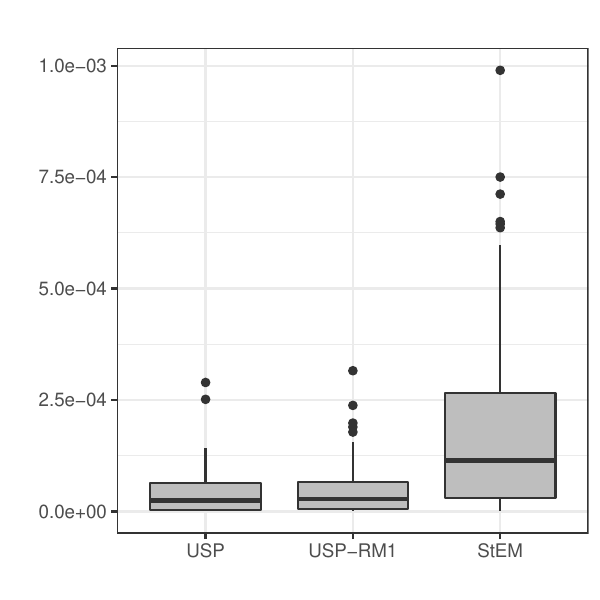}
\caption{MSE for correlation parameter $\sigma_{12}$.}
\end{subfigure}
    \caption{The boxplot of mean squared errors for estimated parameters  from `USP', `USP-RM1', and `StEM' method. }\label{fig:2}
\end{figure}

\begin{table}
\centering
\begin{tabular}{r|ccccc}
\hline
  Elapsed time & USP & USP-RM1 & StEM & USP-PPG & USP-RM2 \\
 \hline
25\% quantile  & 12.2 & 12.2 & 20.3 & 12.2 & 12.1\\
median         & 12.3 & 12.3 & 20.4 & 12.3 & 12.2\\
75\% quantile  & 12.3 & 12.4 & 20.5 & 12.3 & 12.3\\
\hline
\end{tabular}
\caption{The elapsed time (seconds) for the five methods in confirmatory IFA.}\label{tab:time_simu1}
\end{table}

On the computational efficiency, we show in Table~\ref{tab:time_simu1} the elapsed time for the five methods. `USP', `USP-RM1', `USP-PPG', and `USP-RM2' share similar computation time since their floating point operations per iteration are at the same level. `StEM' is most time consuming because an inner loop of optimization is involved in each iteration. {In summary, the proposed USP algorithm is computationally the most efficient among the five algorithms, in the sense that it achieves the highest accuracy (see Figures~\ref{fig:1} and \ref{fig:2}), within a similar or smaller amount of time (see Table~\ref{tab:time_simu1}).}

% \yc{On computation time.}

\subsection{Study II: Exploratory IFA by Regularization}
In the second study, we apply the proposed method to regularized estimation for exploratory IFA as
%the proposed USP method is used to solve an exploratory item factor analysis with $L_1$ regularization
discussed in Section~\ref{subsec:example1}. We consider
increasing sample size $N=1000, 2000, 4000,$ eighty items and five correlated latent factors (i.e., $J=80,K=5$). The true loading matrix is sparse, where the items each factor loads on are given in Table~\ref{tab:simu2_qmat}. Similar to Study I, the intercept parameters $d_j$ are drawn i.i.d. from the standard normal distribution, and the non-zero loading parameters $a_{jk}$ are drawn i.i.d. from a uniform distribution over the interval (0.5, 1.5). The elements of covariance matrix $\Sigma=(\sigma_{k,k})_{5\times 5}$ are set to be $\sigma_{k,k'}=1,$ for $k=k'$ and $\sigma_{k,k'}=0.4$ for $k\neq k'.$

\begin{table}
\small
\centering
\begin{tabular}{c|l}
\hline
  Factor & Items \\
 \hline
1 &  1-10, 51, 52, 54, 57, 61, 62, 64, 67, 71, 72, 73, 75, 76, 78\\
2 & 11-20, 51, 53, 55, 58, 61, 63, 65, 68, 71, 72, 74, 75, 77, 79\\
3 & 21-30, 52, 53, 56, 59, 62, 63, 66, 69, 71, 73, 74, 76, 77, 80\\
4 & 31-40, 54, 55, 56, 60, 64, 65, 66, 70, 72, 73, 74, 78, 79, 80\\
5 & 41-50, 57, 58, 59, 60, 67, 68, 69, 70, 75, 76, 77, 78, 79, 80\\
\hline
\end{tabular}
\caption{The sparse loading structure in the data generation IFA model.}\label{tab:simu2_qmat}
\end{table}

% For the design matrix $Q$, we have items 1 through 10 only measure the first factor, items 11 through 20 only measure the second factor, ..., items 41 through 50 only measure the fifth factor. Additionally, items 51 through 60 measure two factors (i.e., ten patterns for choosing two factors from five) and items 61 through 70 repeat that. Items 71 through 80 measure three factors (i.e., ten patterns for choosing three factors from five).

For each sample size, 50 independent datasets are generated. In the proposed algorithm, we adopt a burn-in size $\varpi=50$ and stop based on the criterion discussed in Section~\ref{sec:algorithm}, where the stopping threshold is set to be $10^{-3}$. A decreasing penalty parameter $\lambda_N = \sqrt{\log J /N}$ is used to ensure estimation consistency \citep[Chapter 6,][]{buhlmannStatisticsHighDimensionalData2011}. Other implementation details can be found in Section~\ref{subsec:exmp1}.
The algorithm in this example is implemented in {\CC} and is run on the same platform as in Study I.  {Although  a regularized EM algorithm \citep{sun2016latent} can also solve this problem, it suffers from a very high computational cost. Due to the five-dimensional numerical integrals involved, it takes a few hours to fit one dataset. We thus do not consider it here.}

We focus on the accuracy in the estimation of the loading matrix $\mathbf A = (a_{jk})_{J\times K}$.
Note that although the rotational indeterminacy issue is resolved in this regularized estimator, the loading matrix can still only be identified up to column swapping. That is, two estimates of the loading matrix
have the same objective function value, if one can be obtained by swapping the columns of the other. The following mean-squared-error measure is used that takes into account column swapping of the loading matrix
\begin{equation}\label{def:distance}
    \min_{\mathbf A'\in \mathcal P(\tilde{\mathbf A})}\left\{\frac{1}{JK}\Vert \mathbf A' - \mathbf A\Vert_F^2\right\},
\end{equation}
where  $\Vert\cdot\Vert_F$ is the Frobenius norm, $\mathbf A$ is the true loading matrix, $\tilde{\mathbf A}$ is the output of Algorithm~\ref{alg:alg1}, and $\mathcal P(\tilde{\mathbf A})$ denotes the set of $J\times K$ matrices that can be obtained by swapping the columns of $\tilde{\mathbf A}$.

Results are given in Tables~\ref{tab:mse_a_penmirt} and \ref{tab:time_penmirt}.
In Table~\ref{tab:mse_a_penmirt}, we see that the MSE for the loading matrix is quite small and decreases as the sample size grows, suggesting that consistency of the regularized estimator.
%which shows the consistency of the estimated loading parameters.
In Table~\ref{tab:time_penmirt}, the quantiles of time consumption under different sample sizes are given, which suggests the computational efficiency of the proposed method.

\begin{table}
\centering
\begin{tabular}{r|rrr}
\hline
  MSE of $\tilde{\mathbf A}$ & $N$=1000 & $N$=2000 & $N$=4000 \\
 \hline
25\% quantile  & 0.032 & 0.026 & 0.017\\
median         & 0.034 & 0.027 & 0.018\\
75\% quantile  & 0.036 & 0.027 & 0.019\\
\hline
\end{tabular}
\caption{The mean squared errors for estimated loading parameters in exploratory IFA with $L_1$ regularization.}\label{tab:mse_a_penmirt}
\end{table}

\begin{table}
\centering
\begin{tabular}{r|rrr}
\hline
  Elapsed time & $N$=1000 & $N$=2000 & $N$=4000 \\
 \hline
25\% quantile  &  9.2 & 14.8 & 25.8\\
median         &  9.7 & 15.1 & 26.6\\
75\% quantile  & 10.4 & 15.7 & 27.6\\
\hline
\end{tabular}
\caption{The elapsed time (seconds) for exploratory IFA with $L_1$ regularization.}\label{tab:time_penmirt}
\end{table}

\subsection{Study III: Restricted LCA}\label{subsec:simlca}

In this study, we apply the proposed method to the estimation of a restricted latent class model as discussed
in Section~\ref{subsec:example2}, where the optimization involves complex inequality constraints. Specifically, data are from a Deterministic Input, Noisy `And' gate (DINA) model \citep{junker2001cognitive} that is a special restricted latent class model.
Note that the DINA assumptions are only used in the data generation. We solve optimization~\eqref{eq:dcm} which is based on a general restricted latent class model considered in \cite{xu2017identifiability}
instead of the DINA model, mimicing the practical situation when the parametric form is unknown.

We consider a test consisting of twenty items (i.e., $J = 20$) that measure four binary attributes (i.e., $K=4$). Three sample sizes are considered, including $N=1000, 2000$, and 4000. The design matrix $\mathbf Q$ is given in Table~\ref{tab:simu3_qmatrix}. In addition, the guessing and slipping parameters $s_j$ and $g_j$
of the DINA model are drawn i.i.d. from a uniform distribution over the interval (0.05, 0.2), which gives the values of
$\theta_{j, \boldsymbol\alpha}$. That is,
\begin{equation*}
\theta_{j, \boldsymbol\alpha} =\left\{ \begin{array}{cl}
                                  \log( (1-s_j)/s_j), & \mbox{~if~} \boldsymbol\alpha \succeq \mathbf q_{j}, \\
                                  \log( g_j/(1-g_j)), & \mbox{~otherwise.}
                                \end{array}\right.
\end{equation*}
Finally, we let $\nu_{\boldsymbol \alpha} = 0,$ for all $\boldsymbol \alpha \in \{0, 1\}^K$, so that $\mathbb{P}(\boldsymbol\xi = \boldsymbol\alpha) = 1/2^K$.
According to the results in \cite{xu2017identifiability}, the model parameters are indentifiable, given the $\mathbf Q$-matrix in Table~\ref{tab:simu3_qmatrix}.

\begin{table}
\small
\centering
\begin{tabular}{c|cccccccccccccccccccc}
\hline
 Attribute & \multicolumn{20}{c}{ Items } \\\cline{2-21}
           & 1  & 2  & 3  & 4  & 5  & 6  & 7  & 8  & 9  & 10 & 11 & 12 & 13 & 14 & 15 & 16 & 17 & 18 & 19 & 20 \\
 \hline
$\alpha_1$ & 1 & 0 & 0 & 0 & 1 & 0 & 0 & 0 & 1 & 0 & 0 & 0 & 1 & 1 & 0 & 1 & 1 & 1 & 1 & 0 \\
$\alpha_2$ & 0 & 1 & 0 & 0 & 0 & 1 & 0 & 0 & 0 & 1 & 0 & 0 & 1 & 0 & 1 & 0 & 1 & 1 & 0 & 1 \\
$\alpha_3$ & 0 & 0 & 1 & 0 & 0 & 0 & 1 & 0 & 0 & 0 & 1 & 0 & 0 & 1 & 1 & 0 & 1 & 0 & 1 & 1 \\
$\alpha_4$ & 0 & 0 & 0 & 1 & 0 & 0 & 0 & 1 & 0 & 0 & 0 & 1 & 0 & 0 & 0 & 1 & 0 & 1 & 1 & 1\\
\hline
\end{tabular}
\caption{The design matrix $\mathbf Q$ for the restricted LCA model.}\label{tab:simu3_qmatrix}
\end{table}

For each sample size, 50 independent datasets are generated. The proposed algorithm adopts a burn-in size $\varpi=50$ and stops based on the criterion discussed in Section~\ref{sec:algorithm}, where the stopping threshold is set to be $10^{-3}$.
Other implementation details can be found in Section~\ref{subsec:exmp2}.
The following metrics are used to evaluate the estimation accuracy. For item parameters $\theta_{j, \boldsymbol\alpha}$, the MSE is calculated as
\[
    \frac{1}{J\times 2^K}\sum_{j=1}^J\sum_{\boldsymbol\alpha\in\{0,1\}^K}\left(\tilde\theta_{j,\boldsymbol\alpha}-\theta_{j,\boldsymbol\alpha}\right)^2.
\]
For structural parameters $\nu_{\boldsymbol\alpha}$, the MSE is calculated as
\[
    \frac{1}{2^K-1}\sum_{\boldsymbol\alpha\in\{0,1\}^K,\ \boldsymbol\alpha\neq \boldsymbol 0}\left(\tilde\nu_{\boldsymbol\alpha} - \nu_{\boldsymbol\alpha}\right)^2.
\]
Our results are given in Tables~\ref{tab:mse_lca} and \ref{tab:kl_lca}.
 %for the MSEs for $\theta_{j, \boldsymbol\alpha}$s and $\nu_{\boldsymbol\alpha}$s, respectively.
 As we can see, the estimation becomes more accurate  as the sample size increases for both sets of parameters. It confirms that the current model is identifiable as suggested by \cite{xu2017identifiability} and thus can be consistently estimated.

% show the estimation accuracy for estimated parameters $\tilde{\boldsymbol \theta}$ and marginal probabilities $\tilde{\boldsymbol p}$ respectively, under different sample size settings. The result shows that the estimation accuracy improves on all the parameters as the sample size increases. Study III also confirms the identifiability result in \cite{xu2017identifiability}.

\begin{table}
\centering
\begin{tabular}{r|rrr}
\hline
  MSE of $\tilde{\boldsymbol\theta}$ & $N$=1000 & $N$=2000 & $N$=4000 \\
 \hline
25\% quantile  & 0.150 & 0.062 & 0.028\\
median         & 0.182 & 0.070 & 0.031\\
75\% quantile  & 0.252 & 0.077 & 0.033\\
\hline
\end{tabular}
\caption{The MSE for item parameters $\theta_{j, \boldsymbol\alpha}$  in the restricted latent class model.}\label{tab:mse_lca}
\end{table}

\begin{table}
\centering
\begin{tabular}{r|rrr}
\hline
  MSE of $\tilde{\boldsymbol \nu}_{\boldsymbol\alpha}$ & $N$=1000 & $N$=2000 & $N$=4000 \\
 \hline
25\% quantile  & 0.028 & 0.012 & 0.005\\
median         & 0.045 & 0.018 & 0.007\\
75\% quantile  & 0.085 & 0.028 & 0.009\\
\hline
\end{tabular}
\caption{The MSE for structural parameters $\nu_{\boldsymbol\alpha}$ in the restricted latent class model.}\label{tab:kl_lca}
\end{table}

\section{Concluding Remarks}\label{sec:conclusion}

In this paper, a unified stochastic proximal optimization framework is proposed for the computation of latent variable model estimation.
This framework is very general that applies to a wide range of estimators for almost all commonly used latent variable models.
Comparing with existing stochastic optimization methods, the proposed method not only solves a wider range of problems including regularized and constrained estimators,  but also is computationally more efficient.
Theoretical properties of the proposed method are established. These results suggest that the
convergence speed of the proposed method is almost optimal in
%its convergence speed in
the minimax sense.

%can achieve the optimal convergence rate.

The power of the proposed method is shown via three examples, including confirmatory IFA, exploratory IFA by regularized estimation, and restricted latent class analysis. Specifically, the proposed method is compared with several stochastic optimization algorithms, including a stochastic-EM algorithm and a Robbin-Monro algorithm with MCMC sampling, in the simulation study of confirmatory IFA, where there is no complex constraint or penalty. Using the same starting point and the same number of iterations, the proposed one is always more accurate than its competitors.  The simulation studies on exploratory IFA and restricted latent class analysis further show the power of the proposed method for handling optimization problems with non-smooth penalties and complex inequality constraints.

 {The implementation of the proposed algorithm involves several tuning parameters. First, we need to choose a step size $\gamma_t$. Our theoretical results suggest that $\gamma_t =t^{-0.5 - \epsilon}$ for any $\epsilon\in (0,0.5]$, and a smaller $\epsilon$ leads to faster convergence.  In practice, we suggest to set $\gamma_t =t^{-0.51}$ that performs well in all our simulations. This choice of step size is very different from the choice of $\gamma_t =t^{-1}$ in the MCMC stochastic approximation algorithms. Second, a burn-in size $\varpi$ is needed. The burn-in in the proposed algorithm is similar to the burn-in in MCMC algorithms. It does not affect the asymptotic convergence of the algorithm but improves the finite sample performance. In practice, the burn-in size can be decided similarly as in MCMC algorithms by
monitoring the parameter updates using trace plots. Third, two positive constraints $c_1$ and $c_2$ are needed to regularize the second-order matrix in the scaled proximal update. Depending on the scale of each particular problem, we suggest to choose $c_1$ to be sufficiently small and $c_2$ to be sufficiently large.
It is found that the performance of our algorithm is not sensitive to their choices.
Finally, a stopping criterion is needed. %A sensible  stopping criterion is by monitoring the change in parameter updates.
We suggest to stop the iterative update by monitoring a window of successive differences in parameter updates.
}

 {The proposed framework may be improved from several aspects that are left for future investigation. First, the sampling strategy in the stochastic step needs further investigation. Although in theory any reasonable MCMC sampler can yield the convergence of the algorithm, a good sampler will lead to superior finite sample performance. More sophisticated MCMC algorithms need to be investigated regarding their performance under the proposed framework. Second, methods for parallel and distributed computing need to be developed. As we can see, many steps of Algorithm~\ref{alg:alg1} can be performed independently. This enables us to design parallel and/or distributed computing systems for solving large-scale and/or distributed versions of latent variable model estimation problems (e.g., fitting models for assessment data from online learning platforms and large-scale mental health records). Finally, the performance of the proposed method under other latent variable models needs to be investigated. For example, the proposed method can also be applied to  latent stochastic process models \citep[e.g.,][]{chow2016fitting,chen2020latent} that are useful for analyzing intensive longitudinal data. These models bring additional challenges, as stochastic processes need to be sampled in the stochastic step of our algorithm.   }

In summary, the proposed method is computationally efficient, theoretically solid, and applicable to a broad range of latent variable model inference problems. Like the EM algorithm as the standard tool for low-dimensional latent variable models, we believe that the proposed method may potentially serve as the standard approach to the estimation of high-dimensional latent variable models.

%We would like to thank Prof. Irini Moustaki for reading our dra

%\section*{Acknowledgement}

%We would like to thank Prof. Irini Moustaki for providing useful comments on a draft of this paper.

%\doublespacing
\appendix
In this supplement, we provide proofs of theoretical results in the main manuscript.
We define some notations:
\begin{itemize}
	\item $F(\bbb) = h(\bbb) + g(\bbb)$
    \item $\partial f(\boldsymbol x)=\left\{\boldsymbol z \in \mathbb{R}^{p}: f(\boldsymbol y) \geq f(\boldsymbol x)+ \boldsymbol z^\top(\boldsymbol y-\boldsymbol x)+o(\|\boldsymbol y-\boldsymbol x\|) \text { as } \boldsymbol y \rightarrow \boldsymbol x\right\}$
	\item $\mathbf G_{\bbb}(\boldsymbol\xi) = \partial H(\boldsymbol\xi,\bbb)/\partial \bbb$
	\item $\bbb_{\gamma}^+(\boldsymbol\xi) = \underset{\boldsymbol x \in \mathcal{B}}{\operatorname{argmin}}\left\{\boldsymbol G_{\bbb}(\boldsymbol\xi)^\top(\boldsymbol x-\bbb)+\g(\boldsymbol x)+\frac{1}{2 \gamma}\left\|\boldsymbol x-\bbb\right\|^{2}_{\mathbf D}\right\}$
	\item $\boldsymbol{U}_{\gamma}(\boldsymbol\xi;\bbb) = \frac{1}{\gamma}(\bbb - \bbb_{\gamma}^+(\boldsymbol\xi))$
	\item $\mathbb{E}(\cdot\mid \bbb) = \int \cdot\pi_{\bbb}(\boldsymbol\xi)d\boldsymbol\xi,$ $\pi_{\bbb}(\boldsymbol\xi)$ is the posterior density for $\boldsymbol\xi$ given $\boldsymbol y$ and $\bbb$
	\item $\mathcal{F}_{t-1} = \sigma(\bbb^{(0)},\boldsymbol\xi^{(k)},0\leq k\leq t-1)$ is a filtration of $\sigma$-field
	\item $\mathcal{C}(\mathbb{R}^+,\mathbb{R}^p)$ denotes the continuous functions from $\mathbb{R}^+$ to $\mathbb{R}^p$
	\item $\operatorname{Prox}_{\gamma, g}^{\mathbf{D}}(\boldsymbol{\beta})=\underset{\mathbf{x} \in \mathbb{R}^{n}}{\arg \min }\left\{g(\mathbf{x})+\frac{1}{2 \gamma}\|\mathbf{x}-\boldsymbol{\beta}\|_{\mathbf{D}}^{2}\right\}$
\end{itemize}

\section{Proof of Lemma 1} % (fold)
\label{proof:lemma_1}

Our stochastic updates can be re-formated as
\begin{equation}
    \bbb^{(t)} = \bbb^{(t-1)} - \gamma_t \boldsymbol{U}_{\gamma_t}(\boldsymbol\xi^{(t)};\bbb^{(t-1)}).
\end{equation}
Let $\boldsymbol U(\boldsymbol\xi;\bbb) = \partial H(\boldsymbol\xi,\bbb)/\partial\bbb+\partial g(\bbb)$ and $\boldsymbol\epsilon_{\gamma}(\boldsymbol\xi;\bbb) = \boldsymbol{U}_{\gamma}(\boldsymbol\xi;\bbb) - \mathbb{E}[\boldsymbol{U}_{\gamma}(\boldsymbol\xi;\bbb)\mid\bbb]$. By Lemma 7 of \cite{duchiStochasticMethodsComposite2018}, for $\bbb\in\mathcal{B}$ and $\epsilon > 0,$
\begin{equation*}
    \begin{aligned}
        \Vert\boldsymbol{U}_{\gamma}(\boldsymbol\xi;\bbb)\Vert&\leq \Vert \boldsymbol U(\boldsymbol\xi;\bbb)\Vert,\text{ and}\\
        \mathbb{E}[\Vert\boldsymbol\epsilon_{\gamma}(\boldsymbol\xi;\bbb)\Vert^2\mid \bbb]&\leq \mathbb{E}[\Vert\boldsymbol U_\gamma(\boldsymbol\xi;\bbb)]\Vert^2\mid\bbb]\\
        &\leq \mathbb{E}[L^2_{\epsilon}(\bbb ; \boldsymbol\xi)\mid\bbb]\\
        & = L_\epsilon(\bbb)^2,
    \end{aligned}
\end{equation*}
where $L_{\epsilon}(\bbb ; \boldsymbol\xi)=\sup _{\bbb^{\prime} \in \mathcal{B},\left\|\bbb^{\prime}-\bbb\right\| \leq \epsilon}\left\|\boldsymbol U(\boldsymbol\xi;\bbb)\right\|,$ $L_{\epsilon}(\bbb)=\mathbb{E}\left[L_{\epsilon}(\bbb ; \boldsymbol\xi)^{2}\mid\bbb\right]^{\frac{1}{2}}$. And $L_\epsilon(\bbb)<\infty$ for all $\bbb\in\mathcal{B}$ by Lemma 8 of \cite{duchiStochasticMethodsComposite2018}.
% where $\boldsymbol U(\boldsymbol\xi;\bbb) = \partial H(\boldsymbol\xi,\bbb)/\partial\bbb+\partial g(\bbb).$

% Then by Lemma 8 of \cite{duchiStochasticMethodsComposite2018},

So we have
\begin{equation}\label{noise_bound}
    \mathbb{E}[\boldsymbol\epsilon_{\gamma_t}(\boldsymbol\xi^{(t)};\bbb^{(t-1)})\mid \mathcal{F}_{t-1}] = 0,\quad \mathbb{E}[\Vert\boldsymbol\epsilon_{\gamma_t}(\boldsymbol\xi^{(t)};\bbb^{(t-1)})\Vert^2\mid \mathcal{F}_{t-1}] \leq L_\epsilon(\bbb^{(t-1)})^2,
\end{equation}
since $\bbb^{(t-1)}\in \mathcal{F}_{t-1},$ and given $\bbb^{(t-1)},$ $\boldsymbol\xi^{(t)}$ is independent of $\boldsymbol\xi^{(s)},s<t.$ Note that the independence holds true for exact sampling; For MCMC sampling, independence can also be achieved for any precision after applying `thinning' procedure.

Further since $\mathcal{B}$ is compact, there is a random variable $B$ which is finite with probability 1, such that for $t\in\mathbb{N},$ $\Vert \bbb^{(t)}\Vert\leq B.$ Together with step size condition in H5, we have
\[
    \sum_{t=1}^{\infty}\mathbb{E}\left[\gamma_t^2\Vert\boldsymbol\epsilon_{\gamma_t}(\boldsymbol\xi^{(t)};\bbb^{(t-1)})\Vert^2\mid\mathcal{F}_{t-1}\right]\leq\sum_{t=1}^{\infty}\gamma_t^2\sup_{\Vert\bbb\Vert\leq  B,\bbb\in\mathcal{B}} L_{\epsilon}(\bbb)^2\leq\infty.
\]
Thus $\gamma_t\boldsymbol{\epsilon}_{\gamma_t}(\boldsymbol\xi^{(t)},\bbb^{(t-1)})$ is a $l_2$-summable martingale difference sequence adpated to $\mathcal{F}_{t-1}.$ By standard martingale convergence result \citep[e.g., Thm. 5.3.33, ][]{demboProbabilityTheorySTAT310},  we have with probability 1, $\lim_n \sum_{t=1}^n \gamma_t \boldsymbol{\epsilon}_{\gamma_t}(\boldsymbol\xi^{(t)},\bbb^{(t-1)})$ exist and is finite.

% Further by Lemma 9 of \cite{duchiStochasticMethodsComposite2018} and the step size condition in H5, $\gamma_t\boldsymbol{\epsilon}_{\gamma_t}(\boldsymbol\xi^{(t)},\bbb^{(t-1)})$ is a square-integrable martingale difference sequence adpated to $\mathcal{F}_{t-1},$ and with probability 1, $\lim_n \sum_{t=1}^n \gamma_t \boldsymbol{\epsilon}_{\gamma_t}(\boldsymbol\xi^{(t)},\bbb^{(t-1)})$ exists and is finite.

% section proof_of_lemma_1 (end)
\section{Proof of Theorem 1}
In Theorem 1, we establish the convergence of $\bbb^{(t)}$ to a stationary point $\bbb_\infty\in \mathcal{B}^*$ using differential inclusion techniques in \cite{duchiStochasticMethodsComposite2018}. The proposed method can be viewed as a special case of the general stochastic method discussed in \cite{duchiStochasticMethodsComposite2018} with a few differences.

With additional assumptions, a similar convergence result can be derived. In what follows, we first show the linear interpolation process of our stochastic updates is asymptotically equivalent to a differential inclusion, by verifying that conditions of Theorem 2 in \cite{duchiStochasticMethodsComposite2018} hold for our case. Then, cluster points of any trajectory of the limiting differential inclusion are proved to be stationary points. Lastly, the convergence properties of our original sequence can be shown from the functional convergence.

% Next, we prove the linear interpolation of our stochastic updates is asympototically equivalent to a differential process using Theorem 2 and Theorem 3 of \cite{duchiStochasticMethodsComposite2018}.

First we define the linear interpolation of the iterates $\bbb^{(k)}$:
\[
	\bbb(t) = \bbb^{(k)} + \frac{t-t_k}{t_{k+1}-t_k}(\bbb^{(k+1)}-\bbb^{(k)})\text{ and } y(t) = \mathbb{E}[\boldsymbol{U}_{\gamma_k}(\boldsymbol\xi^{(k)};\bbb^{(k-1)})\mid\bbb^{(k-1)}]\text{ for } t\in [t_k,t_{k+1}),
\]
and $\bbb^t(\cdot) = \bbb(t+\cdot),t\in\mathbb{R}_+$ be the time-shifted process.

In order to use Theorem 2 of \cite{duchiStochasticMethodsComposite2018}, which is a general functional convergence theorem, conditions (i)-(iv) of Theorem 2 need to be verified for our case. Firstly, the boundness condition (i) holds as $\mathcal{B}$ is compact given H1; Non-summable but square-summable steps size condition (ii) holds given H5; And we have verified (iii), which is the convergence of the summation of the weighted noise sequence, holds by Lemma 1; Lastly, condition (iv) holds similarly in our case for the close-value mapping $-\boldsymbol U(\bbb)-\mathcal{N}_{\mathcal{B}}(\bbb)$ (see Lemma 10 in \cite{duchiStochasticMethodsComposite2018}), where $\boldsymbol U(\bbb) = \nabla h(\bbb)+\partial g(\bbb)$ and $\mathcal{N}_{\mathcal{B}}(\bbb) = \left\{\boldsymbol v\in\mathbb{R}^p:\langle \boldsymbol v, \bbb^\prime - \bbb\rangle, \text{for all } \bbb^\prime\in \mathcal{B}\right\}$ is the normal cone for $\mathcal{B}$ at $\bbb.$

Based on Theorem 2 and Theorem 3 of \cite{duchiStochasticMethodsComposite2018}, for any sequences $\{\tau_k\}_{k=1}^\infty,$ the function sequence $\bbb^{\tau_k}(\cdot)$ is relatively compact in $\mathcal{C}(\mathbb{R}^+,\mathbb{R}^p)$ and for any $\tau_k\rightarrow\infty,$ any limit point of $\{\bbb^{\tau_k}(\cdot)\}$ in $\mathcal{C}(\mathbb{R}^+,\mathbb{R}^p)$ satisfies
\[
	\bar\bbb(t) = \bar\bbb(0) + \int_0^t y(\tau)d\tau,\text{ where } y(\tau)\in -\boldsymbol U(\bbb(\tau))-\mathcal{N}_{\mathcal{B}}(\bbb(\tau)).
\]
So the sample path of our algorithm is asymptotically equivalent to the differential inclusion
\begin{equation}\label{differentialeq}
	\dot{\bbb} \in -\boldsymbol U(\bbb) - \boldsymbol N_{\mathcal{B}}(\bbb).
\end{equation}
and the converged differential inclusion have uniqueness and convergence properties (see Theorem 4 of \cite{duchiStochasticMethodsComposite2018}).

Finally, according to Theorem 1 of \cite{duchiStochasticMethodsComposite2018}, with probability 1,
\[
	[\underset{t}{\lim\inf} F(\bbb^{(t)}),\underset{t}{\lim\sup} F(\bbb^{(t)})] \subset F(\mathcal{B}^*).
\]
Consequently, given assumption H1, $\mathcal{B}$ is compact and $\mathcal{B}^*$ contains finite points, we have the objective value $F(\bbb^{(t)})$ converges and all cluster points of the sequence $\{\bbb^{(t)}\}$ belong to $\mathcal{B}^*.$

By further assumption that different stationary points in $\mathcal{B}^*$ have different objective values, we have $\bbb^{(t)}$ converges to a stationary point in $\mathcal{B}^*,$ with probability 1.

\section{Proof of Theorem~\ref{thm:thm2}}\label{proof_thm2}
Follow the proofs in Section 6 of \cite{atchade2017perturbed}, we first prove several lemmas, then prove Theorem~\ref{thm:thm2}.

\begin{lemma}\label{lemma:g_bound}
	If $g$ is convex and Lipschitz on $\mathcal{B}_1$ with Lipschitz constant K, or $g=I_{\mathcal{B}}(\cdot).$ For $\bbb,\bbb' \in \mathcal{B}_1,$ any $\gamma>0,$ and diagonal matrix $\mathbf D$ with diagonal entries $\delta_i\in[c_1,c_2], c_2\geq c_1>0,$ the following conditions hold.
    \begin{itemize}
        \item[(i)]
        $g\left(\operatorname{Prox}^{\mathbf D}_{\gamma, g}(\bbb)\right)-g\left(\bbb^{\prime}\right) \leq-\frac{1}{\gamma}\left\langle\operatorname{Prox}^{\mathbf D}_{\gamma, g}(\bbb)-\bbb^{\prime}, \operatorname{Prox}^{\mathbf D}_{\gamma, g}(\bbb)-\bbb\right\rangle_{\mathbf D}$.
        \item[(ii)]
        $\left\|\operatorname{Prox}^{\mathbf D}_{\gamma, g}(\bbb)-\operatorname{Prox}^{\mathbf D}_{\gamma,{} g}\left(\bbb^{\prime}\right)\right\|_{\mathbf D}^{2}+\left\|\left(\operatorname{Prox}^{\mathbf D}_{\gamma, g}(\bbb)-\bbb\right)-\left(\operatorname{Prox}^{\mathbf D}_{\gamma, g}\left(\bbb^{\prime}\right)-\bbb^{\prime}\right)\right\|_{\mathbf D}^{2} \leq\left\|\bbb-\bbb^{\prime}\right\|_{\mathbf D}^{2}$.
        \item[(iii)] $\sup _{\gamma \in(0,c_1 / L]} \sup _{\bbb \in \mathcal{B}_1} \gamma^{-1}\left\|\operatorname{Prox}_{\gamma, g}^{\mathbf D}(\bbb)-\bbb\right\|<\infty$.
    \end{itemize}
\end{lemma}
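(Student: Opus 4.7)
The plan is to exploit the first-order optimality condition for the scaled proximal operator together with convexity of $g$. Setting $\bbb^+ := \text{Prox}_{\gamma,g}^{\mathbf D}(\bbb)$, differentiation of the strongly convex objective $x \mapsto g(x) + \frac{1}{2\gamma}\Vert x-\bbb\Vert_{\mathbf D}^2$ at its unique minimizer yields the subgradient inclusion
\begin{equation*}
-\gamma^{-1}\mathbf D(\bbb^+ - \bbb)\in \partial g(\bbb^+),
\end{equation*}
and this single relation, combined with the convex subgradient inequality, drives all three parts uniformly.

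For part (i), I would apply the subgradient inequality at $\bbb^+$ with test point $\bbb'$,
\begin{equation*}
g(\bbb') \geq g(\bbb^+) + \langle -\gamma^{-1}\mathbf D(\bbb^+-\bbb),\, \bbb'-\bbb^+\rangle,
\end{equation*}
which rearranges directly to the claimed inequality after recognizing $\langle \mathbf D \cdot,\cdot\rangle = \langle\cdot,\cdot\rangle_{\mathbf D}$. For part (ii), letting $p=\text{Prox}_{\gamma,g}^{\mathbf D}(\bbb)$ and $q=\text{Prox}_{\gamma,g}^{\mathbf D}(\bbb')$, I would add the two subgradient inequalities obtained by swapping the roles of $\bbb$ and $\bbb'$ to derive the firm nonexpansiveness estimate in the $\mathbf D$-norm,
\begin{equation*}
\langle p-q,\bbb-\bbb'\rangle_{\mathbf D} \geq \Vert p-q\Vert_{\mathbf D}^2,
\end{equation*}
and then expand $\Vert(p-\bbb)-(q-\bbb')\Vert_{\mathbf D}^2 = \Vert p-q\Vert_{\mathbf D}^2 - 2\langle p-q,\bbb-\bbb'\rangle_{\mathbf D} + \Vert \bbb-\bbb'\Vert_{\mathbf D}^2$ to see that the left-hand side of (ii) collapses to at most $\Vert \bbb-\bbb'\Vert_{\mathbf D}^2$.

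For part (iii), I would split into the two cases of the hypothesis. When $g=I_{\mathcal{B}}$ and $\bbb\in \mathcal{B}_1\subset \mathcal{B}$, the prox reduces to the $\mathbf D$-projection of $\bbb$ onto $\mathcal{B}$, which fixes $\bbb$, so $\gamma^{-1}\Vert \bbb^+-\bbb\Vert=0$. When $g$ is convex and $K$-Lipschitz, Lipschitz continuity gives the subgradient bound $\Vert v\Vert\leq K$ for every $v\in\partial g$, and the optimality identity $\mathbf D(\bbb^+-\bbb)=-\gamma v$ for some such $v$, combined with $\mathbf D\succeq c_1 I$, yields $\gamma^{-1}\Vert \bbb^+-\bbb\Vert\leq K/c_1$ uniformly over $\gamma\in(0,c_1/L]$ and $\bbb\in\mathcal{B}_1$.

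The main technical care is the $g=I_{\mathcal{B}}$ branch in parts (i)--(ii): one must first observe that $\bbb^+\in\mathcal{B}$ (else the prox objective is $+\infty$), after which the subgradient inclusion is meaningful with $\partial g(\bbb^+)$ equal to the normal cone of $\mathcal{B}$ at $\bbb^+$; when $\bbb,\bbb'\in\mathcal{B}_1\subset \mathcal{B}$ both sides of (i) and (ii) then reduce to trivial identities with $\bbb^+=\bbb$ and $q=\bbb'$. The only other subtlety is guaranteeing that the Lipschitz bound applies at $\bbb^+$ in part (iii), which holds either because $g$ extends to a globally Lipschitz convex function on $\mathbb{R}^p$ (as is the case for the $L_1$ penalty used in Section~\ref{subsec:example1}) or by restricting $\gamma$ small enough that $\bbb^+$ remains inside the Lipschitz neighborhood.
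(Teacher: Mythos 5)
Your proposal is correct and follows essentially the same route as the paper's proof: part (i) from the variational characterization of the scaled prox plus convexity of $g$ (you phrase it via the subgradient inclusion $-\gamma^{-1}\mathbf D(\bbb^+-\bbb)\in\partial g(\bbb^+)$, the paper via a convex-combination perturbation and $\alpha\downarrow 0$, which are interchangeable), part (ii) by symmetrizing (i) to get firm nonexpansiveness in the $\mathbf D$-norm, part (iii) from the Lipschitz bound, and the indicator case handled trivially since the prox fixes points of $\mathcal B$. Your explicit remark that the Lipschitz bound must be available at $\operatorname{Prox}^{\mathbf D}_{\gamma,g}(\bbb)$, which may leave $\mathcal B_1$, is a point the paper's own proof passes over silently, and your resolution (global Lipschitzness of the $L_1$ penalty in the intended application) is the right one.
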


\begin{proof}[Proof of Lemma~\ref{lemma:g_bound}]\label{proof:lemma_a1}
~
	
	If $g=I_{\mathcal{B}}(\cdot),$ then for $\bbb\in\mathcal{B}_1\subset\mathcal{B},$ $\text{Prox}_{\gamma,g}^{\mathbf{D}}(\bbb) = \bbb,$ so (i)-(iii) hold.

	If $g$ is Lipschitz (thus lower semi-continuous) and convex, given $\bbb, \bbb' \in \mathcal{B}_1,$ $\gamma>0,$

	Let $\ppp = \operatorname{Prox}_{\gamma, g}^{\mathbf{D}}(\boldsymbol{\beta})=\underset{\mathbf{x} \in \mathbb{R}^{n}}{\arg \min }\left\{g(\mathbf{x})+\frac{1}{2 \gamma}\|\mathbf{x}-\boldsymbol{\beta}\|_{\mathbf{D}}^{2}\right\},$ set $\ppp_\alpha = \alpha \bbb' + (1-\alpha)\ppp,$ for $\alpha\in (0,1).$ We have
	\[
	g(\ppp) + \frac{1}{2\gamma}\Vert\ppp-\bbb\Vert_{\mathbf D}^2 \leq g(\ppp_\alpha) + \frac{1}{2\gamma}\Vert\ppp_\alpha-\bbb\Vert_{\mathbf D}^2
	\]
	Due to the convexity of $g$,
	\begin{align*}
	g(\ppp)&\leq \alpha g(\bbb') + (1-\alpha)g(\ppp) + \frac{1}{2\gamma}\Vert\alpha\bbb' - \alpha\ppp + \ppp-\bbb\Vert_{\mathbf D}^2 - \frac{1}{2\gamma}\Vert\ppp-\bbb\Vert_{\mathbf D}^2\\
	&\leq \alpha g(\bbb') + (1-\alpha)g(\ppp) - \frac{\alpha}{\gamma}\langle \ppp-\bbb', \ppp-\bbb\rangle_{\mathbf D} + \frac{\alpha^2}{2\gamma}\Vert\bbb'-\ppp\Vert_{\mathbf D}^2.
	\end{align*}
	So
	\[
	g(\ppp) - g(\bbb')\leq -\frac{1}{\gamma}\langle\ppp-\bbb',\ppp-\bbb\rangle_{\mathbf D} + \frac{\alpha}{2\gamma}\Vert\bbb'-\ppp\Vert_{\mathbf D}^2
	\]
	Let $\alpha \downarrow 0,$ we have the desired inequality (i).

	Further let $\qqq = \operatorname{Prox}_{\gamma, g}^{\mathbf{D}}(\bbb'),$ by (i), we have
	\begin{align*}
	g(\ppp) + \frac{1}{\gamma}\langle\ppp-\qqq,\ppp-\bbb\rangle_{\mathbf D}&\leq g(\qqq)\\
	g(\qqq) + \frac{1}{\gamma}\langle\qqq - \ppp,\qqq-\bbb'\rangle_{\mathbf D} &\leq g(\ppp)
	\end{align*}

	So
	\[
	0 \leq \langle\ppp-\qqq,\bbb-\bbb' -\ppp+\qqq)\rangle_{\mathbf D},
	\]
 	and
 	\begin{align*}
 		\Vert \ppp-\qqq\Vert_{\mathbf D}^2&\leq \langle\ppp-\qqq, \bbb-\bbb'\rangle_{\mathbf D}\\
 		\Vert (\ppp-\bbb) - (\qqq-\bbb')\Vert_{\mathbf D}^2&\leq \langle (\bbb-\ppp)-(\bbb'-\qqq),\bbb-\bbb'\rangle_{\mathbf D}
 	\end{align*}
 	Summation of the above two inequations yeilds (ii).

Given $g$ is proper convex, Lipschitz on $\mathcal{B}_1$ with Lipschitz constant is $K$ and (i), we have
\[
	0\leq \gamma^{-1}\left\|\operatorname{Prox}^{\mathbf D}_{\gamma, g}(\bbb)-\bbb\right\|_{\mathbf D}^2\leq g(\bbb) - g(\operatorname{Prox}^{\mathbf D}_{\gamma, g}(\bbb)) \leq K \left\|\operatorname{Prox}^{\mathbf D}_{\gamma, g}(\bbb)-\bbb\right\|_{\mathbf D}.
\]

Thus (iii) holds.

\end{proof}

\begin{lemma}\label{lemma2}
Assume H7 and $\gamma\in (0,c_1/L],$ for $\bbb,\bbb',\boldsymbol\xi\in \mathcal{B}_1,$
\begin{equation}\label{eq:a3}
	\begin{aligned}
-2 \gamma\left(F\left(\operatorname{Prox}^{\mathbf D}_{\gamma, g}(\bbb)\right)-F\left(\bbb^{\prime}\right)\right) & \geq\left\|\operatorname{Prox}^{\mathbf D}_{\gamma, g}(\bbb)-\bbb^{\prime}\right\|_{\mathbf D}^{2} \\
&+2\left\langle\operatorname{Prox}^{\mathbf D}_{\gamma, g}(\bbb)-\bbb^{\prime}, \boldsymbol\xi-\gamma\mathbf D^{-1} \nabla h(\boldsymbol\xi)-\bbb\right\rangle_{\mathbf D}-\left\|\bbb^{\prime}-\boldsymbol\xi\right\|_{\mathbf D}^{2}
\end{aligned}
\end{equation}
\end{lemma}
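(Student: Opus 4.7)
The plan is to prove this by the standard proximal gradient ``descent lemma'' argument, adapted to the scaled (weighted) inner product $\langle \cdot,\cdot\rangle_{\mathbf D}$, and then to re-arrange the result via a polarization identity. Throughout write $\ppp := \mathrm{Prox}^{\mathbf D}_{\gamma,g}(\bbb)$.

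First I would handle the non-smooth part. Applying H7(i) with the second argument equal to $\bbb'$ gives
\[
g(\ppp) - g(\bbb') \;\le\; -\frac{1}{\gamma}\,\bigl\langle \ppp-\bbb',\,\ppp-\bbb\bigr\rangle_{\mathbf D}.
\]

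Next I would handle the smooth part using H6. Because $\nabla h$ is $L$-Lipschitz on $\mathcal{B}_1$, the standard quadratic upper bound gives
\[
h(\ppp) \;\le\; h(\boldsymbol\xi) + \nabla h(\boldsymbol\xi)^\top(\ppp-\boldsymbol\xi) + \tfrac{L}{2}\|\ppp-\boldsymbol\xi\|^2,
\]
and convexity of $h$ on $\mathcal{B}_1$ (part of H6) gives
\[
h(\bbb') \;\ge\; h(\boldsymbol\xi) + \nabla h(\boldsymbol\xi)^\top(\bbb'-\boldsymbol\xi).
\]
Subtracting yields $h(\ppp)-h(\bbb') \le \nabla h(\boldsymbol\xi)^\top(\ppp-\bbb') + \tfrac{L}{2}\|\ppp-\boldsymbol\xi\|^2$. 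The key step here is to convert the ordinary $\ell_2$-norm into the $\mathbf D$-norm: since the eigenvalues of $\mathbf D$ lie in $[c_1,c_2]$, we have $\|\cdot\|^2 \le c_1^{-1}\|\cdot\|_{\mathbf D}^2$, and since $\gamma\in(0,c_1/L]$ we get $L/(2c_1) \le 1/(2\gamma)$, hence
\[
h(\ppp)-h(\bbb') \;\le\; \nabla h(\boldsymbol\xi)^\top(\ppp-\bbb') + \frac{1}{2\gamma}\|\ppp-\boldsymbol\xi\|_{\mathbf D}^2.
\]

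Now I would add the two displayed inequalities and multiply by $-2\gamma$. Using the identity $\gamma\,\nabla h(\boldsymbol\xi)^\top(\ppp-\bbb') = \langle \gamma\mathbf D^{-1}\nabla h(\boldsymbol\xi),\,\ppp-\bbb'\rangle_{\mathbf D}$, this produces
\[
-2\gamma\bigl(F(\ppp)-F(\bbb')\bigr) \;\ge\; 2\bigl\langle \ppp-\bbb',\,\ppp-\bbb-\gamma\mathbf D^{-1}\nabla h(\boldsymbol\xi)\bigr\rangle_{\mathbf D} \;-\; \|\ppp-\boldsymbol\xi\|_{\mathbf D}^2.
\]

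The final step, which I expect to be the only mildly tricky one, is to recast this into the form claimed in \eqref{eq:a3}, where the squared norm on the right is $\|\ppp-\bbb'\|_{\mathbf D}^2$ rather than $\|\ppp-\boldsymbol\xi\|_{\mathbf D}^2$. I would use the polarization identity in the $\mathbf D$-inner product,
\[
2\bigl\langle \ppp-\bbb',\,\ppp-\boldsymbol\xi\bigr\rangle_{\mathbf D} \;=\; \|\ppp-\bbb'\|_{\mathbf D}^2 + \|\ppp-\boldsymbol\xi\|_{\mathbf D}^2 - \|\bbb'-\boldsymbol\xi\|_{\mathbf D}^2,
\]
after splitting $\ppp-\bbb = (\ppp-\boldsymbol\xi)+(\boldsymbol\xi-\bbb)$ inside the inner product on the right-hand side. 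The $\|\ppp-\boldsymbol\xi\|_{\mathbf D}^2$ term then cancels against the negative one from the descent step, and collecting the remaining pieces yields precisely \eqref{eq:a3}. The only obstacle worth flagging is keeping the signs and the $\gamma\mathbf D^{-1}\nabla h(\boldsymbol\xi)$ summand in the correct inner product during this rearrangement; the substantive analytic content is already contained in the Lipschitz/convexity inequality for $h$ combined with the step-size constraint $\gamma \le c_1/L$.
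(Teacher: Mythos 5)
Your proposal is correct and follows essentially the same route as the paper's proof: the descent lemma for $\nabla h$ (with the eigenvalue bound on $\mathbf D$ and $\gamma\le c_1/L$ converting the Euclidean quadratic term to $\frac{1}{2\gamma}\|\cdot\|_{\mathbf D}^2$), convexity of $h$ at $\boldsymbol\xi$, the proximal inequality H7(i) for $g$, and a final polarization rearrangement that the paper performs implicitly in its last display. The only cosmetic difference is that you make the norm-conversion and polarization steps explicit, and (correctly) note that the argument also invokes H6, which the lemma statement omits from its hypothesis list.
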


\begin{proof}[Proof of Lemma~\ref{lemma2}]
~

Using descent lemma of Lipschitz function $\nabla h,$ for any $\gamma^{-1}\geq L/c_1,$
\begin{equation*}
	h(\ppp)-h\left(\boldsymbol\xi\right) \leq\left\langle\mathbf D^{-1}\nabla h\left(\boldsymbol\xi\right), \ppp-\boldsymbol\xi\right\rangle_{\mathbf D}+\frac{1}{2 \gamma}\left\|\ppp-\boldsymbol\xi\right\|_{\mathbf D}^{2}
\end{equation*}
Since $h$ is convex, so $h(\boldsymbol\xi) + \langle\nabla h(\boldsymbol\xi),\bbb'-\boldsymbol\xi\rangle\leq h(\bbb'),$
\[
	h(\ppp) - h(\bbb') \leq \left\langle\mathbf D^{-1}\nabla h(\boldsymbol\xi), \ppp-\bbb'\right\rangle_{\mathbf D} + \frac{1}{2\gamma}\Vert\ppp-\boldsymbol\xi\Vert^2_{\mathbf D}
\]
And
\[
	g(\ppp)-g(\bbb') \leq -\frac{1}{\gamma}\langle\ppp-\bbb',\ppp-\bbb\rangle_{\mathbf D}
\]
Summation of the above two, we have,
\[
	F(\ppp) - F(\bbb')\leq -\frac{1}{\gamma}\left\langle \ppp-\bbb',\boldsymbol\xi-\gamma\mathbf D^{-1}\nabla h(\boldsymbol\xi) - \bbb\right\rangle_{\mathbf D} + \frac{1}{2\gamma}\Vert\bbb'-\boldsymbol\xi\Vert_{\mathbf D}^2 - \frac{1}{2\gamma}\Vert\ppp-\bbb'\Vert_{\mathbf D}^2
\]
\end{proof}

\begin{lemma}\label{lemma3}
Let
\begin{align*}
	T_{\gamma}(\bbb) &= \operatorname{Prox}_{\gamma, g}^{\mathbf D}(\bbb-\gamma \mathbf D^{-1} \nabla h(\bbb)),\\
	S_{\gamma}(\bbb) &= \operatorname{Prox}_{\gamma, g}^{\mathbf D}(\bbb-\gamma \mathbf D^{-1}\mathbf G_\bbb(\boldsymbol\xi)),\\
	\boldsymbol\eta &= \mathbf D^{-1}\mathbf G_\bbb(\boldsymbol\xi) - \mathbf D^{-1}\nabla h(\bbb).
\end{align*}
Then for $\bbb\in \mathcal{B}_1,$ and $\gamma>0,$
\begin{equation}\label{eq:a4}
 	\left\|T_{\gamma}(\bbb)-S_{\gamma}(\bbb)\right\|_{\mathbf D} \leq \gamma\|\boldsymbol\eta\|_{\mathbf D}
 \end{equation}
\end{lemma}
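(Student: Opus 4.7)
The plan is to read off the inequality as a direct consequence of the non-expansiveness of the scaled proximal operator, which is already packaged inside assumption H7(ii). Both $T_\gamma(\bbb)$ and $S_\gamma(\bbb)$ are values of $\operatorname{Prox}^{\mathbf D}_{\gamma, g}$ applied at two nearby points, so the idea is simply to identify those two points, measure the $\mathbf D$-distance between them, and then quote H7(ii).

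First I would set $\bbb_1 = \bbb - \gamma \mathbf D^{-1}\nabla h(\bbb)$ and $\bbb_2 = \bbb - \gamma \mathbf D^{-1} \mathbf G_{\bbb}(\boldsymbol\xi)$, so that $T_\gamma(\bbb) = \operatorname{Prox}^{\mathbf D}_{\gamma,g}(\bbb_1)$ and $S_\gamma(\bbb) = \operatorname{Prox}^{\mathbf D}_{\gamma,g}(\bbb_2)$. Then by direct subtraction, $\bbb_1 - \bbb_2 = \gamma\bigl(\mathbf D^{-1}\mathbf G_{\bbb}(\boldsymbol\xi) - \mathbf D^{-1}\nabla h(\bbb)\bigr) = \gamma\boldsymbol\eta$, and consequently $\|\bbb_1 - \bbb_2\|_{\mathbf D} = \gamma\|\boldsymbol\eta\|_{\mathbf D}$.

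Next I would invoke assumption H7(ii). Dropping the non-negative term $\|(\operatorname{Prox}^{\mathbf D}_{\gamma,g}(\bbb_1) - \bbb_1) - (\operatorname{Prox}^{\mathbf D}_{\gamma,g}(\bbb_2) - \bbb_2)\|_{\mathbf D}^2$ from the left-hand side of H7(ii) immediately yields the firm non-expansiveness of $\operatorname{Prox}^{\mathbf D}_{\gamma,g}$ with respect to $\|\cdot\|_{\mathbf D}$, namely $\|\operatorname{Prox}^{\mathbf D}_{\gamma,g}(\bbb_1) - \operatorname{Prox}^{\mathbf D}_{\gamma,g}(\bbb_2)\|_{\mathbf D} \le \|\bbb_1 - \bbb_2\|_{\mathbf D}$. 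Combining with the identity from the previous step gives $\|T_\gamma(\bbb) - S_\gamma(\bbb)\|_{\mathbf D} \le \gamma\|\boldsymbol\eta\|_{\mathbf D}$, which is \eqref{eq:a4}.

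There is no real obstacle here; the claim is essentially a one-line corollary of H7(ii). The only thing to be careful about is that H7(ii) is stated for $\bbb, \bbb' \in \mathcal{B}_1$, so I would briefly note that the hypotheses of the lemma ($\bbb \in \mathcal{B}_1$ and $\gamma > 0$) together with the definitions ensure the arguments $\bbb_1$ and $\bbb_2$ fall within the regime where H7(ii) is applicable, or alternatively observe that the firm non-expansiveness underlying H7(ii) is a property of the proximal operator on all of $\mathbb{R}^p$ for any convex lower-semicontinuous $g$ and positive definite $\mathbf D$.
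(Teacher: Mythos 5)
Your proof is correct and is essentially identical to the paper's: the paper also writes both quantities as values of $\operatorname{Prox}^{\mathbf D}_{\gamma,g}$ at the two points differing by $\gamma\boldsymbol\eta$ and invokes the firm non-expansiveness in property (ii) (established in Lemma~C.1, which is the verification of H7(ii)). Your closing remark about the arguments possibly lying outside $\mathcal{B}_1$ is a fair caveat that the paper glosses over as well, and your fallback — that non-expansiveness of the proximal map holds globally for convex lower-semicontinuous $g$ — is the right way to dispose of it.
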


\begin{proof}[Proof of Lemma~\ref{lemma3}]
\begin{align*}
	\left\|T_{\gamma}(\bbb)-S_{\gamma}(\bbb)\right\|_{\mathbf D} &= \Vert\operatorname{Prox}_{\gamma, g}^{\mathbf D}(\bbb-\gamma \mathbf D^{-1} \nabla h(\bbb)) - \operatorname{Prox}_{\gamma, g}^{\mathbf D}(\bbb-\gamma \mathbf D^{-1}\mathbf G)\Vert_{\mathbf D}\\
	&\leq \Vert\gamma\mathbf D^{-1}\mathbf G - \gamma\mathbf D^{-1}\nabla h(\bbb)\Vert_{\mathbf D}\\
	&\leq \gamma\Vert\boldsymbol \eta\Vert_{\mathbf D},
\end{align*}
where the first inequality follows from Lemma~\ref{lemma:g_bound}-(ii).

\end{proof}

\begin{lemma}\label{lemma:supremum_finite}
Assume H4 and H8. Then $\sup_t \mathbb{E}[W^p(\boldsymbol\xi_t)]<\infty.$
\end{lemma}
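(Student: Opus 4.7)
The plan is to exploit the geometric drift condition in H8 together with the Markov structure given in H4 to control the moments $\mathbb{E}[W^m(\boldsymbol\xi^{(t)})]$ recursively. The target exponent should be read as $m$ (the exponent appearing in the drift inequality of H8), since a bound on $\mathbb{E}[W^m]$ immediately yields bounds on $\mathbb{E}[W^p]$ for any $1 \leq p \leq m$ by Jensen's inequality.

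First I would observe that H4 makes $\boldsymbol\xi^{(t)}$ a sample from the Markov kernel $P_{\bbb^{(t-1)}}(\boldsymbol\xi^{(t-1)}, \cdot)$, so that for any measurable $f \geq 0$,
\begin{equation*}
\mathbb{E}\!\left[f(\boldsymbol\xi^{(t)}) \,\middle|\, \mathcal{F}_{t-1}\right] = P_{\bbb^{(t-1)}} f(\boldsymbol\xi^{(t-1)}).
\end{equation*}
Applied to $f = W^m$ with $\bbb^{(t-1)} \in \mathcal{B}_1$, the drift part of H8 gives
\begin{equation*}
\mathbb{E}\!\left[W^m(\boldsymbol\xi^{(t)}) \,\middle|\, \mathcal{F}_{t-1}\right] \;\leq\; \lambda\, W^m(\boldsymbol\xi^{(t-1)}) + b,
\end{equation*}
with $\lambda \in (0,1)$ and $b < \infty$. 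Taking unconditional expectations and iterating this one-step recursion from some reference time $t_0$ yields the geometric bound
\begin{equation*}
\mathbb{E}\!\left[W^m(\boldsymbol\xi^{(t)})\right] \;\leq\; \lambda^{t-t_0}\,\mathbb{E}\!\left[W^m(\boldsymbol\xi^{(t_0)})\right] + \frac{b}{1-\lambda},
\end{equation*}
which is uniformly finite in $t \geq t_0$ as soon as $\mathbb{E}[W^m(\boldsymbol\xi^{(t_0)})] < \infty$.

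The second step is to justify that we may take $t_0 = 0$ (or any fixed finite value) for the purpose of proving a uniform bound. Since Theorem~\ref{thm:thm1} shows $\bbb^{(t)} \to \bbb_*$ almost surely, there is a finite random $t_0$ after which $\bbb^{(t)} \in \mathcal{B}_1$ for all $t \geq t_0$, so the drift chain applies from $t_0$ onward. For the purposes of the local analysis underlying Theorem~\ref{thm:thm2}, one works on the event that the iterates have entered $\mathcal{B}_1$ (as is standard in the stochastic approximation literature cited after H9), and the initial segment $\{0,1,\ldots,t_0-1\}$ contains finitely many terms, each with finite expectation of $W^m(\boldsymbol\xi^{(t)})$ (taking $\boldsymbol\xi^{(0)}$ fixed and iterating H8 once per step, whose right-hand side is finite because $W(\boldsymbol\xi^{(0)}) < \infty$ by hypothesis that $W \geq 1$ is real-valued). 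Combining the geometric tail bound with this finite initial segment gives $\sup_t \mathbb{E}[W^m(\boldsymbol\xi^{(t)})] < \infty$.

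The main obstacle — really the only subtlety — is that the drift condition in H8 is stated only for $\bbb \in \mathcal{B}_1$, whereas the algorithm's early iterates may lie outside this neighborhood. I would handle this either by (i) invoking the almost sure convergence from Theorem~\ref{thm:thm1} and splitting the trajectory at the (random but a.s. finite) first entry time into $\mathcal{B}_1$, bounding the pre-entry segment trivially by compactness of $\mathcal{B}$ and continuity arguments, or (ii) as in \cite{atchade2017perturbed}, restricting attention to the local analysis after $\bbb^{(t)}$ has entered $\mathcal{B}_1$, which is all that is needed for the subsequent proof of Theorem~\ref{thm:thm2}. Either route reduces the problem to the clean one-step drift-plus-induction argument above.
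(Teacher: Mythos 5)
Your proposal is correct and follows essentially the same route as the paper: condition on $\mathcal{F}_{t-1}$, apply the drift inequality of H8 through the Markov kernel of H4, and iterate the one-step recursion $\mathbb{E}[W^m(\boldsymbol\xi^{(t)})]\leq\lambda\,\mathbb{E}[W^m(\boldsymbol\xi^{(t-1)})]+b$ by induction. The only difference is that you explicitly address the fact that H8 is stated only for $\bbb\in\mathcal{B}_1$ (splitting off the pre-entry segment), a point the paper's proof applies silently; your reading of the exponent $p$ as $m$ is also the intended one.
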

\begin{proof}[Proof of Lemma~\ref{lemma:supremum_finite}]
As the conditional distribution of $\boldsymbol\xi_t$ given $\mathcal{F}_{t-1}$ is $P_{\bbb^{(t-1)}}(\boldsymbol\xi_{t-1},\cdot),$ so
\[
	\mathbb{E}[W^p(\boldsymbol\xi_t)] = \mathbb{E}[\mathbb{E}[W^p(\boldsymbol\xi_t)|\mathcal{F}_{t-1}]] = \mathbb{E}[P_{\bbb^{(t-1)}}W^p(\boldsymbol\xi_{t-1})]\leq \lambda\mathbb{E}[W^p(\boldsymbol\xi_{t-1})]+b.
\]
And by induction the proof is concluded.
\end{proof}
\begin{lemma}\label{lemma:approx_bound}
	Assume H1, H4, H7-(ii) and H8. There exist a constant $C$ such that w.p.1, for all $t\geq 0,$ $\Vert\boldsymbol\eta_t\Vert\leq CW(\boldsymbol\xi^{(t)}).$
\end{lemma}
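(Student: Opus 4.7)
The plan is to bound $\|\boldsymbol\eta_t\|$ by controlling each of the two terms $\mathbf D^{-1}\mathbf G_{\bbb^{(t-1)}}(\boldsymbol\xi^{(t)})$ and $\mathbf D^{-1}\nabla h(\bbb^{(t-1)})$ separately, and then using $W\geq 1$ to absorb additive constants into a multiple of $W(\boldsymbol\xi^{(t)})$. The key algebraic identity to exploit is Fisher's identity, which says that the gradient of the marginal log-likelihood equals the posterior expectation of the complete-data score, i.e., $\nabla h(\bbb)=\pi_{\bbb}(\mathbf G_{\bbb})=\int \mathbf G_{\bbb}(\boldsymbol\xi)\,\pi_{\bbb}(\boldsymbol\xi)\,d\boldsymbol\xi$. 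I would first verify this under H2, which gives the integrability needed to justify differentiating under the integral sign.

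Next I would use H8 in two ways. First, $\sup_{\bbb\in\mathcal{B}_1}|\mathbf G_{\bbb}|_W<\infty$ gives a finite constant $C_1$ with $\|\mathbf G_{\bbb}(\boldsymbol\xi)\|\leq C_1 W(\boldsymbol\xi)$ for every $\bbb\in\mathcal{B}_1$; this directly controls the first term. Second, the drift inequality $P_{\bbb}W^m\leq \lambda W^m+b$ combined with invariance $\pi_{\bbb}P_{\bbb}=\pi_{\bbb}$ yields $\pi_{\bbb}(W^m)\leq b/(1-\lambda)$, and hence by Jensen's inequality a uniform bound $\sup_{\bbb\in\mathcal{B}_1}\pi_{\bbb}(W)\leq C_2$ for some finite $C_2$. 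Combining these,
\begin{equation*}
\|\nabla h(\bbb)\|\;=\;\|\pi_{\bbb}(\mathbf G_{\bbb})\|\;\leq\;\pi_{\bbb}(\|\mathbf G_{\bbb}\|)\;\leq\;C_1 \pi_{\bbb}(W)\;\leq\;C_1 C_2,
\end{equation*}
uniformly in $\bbb\in\mathcal{B}_1$.

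Finally, since $\mathbf D^{(t)}$ is diagonal with entries in $[c_1,c_2]$, the operator norm of $(\mathbf D^{(t)})^{-1}$ is at most $1/c_1$. Putting everything together,
\begin{equation*}
\|\boldsymbol\eta_t\|\;\leq\;\tfrac{1}{c_1}\bigl(\|\mathbf G_{\bbb^{(t-1)}}(\boldsymbol\xi^{(t)})\|+\|\nabla h(\bbb^{(t-1)})\|\bigr)\;\leq\;\tfrac{1}{c_1}\bigl(C_1 W(\boldsymbol\xi^{(t)})+C_1 C_2\bigr).
\end{equation*}
Because $W\geq 1$ pointwise, the constant $C_1 C_2$ can be absorbed to give $\|\boldsymbol\eta_t\|\leq C\, W(\boldsymbol\xi^{(t)})$ with $C=C_1(1+C_2)/c_1$. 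The bound holds almost surely for every $t$, with the caveat that the argument needs $\bbb^{(t-1)}$ to lie in $\mathcal{B}_1$; this is exactly the local regime in which the conditions H7--H9 are assumed, consistent with the setup inherited from Theorem~\ref{thm:thm1}. The only non-routine step is Fisher's identity and the uniform-in-$\bbb$ bound on $\pi_{\bbb}(W)$; the rest is norm manipulations. H7-(ii) is not used here, so I would simply note that it is listed for consistency with the global assumption block rather than being invoked in this lemma.
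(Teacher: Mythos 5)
Your proof is correct and uses the same skeleton as the paper's: triangle inequality to split $\boldsymbol\eta_t$ into the stochastic-gradient term and the $\nabla h$ term, the bound $\|(\mathbf D^{(t)})^{-1}\|\leq 1/c_1$, the $|\mathbf G_\bbb|_W$ bound from H8 for the first term, and $W\geq 1$ to absorb the constant. Where you genuinely diverge is in bounding $\sup_{\bbb\in\mathcal B_1}\|\nabla h(\bbb)\|$. The paper simply asserts that this supremum is finite ``as $\nabla h$ is Lipschitz'' on the compact set $\mathcal B_1$ --- i.e., it leans on H6, which is notably \emph{not} in the lemma's stated hypothesis list (H1, H4, H7-(ii), H8). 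You instead derive the bound from the listed hypotheses: Fisher's identity $\nabla h(\bbb)=\pi_\bbb(\mathbf G_\bbb)$, the $|\mathbf G_\bbb|_W$ bound, and the drift condition plus invariance to get $\sup_\bbb \pi_\bbb(W)<\infty$. This buys you a proof that is actually consistent with the lemma's own assumption block, which is a real improvement in bookkeeping; the cost is two extra technical obligations you gloss over slightly: (i) Fisher's identity requires justifying differentiation under the integral (you correctly flag H2 for this, though H2 is also not in the stated list), and (ii) rearranging $\pi_\bbb(W^m)\leq\lambda\pi_\bbb(W^m)+b$ into $\pi_\bbb(W^m)\leq b/(1-\lambda)$ presupposes $\pi_\bbb(W^m)<\infty$, which needs a standard truncation argument to be airtight. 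You are also right that H7-(ii) is not invoked by either proof. None of these caveats is a genuine gap; both routes reach the same bound $\|\boldsymbol\eta_t\|\leq C W(\boldsymbol\xi^{(t)})$ on the event that the iterates remain in $\mathcal B_1$.
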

\begin{proof}[Proof of Lemma~\ref{lemma:approx_bound}]
By definition,
\[
	\Vert\boldsymbol\eta_t\Vert = \Vert (\mathbf D^{(t)})^{-1}\mathbf G_{\bbb^{(t-1)}}(\boldsymbol\xi^{(t)}) - (\mathbf D^{(t)})^{-1}\nabla h(\bbb^{(t-1)})\Vert \leq \frac{1}{c_1}(\sup_{\bbb\in\mathcal{B}_1}\vert\mathbf G_\bbb\vert_W)W(\boldsymbol\xi^{(t)}) + \frac{1}{c_1} \sup_{\bbb\in\mathcal{B}_1}\Vert\nabla h(\bbb)\Vert.
\]
And the result follows as $\nabla h$ is Lipschitz and $W\geq 1.$
\end{proof}

% \begin{lemma}\label{lemma:prox_bound}
% Assume $g$ is proper convex and Lipschitz on $\mathcal{B}_1$ with Lipschitz constant is $K$. Then
% \begin{equation}
% 	\left\|\operatorname{Prox}_{\gamma, g}^{\mathbf D}(\bbb)-\bbb\right\|_{\mathbf D} \leq K \gamma
% \end{equation}
% \end{lemma}
% \begin{proof}[Proof of Lemma~\ref{lemma:prox_bound}]\label{proof:lemma:prox_bound}
% Based on \ref{eq:a1}, we have
% \[
% 	0\leq \gamma^{-1}\left\|\operatorname{Prox}^{\mathbf D}_{\gamma, g}(\bbb)-\bbb\right\|_{\mathbf D}^2\leq g(\bbb) - g(\operatorname{Prox}^{\mathbf D}_{\gamma, g}(\bbb)) \leq K \left\|\operatorname{Prox}^{\mathbf D}_{\gamma, g}(\bbb)-\bbb\right\|_{\mathbf D}
% \]
% \end{proof}

\begin{lemma}\label{lemma:approx_bound3}
Assume H1, H4, H5 and H8. If $a_t\geq 0,$ for $t\geq 1,$ there exist a constant C such that
	\begin{equation}
	 	\left\Vert\sum_{t=1}^n a_t\Vert\boldsymbol\eta_t\Vert_{\mathbf D^{(t)}}^2\right\Vert_{L_2}\leq C\sum_{t=1}^n a_t
	\end{equation}
\end{lemma}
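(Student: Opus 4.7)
The plan is to reduce this to a uniform-in-$t$ second-moment bound on $\Vert\boldsymbol\eta_t\Vert_{\mathbf D^{(t)}}^2$. The key observation is that the claimed inequality is stated in the $L_2$ norm of a nonnegative-coefficient linear combination, so Minkowski's inequality applies directly: since $a_t \geq 0$,
\begin{equation*}
\left\Vert\sum_{t=1}^{n} a_t\Vert\boldsymbol\eta_t\Vert_{\mathbf D^{(t)}}^{2}\right\Vert_{L_2} \leq \sum_{t=1}^{n} a_t \left\Vert\,\Vert\boldsymbol\eta_t\Vert_{\mathbf D^{(t)}}^{2}\,\right\Vert_{L_2}.
\end{equation*}
Thus the whole statement will follow if I can show $\sup_{t\geq 1}\mathbb{E}[\Vert\boldsymbol\eta_t\Vert_{\mathbf D^{(t)}}^{4}]<\infty$.

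Next I would convert the $\mathbf D^{(t)}$-norm to the Euclidean norm. Because $\mathbf D^{(t)}$ is diagonal with entries truncated into $[c_1,c_2]$, we have $\Vert v\Vert_{\mathbf D^{(t)}}^{2} \leq c_2 \Vert v\Vert^{2}$ for every $v\in\mathbb R^p$, so $\Vert\boldsymbol\eta_t\Vert_{\mathbf D^{(t)}}^{4} \leq c_2^{2}\Vert\boldsymbol\eta_t\Vert^{4}$. Then Lemma C.5 (the preceding \texttt{lemma:approx\_bound}) gives a deterministic constant $C_0$ with $\Vert\boldsymbol\eta_t\Vert \leq C_0\, W(\boldsymbol\xi^{(t)})$ almost surely, hence $\Vert\boldsymbol\eta_t\Vert_{\mathbf D^{(t)}}^{4} \leq c_2^{2} C_0^{4}\, W(\boldsymbol\xi^{(t)})^{4}$.

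Finally, I invoke Lemma C.4 (\texttt{lemma:supremum\_finite}). Assumption H8 guarantees the drift condition $\sup_{\bbb\in\mathcal B_1} P_{\bbb}W^{m}\leq \lambda W^{m}+b$ with $m\geq 4$, so Lemma C.4 applied with $p=4$ yields $\sup_{t}\mathbb{E}[W(\boldsymbol\xi^{(t)})^{4}]<\infty$. Combining the three displays,
\begin{equation*}
\sup_{t\geq 1}\left\Vert\Vert\boldsymbol\eta_t\Vert_{\mathbf D^{(t)}}^{2}\right\Vert_{L_2} = \sup_{t\geq 1}\sqrt{\mathbb{E}[\Vert\boldsymbol\eta_t\Vert_{\mathbf D^{(t)}}^{4}]} \leq c_2\, C_0^{2}\sqrt{\sup_{t\geq 1}\mathbb{E}[W(\boldsymbol\xi^{(t)})^{4}]}=:C,
\end{equation*}
and plugging this back into the Minkowski step delivers the claim with this constant $C$.

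There is no real obstacle here; each ingredient has already been set up in the preceding lemmas. The only subtle point is keeping track of the moment exponent: because the statement is about an $L_2$ norm of a \emph{squared} quantity, we need control of the fourth moment of $W(\boldsymbol\xi^{(t)})$, which is exactly why H8 requires $m\geq 4$. I would make sure to flag this dependence when invoking Lemma C.4 so that the reader sees why the $m\geq 4$ condition in H8 is used.
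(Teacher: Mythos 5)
Your proposal is correct and follows essentially the same route as the paper's own proof: Minkowski's inequality to pull the sum outside the $L_2$ norm, the pointwise bound $\Vert\boldsymbol\eta_t\Vert \leq C\,W(\boldsymbol\xi^{(t)})$ from the preceding lemma, and the uniform moment bound $\sup_t \mathbb{E}[W(\boldsymbol\xi^{(t)})^{4}] < \infty$ from the drift condition in H8 with $m \geq 4$. The paper states this in one line; your version merely makes explicit the conversion between the $\mathbf D^{(t)}$-norm and the Euclidean norm and the role of the fourth-moment exponent, both of which are implicit in the paper's argument.
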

\begin{proof}[Proof of Lemma~\ref{lemma:approx_bound3}]
By Minkowski inequality,
\[
	\left\Vert\sum_{t=1}^n a_t\Vert\boldsymbol\eta_t\Vert_{\mathbf D^{(t)}}^2\right\Vert_{L_2}\leq C \sup_t\Vert\boldsymbol\eta_t\Vert_{L_4}^2\sum_{t=1}^n a_t\leq C\sum_{t=1}^n a_t,
\]
as the supremum is finite based on Lemma~\ref{lemma:supremum_finite} and Lemma~\ref{lemma:approx_bound}.
\end{proof}

\begin{lemma}\label{lemma:operator_T_bound}
Assume H1, H4, and H6. Then
\begin{equation}\label{eq:bound_T1}
	\sup _{\gamma \in(0,c_1 / L]} \sup _{\bbb \in \mathcal{B}_1}\left\|T_{\gamma}(\bbb)\right\|<\infty.
\end{equation}
If additional H7-(ii) holds, then there exist a constant $C$ such that for any $\bbb,\bbb^\prime\in\mathcal{B}_1,$ $\gamma,\gamma^\prime\in (0,c_1/L],$
\begin{equation}\label{eq:bound_T2}
	\left\|T_{\gamma}(\bbb)-T_{\gamma^\prime}(\bbb^\prime)\right\| \leq C(\gamma+\gamma\prime+\|\bbb-\bbb^\prime\|)
\end{equation}
\end{lemma}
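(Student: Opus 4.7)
The plan rests on three ingredients that interact cleanly: compactness of the feasible set $\mathcal{B}$ from H1, the non-expansiveness H7-(ii) of the scaled proximal operator, and the Lipschitz continuity (hence boundedness) of $\nabla h$ on the compact set $\mathcal{B}_1$ from H6. The two norms $\|\cdot\|$ and $\|\cdot\|_{\mathbf D}$ are equivalent with constants determined by $c_1$ and $c_2$, so I will switch between them without further comment.

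For \eqref{eq:bound_T1}, the observation is that $g(\bbb) = R_2(\bbb) + I_{\mathcal{B}}(\bbb)$ forces $T_\gamma(\bbb) \in \mathcal{B}$ for every admissible input, since points outside $\mathcal{B}$ contribute $+\infty$ to the proximal objective. Compactness of $\mathcal{B}$ by H1 then gives $\|T_\gamma(\bbb)\| \leq \sup_{\bbb \in \mathcal{B}} \|\bbb\| < \infty$ uniformly in $\gamma \in (0, c_1/L]$ and $\bbb \in \mathcal{B}_1$.

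For \eqref{eq:bound_T2}, I would use the triangle inequality
\begin{equation*}
\|T_\gamma(\bbb) - T_{\gamma'}(\bbb')\| \le \|T_\gamma(\bbb) - T_\gamma(\bbb')\| + \|T_\gamma(\bbb') - T_{\gamma'}(\bbb')\|
\end{equation*}
and handle the change in $\bbb$ and the change in step size separately. Feeding $\bbb - \gamma\mathbf D^{-1}\nabla h(\bbb)$ and $\bbb' - \gamma\mathbf D^{-1}\nabla h(\bbb')$ into H7-(ii), then using that $\nabla h$ is $L$-Lipschitz on $\mathcal{B}_1$ by H6 together with $\gamma \leq c_1/L$, bounds the first term by $C\|\bbb - \bbb'\|$ for a constant $C$ depending only on $c_1$ and $c_2$.

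For the step-size term, I would insert $\bbb'$ as the intermediate point and control $\|T_\gamma(\bbb') - \bbb'\|$ and $\|T_{\gamma'}(\bbb') - \bbb'\|$ in isolation. To estimate $\|T_\gamma(\bbb') - \bbb'\|$, I route through $\text{Prox}_{\gamma,g}^{\mathbf D}(\bbb')$: H7-(ii) controls $\|T_\gamma(\bbb') - \text{Prox}_{\gamma,g}^{\mathbf D}(\bbb')\|$ by $\gamma\|\mathbf D^{-1}\nabla h(\bbb')\| \le C\gamma$, using that $\nabla h$ is bounded on the compact $\mathcal{B}_1$, while H7-(iii) controls $\|\text{Prox}_{\gamma,g}^{\mathbf D}(\bbb') - \bbb'\|$ by $C\gamma$. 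Adding the two pieces and repeating the argument for $\gamma'$ delivers the desired $O(\gamma + \gamma')$ bound. The only mild subtlety is that H7-(iii) is stated for arguments in $\mathcal{B}_1$ whereas $T_\gamma$ evaluates $\text{Prox}_{\gamma,g}^{\mathbf D}$ at a point that may leave $\mathcal{B}_1$; the detour through $\text{Prox}_{\gamma,g}^{\mathbf D}(\bbb')$ is precisely what sidesteps this technicality. I do not anticipate any deeper obstacle beyond careful bookkeeping of the constants.
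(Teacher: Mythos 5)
Your proof is correct, and the second half follows essentially the paper's own route: the same triangle-inequality split of $T_{\gamma}(\bbb)-T_{\gamma'}(\bbb')$ into a $\bbb$-variation term (controlled by the non-expansiveness H7-(ii) applied to the shifted inputs, plus the Lipschitz property of $\nabla h$ and $\gamma\le c_1/L$) and a step-size term that is routed through $\operatorname{Prox}_{\gamma,g}^{\mathbf D}(\bbb')$ so that H7-(iii) is only ever invoked at a point of $\mathcal{B}_1$ — the paper performs exactly this detour, and like you it quietly needs H7-(iii) in addition to the H7-(ii) named in the statement. Where you genuinely diverge is \eqref{eq:bound_T1}: the paper proves it by writing $\|T_\gamma(\bbb)-\bbb_\star\|=\|T_\gamma(\bbb)-T_\gamma(\bbb_\star)\|$ (using the fixed-point identity $T_\gamma(\bbb_\star)=\bbb_\star$) and then applying the non-expansiveness of the scaled proximal map together with the norm-equivalence constants $c_1,c_2$, whereas you observe that $g=R_2+I_{\mathcal B}$ forces $T_\gamma(\bbb)\in\mathcal B$, which is compact by H1. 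Your argument is more elementary and is actually more faithful to the hypotheses as stated, since the first claim of the lemma does not list H7-(ii) while the paper's own proof of it uses that condition; the trade-off is that your argument leans on the specific structure of $g$ from Section 3.1 (the presence of the indicator $I_{\mathcal B}$ and the compactness of $\mathcal B$), while the paper's fixed-point argument would survive in a setting where $g$ does not encode a bounded constraint set. One minor bookkeeping point: when you invoke H7-(ii) with the arguments $\bbb-\gamma\mathbf D^{-1}\nabla h(\bbb)$ and $\bbb'-\gamma\mathbf D^{-1}\nabla h(\bbb')$, these shifted points need not lie in $\mathcal{B}_1$ as H7-(ii) formally requires; this is harmless here because Lemma C.1 establishes the non-expansiveness for arbitrary arguments, but it is the same gloss the paper makes and worth a sentence.
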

\begin{proof}[Proof of Lemma~\ref{lemma:operator_T_bound}]
	As $\bbb_\star = T_\gamma(\bbb_\star)$ for any $\gamma>0.$ And
	\begin{align*}
		\Vert T_\gamma(\bbb) - \bbb_\star\Vert &= \Vert T_\gamma(\bbb) - T_\gamma(\bbb_\star)\Vert\\
		& = \Vert\text{Prox}_{\gamma,g}^{\mathbf D}(\bbb - \gamma\mathbf D^{-1}\nabla h(\bbb)) - \text{Prox}_{\gamma,g}^{\mathbf D}(\bbb_\star - \gamma\mathbf D^{-1}\nabla h(\bbb_\star))\Vert\\
		&\leq \frac{1}{c_1}\Vert \bbb-\gamma \mathbf D^{-1}\nabla h(\bbb) - \bbb_\star + \gamma\mathbf D^{-1}\nabla h(\bbb_\star)\Vert_{\mathbf D}\\
		&\leq (1+\frac{c_2}{c_1})\left(\Vert \bbb\Vert+\Vert\bbb_\star\Vert\right) < \infty,
	\end{align*}
	where the first and second inequality comes from Lipschitz proporty of $\text{Prox}_{\gamma,g}^{\mathbf D}$ (see H7-(ii)) and $\nabla h$, respectively. So we have (\ref{eq:bound_T1}) holds.
	
To prove (\ref{eq:bound_T2}), decompose $T_{\gamma}(\bbb)-T_{\gamma^\prime}(\bbb^\prime) = T_{\gamma}(\bbb)-T_{\gamma^\prime}(\bbb) + T_{\gamma^\prime}(\bbb) - T_{\gamma^\prime}(\bbb^\prime).$
	\begin{align*}
		\Vert T_{\gamma^\prime}(\bbb)-T_{\gamma^\prime}(\bbb^\prime)\Vert &\leq \frac{1}{c_1}\Vert\bbb-\gamma^\prime\mathbf D^{-1}\nabla h(\bbb) - \bbb^\prime + \gamma^\prime\mathbf D^{-1}\nabla h(\bbb')\Vert_{\mathbf D}\\
		&\leq \frac{c_2}{c_1}\Vert \bbb-\bbb^\prime\Vert + \frac{2\sup_{\bbb\in\mathcal{B}_1}\Vert\nabla h(\bbb)\Vert}{c_1}\gamma^\prime\\
		&\leq C(\gamma^\prime + \Vert\bbb-\bbb^\prime\Vert).
	\end{align*}
Since H6 and $\mathcal{B}_1$ is compact, $\sup_{\bbb\in\mathcal{B}_1}\Vert\nabla h(\bbb)\Vert<\infty.$

\begin{align*}
	\Vert T_\gamma(\bbb)-T_{\gamma^\prime}(\bbb)\Vert &= \Vert\text{Prox}_{\gamma,g}^{\mathbf D}(\bbb-\gamma\mathbf D^{-1}\nabla h(\bbb)) - \text{Prox}_{\gamma^{\prime},g}(\bbb-\gamma^\prime\mathbf D^{-1}\nabla h(\bbb))\Vert\\
	&= \Vert\text{Prox}_{\gamma,g}^{\mathbf D}(\bbb-\gamma\mathbf D^{-1}\nabla h(\bbb)) - \text{Prox}_{\gamma,g}^{\mathbf D}(\bbb)\Vert \\
	&+ \Vert\text{Prox}_{\gamma^{\prime},g}^{\mathbf D}(\bbb-\gamma^\prime\mathbf D^{-1}\nabla h(\bbb))  - \text{Prox}_{\gamma^\prime,g}^{\mathbf D}(\bbb)\Vert\\
	&+ \Vert\text{Prox}_{\gamma,g}^{\mathbf D}(\bbb) - \text{Prox}_{\gamma^\prime,g}^{\mathbf D}(\bbb)\Vert\\
	&\leq \frac{1}{c_1}\bigg(\sup_{\bbb\in\mathcal{B}_1}\Vert\nabla h(\bbb)\Vert(\gamma+\gamma^\prime)  + \Vert\text{Prox}_{\gamma,g}^{\mathbf D}(\bbb)-\bbb\Vert_{\mathbf D} + \Vert\text{Prox}_{\gamma^\prime,g}^{\mathbf D}(\bbb)-\bbb\Vert_{\mathbf D}\bigg)\\
	&\leq \frac{1}{c_1}\left(\sup_{\bbb\in\mathcal{B}_1}\Vert\nabla h(\bbb)\Vert  + c_2\sup_{\gamma\in(0,c_1/L]}\sup_{\bbb\in\mathcal{B}_1}\Vert\text{Prox}_{\gamma,g}(\bbb)-\bbb\Vert\right)(\gamma+\gamma')\leq C(\gamma+\gamma^\prime).
\end{align*}
The above inequality follows from assumption H7-(ii).
\end{proof}

% \begin{proof}[Proof (i) of Theorem~\ref{thm:mcmc-sampling}]\label{proof:mcmc-sampling}
% We verify the assumptions in Lemma~\ref{lemma:converge_in_prob} to prove the convergence. Given $\sum_{t\geq 1}\gamma_t^2<\infty,$ apply Lemma 17 of \cite{atchade2017perturbed} with $a_n = \gamma_n^2,$ then there exist a constant $C$, s.t., $$\sum_{t\geq 1} \gamma_t^2\Vert\boldsymbol{\eta}\Vert^2_{\mathbf D^{(t)}}\leq C\sum_{t\geq 1} \gamma_t^2\Vert\boldsymbol{\eta}\Vert^2 < \infty\quad \mathbb{P}\text{-a.s.}$$

% Then given $\bbb,\bbb'\in \mathcal{B}_1,$ the following hold
% \[
% 	\sup _{\gamma \in(0,1 / L]} \sup _{\bbb \in \mathcal{B}_1}\left\|T_{\gamma}(\bbb)\right\|<\infty, \quad\left\|T_{\gamma}(\bbb)-T_{\bar{\gamma}}(\bar{\bbb})\right\| \leq C_{T}(\gamma+\bar{\gamma}+\|\bbb-\bar{\bbb}\|).
% \]

% Apply proposition 19 of \cite{atchade2017perturbed} with $a_n = \gamma_t$ and $\mathbf{A}_\gamma(\ttt)=I,$ we have
% \[
% 	\sum_{t\geq 1}\gamma_t \boldsymbol\eta_t<\infty\quad \mathbb{P}\text{-a.s.}
% \]

% Apply the Lemma 17 with $a_n = \gamma_t$ and $\mathbf{A}_{\gamma}(\ttt) = T_{\gamma}(\bbb),$ we have
% \[
% 	\sum_{t\geq 1}\gamma_t \left\langle T_{\gamma}(\bbb),\boldsymbol\eta_{t}\right\rangle_{\mathbf D^{(t)}} <\infty\quad \mathbb{P}\text{-a.s.}
% \]
% \end{proof}

\begin{proof}[Proof of Theorem~\ref{thm:thm2}]
~

By assumption $\bbb_\star\in \mathcal{B}_1,$ and $\bbb_\star = \argmin_{\bbb\in\mathcal{B}_1}F(\bbb):=\min F.$ Apply (\ref{eq:a3}) with $\bbb \leftarrow \bbb^{(t)}-\gamma_{t+1} \left(\mathbf D^{(t+1)}\right)^{-1} \mathbf G_{\bbb^{(t)}}(\boldsymbol\xi^{(t+1)}),$ $\boldsymbol\xi \leftarrow \bbb^{(t)},$ $\bbb^{\prime} \leftarrow \bbb_{\star}, \gamma \leftarrow \gamma_{t+1},\mathbf D \leftarrow \mathbf D^{(t+1)},$ we have
\begin{equation}\label{eq:thm1_1}
	\Vert\bbb^{(t+1)} - \bbb_\star\Vert_{\mathbf D^{(t+1)}}^2 \leq \Vert\bbb^{(t)} - \bbb_\star\Vert_{\mathbf D^{(t+1)}}^2 - 2\gamma_{t+1}\left(F(\bbb^{(t+1)}) - F(\bbb_\star)\right) - 2\gamma_{t+1}\left\langle\bbb^{(t+1)} - \bbb_\star, \boldsymbol\eta_{t+1}\right\rangle_{\mathbf D^{(t+1)}}.
\end{equation}

By rearranging (\ref{eq:thm1_1}), we have
\begin{align*}
	F(\bbb^{(t+1)}) - F(\bbb_\star)&\leq \frac{1}{2\gamma_{t+1}}\left(\Vert\bbb^{(t)} - \bbb_\star\Vert_{\mathbf D^{(t+1)}}^2 - \Vert\bbb^{(t+1)} - \bbb_\star\Vert_{\mathbf D^{(t+1)}}^2\right) - \left\langle\bbb^{(t+1)} - \bbb_\star, \boldsymbol\eta_{t+1}\right\rangle_{\mathbf D^{(t+1)}}\\
	&\leq \frac{1}{2}\left(\frac{1}{\gamma_{t+1}}-\frac{1}{\gamma_t}\right)\Vert\bbb^{(t)} - \bbb_\star\Vert_{\mathbf D^{(t+1)}}^2 - \frac{1}{2\gamma_{t+1}}\Vert\bbb^{(t+1)} -\bbb_\star\Vert_{\mathbf D^{(t+1)}}^2\\
	& + \frac{1}{2\gamma_t}\Vert\bbb^{(t)} - \bbb_\star\Vert_{\mathbf D^{(t+1)}}^2 - \left\langle\bbb^{(t+1)} - \bbb_\star, \boldsymbol\eta_{t+1}\right\rangle_{\mathbf D^{(t+1)}}
	%  \\
	% &\leq \frac{c_2}{2}\left(\frac{1}{\gamma_{t+1}} - \frac{1}{\gamma_t}\right)\Vert\bbb^{(t)}-\bbb_\star\Vert^2 - \frac{c_2}{2\gamma_{t+1}}\left\Vert\bbb^{(t+1)} - \bbb_\star\right\Vert^2 \\
	% & + \frac{c_2}{2\gamma_t}\Vert\bbb^{(t)} - \bbb_\star\Vert^2 - c_2\left\langle\bbb^{(t+1)} - \bbb_\star, \boldsymbol\eta_{t+1}\right\rangle
\end{align*}

Sum from $t=0,\ldots,n-1,$ and decompose
\[
	\langle\bbb^{(t)} - \bbb_\star, \boldsymbol\eta_t\rangle_{\mathbf D^{(t)}} = \left\langle\bbb^{(t)}-T_{\gamma_{t}}\left(\bbb^{(t-1)}\right), \boldsymbol\eta_{t}\right\rangle_{\mathbf D^{(t)}}+\left\langle T_{\gamma_{t}}\left(\bbb^{(t-1)}\right)-\bbb_{\star}, \boldsymbol\eta_{t}\right\rangle_{\mathbf D^{(t)}}.
\]
By (\ref{eq:a4}), we have $\left|\left\langle\bbb^{(t)}-T_{\gamma_{t}}\left(\bbb^{(t-1)}\right), \boldsymbol\eta_{t}\right\rangle_{\mathbf D^{(t)}}\right| \leq \gamma_{t}\left\|\boldsymbol\eta_{t}\right\|^{2}_{\mathbf D^{(t)}}$, so
\begin{equation}\label{obj_bound}
	\begin{aligned}
	\sum_{t=1}^n\left(F(\bbb^{(t)})-\min F\right) &\leq \sum_{t=1}^n \frac{1}{2}\left(\frac{1}{\gamma_{t}} - \frac{1}{\gamma_{t-1}}\right)\Vert\bbb^{(t-1)} - \bbb_\star\Vert_{\mathbf D^{(t)}}^2 + \frac{1}{2\gamma_0}\Vert\bbb^{(0)} - \bbb_\star\Vert_{\mathbf D^{(1)}}^2\\
	+ &\sum_{t=1}^{n-1}\frac{1}{2\gamma_t}\Vert\bbb^{(t)} - \bbb_\star\Vert_{\mathbf D^{(t+1)} - \mathbf D^{(t)}}^2 -\sum_{t=1}^{n}\left\langle T_{\gamma_{t}}\left(\bbb^{(t-1)}\right)-\bbb_{\star}, \eta_{t}\right\rangle_{\mathbf D^{(t)}}+\sum_{t=1}^{n} \gamma_{t}\left\|\boldsymbol\eta_{t}\right\|^{2}_{\mathbf D^{(t)}}
\end{aligned}
\end{equation}

Under the assumptions H6, the function $F$ is convex so that
\begin{equation}
	F(\bar\bbb_n) \leq \frac{1}{n}\sum_{t=1}^n F(\bbb^{(t)}).
\end{equation}

Denote $\Vert\cdot\Vert_{L_2} = \left(\mathbb{E}\Vert\cdot\Vert^2\right)^{1/2}.$ By (\ref{obj_bound}) and Minkowski inequality, we have there exists a constant $C>0,$ such that
\begin{align*}
	\Vert F(\bar\bbb_n) - \min F\Vert_{L_2}&\leq \frac{C}{n}\bigg(\sum_{t=1}^n\left|\frac{1}{\gamma_{t}}-\frac{1}{\gamma_{t-1}}\right| + \frac{1}{\gamma_0} + \sum_{t=1}^{n-1}\frac{1}{\gamma_t}\left\Vert\mathbf D^{(t+1)} - \mathbf D^{(t)}\right\Vert_{L_2}\\
	& + \left\Vert\sum_{t=1}^n\left\langle T_{\gamma_t}(\bbb^{(t)}),\boldsymbol\eta_t\right\rangle_{\mathbf D^{(t)}}
	\right\Vert_{L_2} + \left\Vert\sum_{t=1}^n\left\langle\bbb_\star,\boldsymbol\eta_t\right\rangle_{\mathbf D^{(t)}}\right\Vert_{L_2} + \left\Vert\sum_{t=1}^n\gamma_t\Vert\boldsymbol\eta_t\Vert_{\mathbf D^{(t)}}^2\right\Vert_{L_2}\bigg).
\end{align*}

By assumption, we assume $\gamma_t = C t^{-\alpha},\alpha\in (1/2,1],$
\[
	\sum_{t=1}^n\left|\frac{1}{\gamma_{t}}-\frac{1}{\gamma_{t-1}}\right|= O(n^{\alpha}),\quad \sum_{t=1}^{n-1}\frac{1}{\gamma_t}\left\Vert\mathbf D^{(t+1)} - \mathbf D^{(t)}\right\Vert_{L_2}= O(n^{\alpha})
\]
Apply $a_t = \gamma_t,$ in Lemma~\ref{lemma:approx_bound3}
\begin{equation}
 	\left\Vert\sum_{t=1}^n \gamma_t\Vert\boldsymbol\eta_t\Vert_{\mathbf D^{(t-1)}}^2\right\Vert_{L_2}\leq C\sum_{t=1}^n \gamma_t,
\end{equation}
and $\sum_{t=1}^n \gamma_t = O(n^{1-\alpha})$ for $\alpha\in (1/2,1),$ and $\sum_{t=1}^n \gamma_t = O(\ln n)$ for $\alpha=1.$

When $\boldsymbol\xi^{(t)}$ are sampled exactly, i.e., unbiased case, combine Lemma~\ref{lemma:operator_T_bound} and Proposition 18 of \cite{atchade2017perturbed}, there exists a constant $C$ such that
\[
	\left\Vert \sum_{t=0}^n\left\langle\mathbf A_{\gamma_{t+1}}(\bbb^{(t)}),\boldsymbol\eta_t\right\rangle_{\mathbf D^{(t)}}\right\Vert_{L_2}\leq C \sqrt{n}.
\]

Similarly, for the case of biased approximation, combine Lemma~\ref{lemma:operator_T_bound}, and Proposition 19 of \cite{atchade2017perturbed}, there exists a constant $C$ such that
\[
	\left\Vert \sum_{t=0}^n\left\langle\mathbf A_{\gamma_{t+1}}(\bbb^{(t)})\boldsymbol\eta_t\right\rangle_{\mathbf D^{(t)}}\right\Vert_{L_2}\leq C\left(1+ \sqrt{n} + \sum_{t=0}^n\gamma_t\right).
\]

In both cases, let $\mathbf A_{\gamma_t}(\bbb^{(t-1)}) = T_{\gamma_t}(\bbb^{(t-1)})$ and $\mathbf A_{\gamma_t}(\bbb^{(t-1)}) = I,$ we have

\[
	\left\Vert\sum_{t=1}^n\left\langle T_{\gamma_t}(\bbb^{(t-1)}),\boldsymbol\eta_t\right\rangle_{\mathbf D^{(t-1)}}
	\right\Vert_{L_2} =O(\sqrt{n})\text{ and} \quad \left\Vert\sum_{t=1}^n\left\langle\bbb_\star,\boldsymbol\eta_t\right\rangle_{\mathbf D^{(t-1)}}\right\Vert_{L_2} = O(\sqrt{n})
\]

Combine the above results and as $h$ is strongly convex, so there exist a $\mu>0,$ such that $F(\bar\bbb_n) - F(\bbb_\star)\geq \frac{\mu}{2}\Vert \bar\bbb_n - \bbb_\star\Vert^2,$ so we have
\[
	\mathbb{E}\Vert\bar\bbb_n - \bbb_\star\Vert^2\leq\left(\mathbb{E}\Vert\bar\bbb_n - \bbb_\star\Vert^4\right)^{1/2}\leq C\Vert F(\bar\bbb_n) - \min F\Vert_{L_2} \leq C n^{\alpha-1}.
\]
As $\alpha\in (1/2,1],$ by choosing $\alpha=1/2+\epsilon,\epsilon>0,$ we have the lowest bound $C n^{-\frac{1}{2}+\epsilon}.$
\end{proof}

\section{Additional Simulation Results}
\change{We provide an additional simulation study to (1) assess the estimation of the asymptotic variances of parameter estimates and (2) assess the point estimation of the covariance between latent variables.
We consider a similar confirmatory IFA setting as in the simulation study I, with two factors, twenty items (i.e., $K=2,$ $J=20$), and the same design matrix $\mathbf Q$. The intercept parameters and non-zero loading parameters are drawn i.i.d. from the standard normal and a uniform distribution over the interval $(0.5, 1.5),$ respectively. The variances of two factors are set to be 1 and the covariance is set to be 0.4.
For each of the three sample sizes $N=1000, 2000, 4000$, 50 independent datasets are generated. We then apply the proposed USP method with 1000 burn-in size, and 4000 total iterations. Note that we use a larger burn-in size and a larger number of iterations here to ensure accurate computation of the asymptotic variances, because they tend to be more difficult to compute than the point estimates. The results from the UPS algorithm are compared with those from a
 standard EM algorithm that uses 31 quadrature points for each dimension.

We approximate the observed Fisher information matrix using the approach given in Remark~\ref{rmk:biprod}. Based on the approximated Fisher information matrix, we obtain the standard errors of parameter estimates. The obtained standard errors are compared with those given by the EM algorithm. The results are given in Figure~\ref{fig:se_examine}. Each panel of Figure~\ref{fig:se_examine} corresponds to a combination of a sample size and a type of parameters (loadings/intercepts/covariance). For each dataset and each parameter, we obtain the standard errors of the parameter estimate from the UPS and EM algorithms, respectively. These standard errors are shown as a point in the scatter plot, where the x-axis gives the standard error from the EM algorithm and the y-axis gives the standard error from the USP method. As we can see, all the points concentrate along the diagonal line, suggesting that the standard errors from the two algorithms are very close to each other.

We further assess the estimation of the covariance between the latent variables. The results are given in Figure~\ref{fig:mse_sigma21}. For each sample size, we compute the squared difference between the estimate given by the USP algorithm and the true value ($\sigma_{12} = 0.4$) and visualize the squared errors from the 50 datasets using a box plot. We see that all the squared errors are quite small and they decrease when the sample size increases.}
% \clearpage
%
%First, we examine standard error estimates of the proposed USP method by comparing them with results from the standard EM method. As shown in Figure~\ref{fig:se_examine}, scatter plots for loading parameters, intercept parameters, and covariance parameter are all symmetric along the coordinate diagonal, under three sample sizes, indicating the standard error estimates are unbiased. As the sample size increases, the statistical error of parameters decreases, which is confirmed in the plots that the points concentrate toward the bottom left corner.

%For the accuracy of recovering latent variable covariance, the mean squared error for the covariance $\sigma_{12}$ between the USP method estimates and the true value 0.4 is calculated for each dataset. In Figure~\ref{fig:mse_sigma21}, we see that the mean squared error is quite small and decreases as the sample size grows, implying the latent variable covariance is well and consistently estimated.

\begin{figure}[H]
	\centering
	\includegraphics[width=0.8\linewidth]{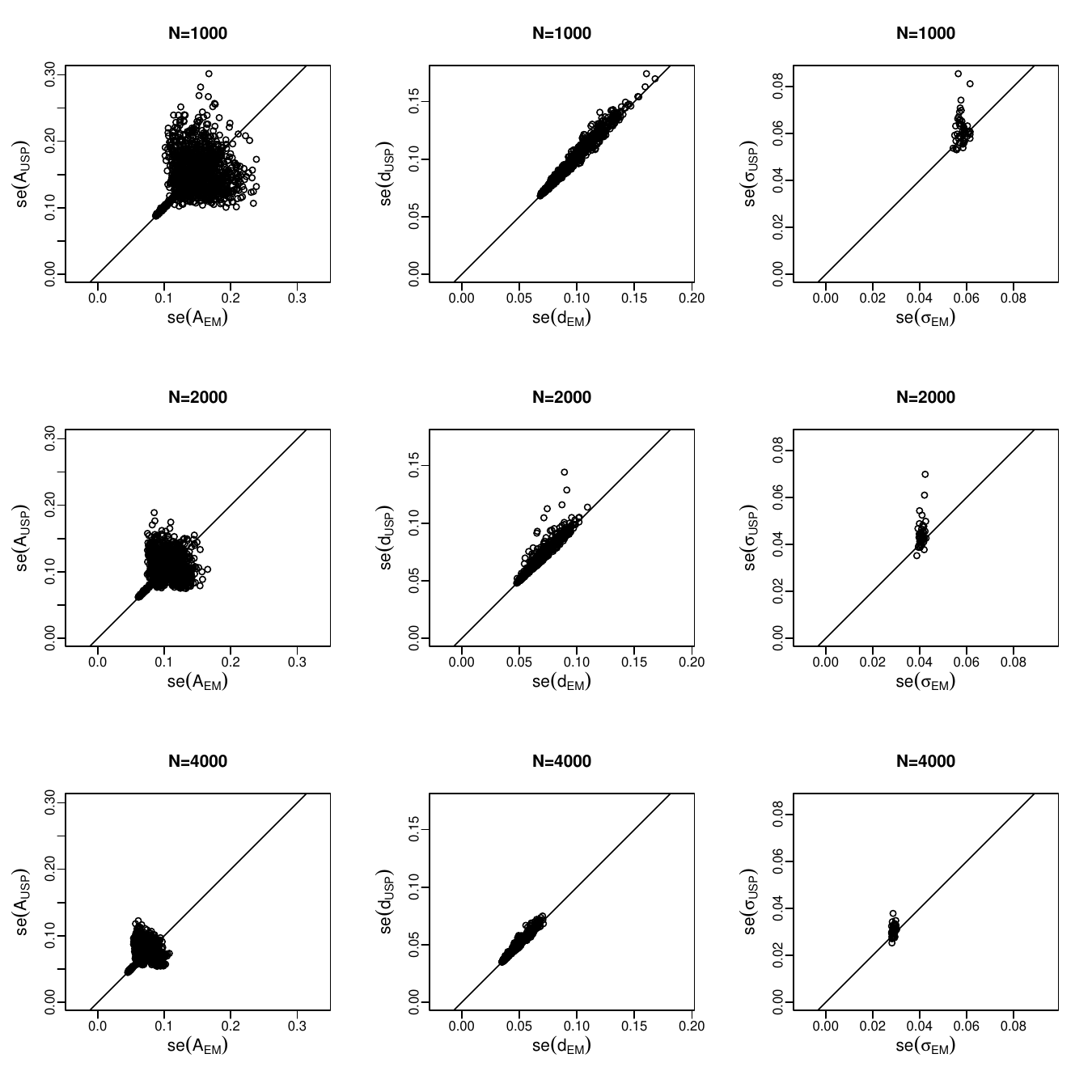}
	\caption{Scatter plots of standard error estimates for loading parameters $\mathbf A$, intercept parameters $\boldsymbol d$, and correlation parameter $\sigma$, from the EM method and the USP method under different sample sizes. The x-axis and y-axis represent standard error estimates from the EM and the USP method respectively. Each row corresponds to one sample size and each column corresponds to one type of parameter.}
	\label{fig:se_examine}
\end{figure}

\begin{figure}[H]
	\centering
	\includegraphics[width=.5\linewidth]{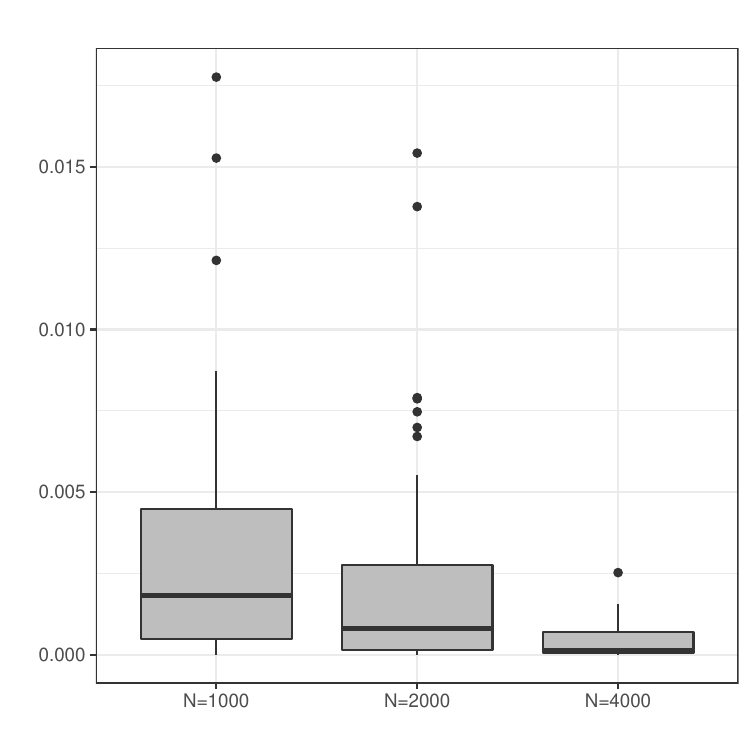}
	\caption{Box plots of squared errors for estimated correlation parameter $\sigma_{12}$ from the USP method.}
	\label{fig:mse_sigma21}
\end{figure}

\bibliographystyle{apacite}
\bibliography{ref}

\begin{thebibliography}{}

\bibitem [\protect \citeauthoryear {%
Andersen%
}{%
Andersen%
}{%
{\protect \APACyear {1973}}%
}]{%
andersen1973conditional}
\APACinsertmetastar {%
andersen1973conditional}%
\begin{APACrefauthors}%
Andersen, E\BPBI B.%
\end{APACrefauthors}%
\unskip\
\newblock
\APACrefYear{1973}.
\newblock
\APACrefbtitle {Conditional inference and models for measuring} {Conditional
  inference and models for measuring}.
\newblock
\APACaddressPublisher{Copenhagen, Denmark}{Mentalhygiejnisk Forlag}.
\PrintBackRefs{\CurrentBib}

\bibitem [\protect \citeauthoryear {%
Andrieu%
\ \BBA {} Moulines%
}{%
Andrieu%
\ \BBA {} Moulines%
}{%
{\protect \APACyear {2006}}%
}]{%
andrieuErgodicityPropertiesAdaptive2006}
\APACinsertmetastar {%
andrieuErgodicityPropertiesAdaptive2006}%
\begin{APACrefauthors}%
Andrieu, C.%
\BCBT {}\ \BBA {} Moulines, {\'E}.%
\end{APACrefauthors}%
\unskip\
\newblock
\APACrefYearMonthDay{2006}{}{}.
\newblock
{\BBOQ}\APACrefatitle {On the ergodicity properties of some adaptive {MCMC}
  algorithms} {On the ergodicity properties of some adaptive {MCMC}
  algorithms}.{\BBCQ}
\newblock
\APACjournalVolNumPages{The Annals of Applied Probability}{16}{}{1462--1505}.
\PrintBackRefs{\CurrentBib}

\bibitem [\protect \citeauthoryear {%
Andrieu%
, Moulines%
\BCBL {}\ \BBA {} Priouret%
}{%
Andrieu%
\ \protect \BOthers {.}}{%
{\protect \APACyear {2005}}%
}]{%
andrieu2005stability}
\APACinsertmetastar {%
andrieu2005stability}%
\begin{APACrefauthors}%
Andrieu, C.%
, Moulines, {\'E}.%
\BCBL {}\ \BBA {} Priouret, P.%
\end{APACrefauthors}%
\unskip\
\newblock
\APACrefYearMonthDay{2005}{}{}.
\newblock
{\BBOQ}\APACrefatitle {Stability of stochastic approximation under verifiable
  conditions} {Stability of stochastic approximation under verifiable
  conditions}.{\BBCQ}
\newblock
\APACjournalVolNumPages{SIAM Journal on Control and
  Optimization}{44}{}{283--312}.
\PrintBackRefs{\CurrentBib}

\bibitem [\protect \citeauthoryear {%
Atchad{\'e}%
, Fort%
\BCBL {}\ \BBA {} Moulines%
}{%
Atchad{\'e}%
\ \protect \BOthers {.}}{%
{\protect \APACyear {2017}}%
}]{%
atchade2017perturbed}
\APACinsertmetastar {%
atchade2017perturbed}%
\begin{APACrefauthors}%
Atchad{\'e}, Y\BPBI F.%
, Fort, G.%
\BCBL {}\ \BBA {} Moulines, E.%
\end{APACrefauthors}%
\unskip\
\newblock
\APACrefYearMonthDay{2017}{}{}.
\newblock
{\BBOQ}\APACrefatitle {On perturbed proximal gradient algorithms} {On perturbed
  proximal gradient algorithms}.{\BBCQ}
\newblock
\APACjournalVolNumPages{The Journal of Machine Learning
  Research}{18}{}{310--342}.
\PrintBackRefs{\CurrentBib}

\bibitem [\protect \citeauthoryear {%
Bartholomew%
, Knott%
\BCBL {}\ \BBA {} Moustaki%
}{%
Bartholomew%
\ \protect \BOthers {.}}{%
{\protect \APACyear {2011}}%
}]{%
bartholomew2011latent}
\APACinsertmetastar {%
bartholomew2011latent}%
\begin{APACrefauthors}%
Bartholomew, D\BPBI J.%
, Knott, M.%
\BCBL {}\ \BBA {} Moustaki, I.%
\end{APACrefauthors}%
\unskip\
\newblock
\APACrefYear{2011}.
\newblock
\APACrefbtitle {Latent variable models and factor analysis: A unified approach}
  {Latent variable models and factor analysis: A unified approach}.
\newblock
\APACaddressPublisher{Hoboken, NJ}{John Wiley \& Sons}.
\PrintBackRefs{\CurrentBib}

\bibitem [\protect \citeauthoryear {%
Becker%
\ \BBA {} Le~Cun%
}{%
Becker%
\ \BBA {} Le~Cun%
}{%
{\protect \APACyear {1988}}%
}]{%
becker1988improving}
\APACinsertmetastar {%
becker1988improving}%
\begin{APACrefauthors}%
Becker, S.%
\BCBT {}\ \BBA {} Le~Cun, Y.%
\end{APACrefauthors}%
\unskip\
\newblock
\APACrefYearMonthDay{1988}{}{}.
\newblock
{\BBOQ}\APACrefatitle {Improving the convergence of back-propagation learning
  with second order methods} {Improving the convergence of back-propagation
  learning with second order methods}.{\BBCQ}
\newblock
\BIn{} T\BPBI S.~D.~Touretzky G.~Hinton\ (\BED), \APACrefbtitle {Proceedings of
  the 1988 connectionist models summer school} {Proceedings of the 1988
  connectionist models summer school}\ (\BPGS\ 29--37).
\newblock
\APACaddressPublisher{San Mateo}{Morgan Kaufmann}.
\PrintBackRefs{\CurrentBib}

\bibitem [\protect \citeauthoryear {%
B{\'e}guin%
\ \BBA {} Glas%
}{%
B{\'e}guin%
\ \BBA {} Glas%
}{%
{\protect \APACyear {2001}}%
}]{%
beguin2001mcmc}
\APACinsertmetastar {%
beguin2001mcmc}%
\begin{APACrefauthors}%
B{\'e}guin, A\BPBI A.%
\BCBT {}\ \BBA {} Glas, C\BPBI A.%
\end{APACrefauthors}%
\unskip\
\newblock
\APACrefYearMonthDay{2001}{}{}.
\newblock
{\BBOQ}\APACrefatitle {{MCMC} estimation and some model-fit analysis of
  multidimensional {IRT} models} {{MCMC} estimation and some model-fit analysis
  of multidimensional {IRT} models}.{\BBCQ}
\newblock
\APACjournalVolNumPages{Psychometrika}{66}{}{541--561}.
\PrintBackRefs{\CurrentBib}

\bibitem [\protect \citeauthoryear {%
Benveniste%
, Priouret%
\BCBL {}\ \BBA {} M{\'e}tivier%
}{%
Benveniste%
\ \protect \BOthers {.}}{%
{\protect \APACyear {1990}}%
}]{%
benveniste1990adaptive}
\APACinsertmetastar {%
benveniste1990adaptive}%
\begin{APACrefauthors}%
Benveniste, A.%
, Priouret, P.%
\BCBL {}\ \BBA {} M{\'e}tivier, M.%
\end{APACrefauthors}%
\unskip\
\newblock
\APACrefYear{1990}.
\newblock
\APACrefbtitle {Adaptive algorithms and stochastic approximations} {Adaptive
  algorithms and stochastic approximations}.
\newblock
\APACaddressPublisher{New York, NY}{Springer}.
\PrintBackRefs{\CurrentBib}

\bibitem [\protect \citeauthoryear {%
Bertsekas%
, Gallager%
\BCBL {}\ \BBA {} Humblet%
}{%
Bertsekas%
\ \protect \BOthers {.}}{%
{\protect \APACyear {1992}}%
}]{%
bertsekas1992data}
\APACinsertmetastar {%
bertsekas1992data}%
\begin{APACrefauthors}%
Bertsekas, D\BPBI P.%
, Gallager, R\BPBI G.%
\BCBL {}\ \BBA {} Humblet, P.%
\end{APACrefauthors}%
\unskip\
\newblock
\APACrefYear{1992}.
\newblock
\APACrefbtitle {Data networks} {Data networks}.
\newblock
\APACaddressPublisher{Upper Saddle River, NJ}{Prentice-Hall International}.
\PrintBackRefs{\CurrentBib}

\bibitem [\protect \citeauthoryear {%
Birnbaum%
}{%
Birnbaum%
}{%
{\protect \APACyear {1968}}%
}]{%
birnbaum1968some}
\APACinsertmetastar {%
birnbaum1968some}%
\begin{APACrefauthors}%
Birnbaum, A.%
\end{APACrefauthors}%
\unskip\
\newblock
\APACrefYearMonthDay{1968}{}{}.
\newblock
{\BBOQ}\APACrefatitle {Some latent trait models and their use in inferring an
  examinee's ability} {Some latent trait models and their use in inferring an
  examinee's ability}.{\BBCQ}
\newblock
\BIn{} F.~Lord\ \BBA {} M.~Novick\ (\BEDS), \APACrefbtitle {Statistical
  Theories of Mental Test Scores} {Statistical theories of mental test scores}\
  (\BPGS\ 397--479).
\newblock
\APACaddressPublisher{Boston, MA}{Addison-Wesley}.
\PrintBackRefs{\CurrentBib}

\bibitem [\protect \citeauthoryear {%
Bock%
\ \BBA {} Aitkin%
}{%
Bock%
\ \BBA {} Aitkin%
}{%
{\protect \APACyear {1981}}%
}]{%
bock1981marginal}
\APACinsertmetastar {%
bock1981marginal}%
\begin{APACrefauthors}%
Bock, R\BPBI D.%
\BCBT {}\ \BBA {} Aitkin, M.%
\end{APACrefauthors}%
\unskip\
\newblock
\APACrefYearMonthDay{1981}{}{}.
\newblock
{\BBOQ}\APACrefatitle {Marginal maximum likelihood estimation of item
  parameters: Application of an {EM} algorithm} {Marginal maximum likelihood
  estimation of item parameters: Application of an {EM} algorithm}.{\BBCQ}
\newblock
\APACjournalVolNumPages{Psychometrika}{46}{}{443--459}.
\PrintBackRefs{\CurrentBib}

\bibitem [\protect \citeauthoryear {%
Bollen%
}{%
Bollen%
}{%
{\protect \APACyear {1989}}%
}]{%
bollen1989structural}
\APACinsertmetastar {%
bollen1989structural}%
\begin{APACrefauthors}%
Bollen, K\BPBI A.%
\end{APACrefauthors}%
\unskip\
\newblock
\APACrefYear{1989}.
\newblock
\APACrefbtitle {Structural equations with latent variables} {Structural
  equations with latent variables}.
\newblock
\APACaddressPublisher{Hoboken, NJ}{John Wiley \& Sons}.
\PrintBackRefs{\CurrentBib}

\bibitem [\protect \citeauthoryear {%
Bolt%
\ \BBA {} Lall%
}{%
Bolt%
\ \BBA {} Lall%
}{%
{\protect \APACyear {2003}}%
}]{%
bolt2003estimation}
\APACinsertmetastar {%
bolt2003estimation}%
\begin{APACrefauthors}%
Bolt, D\BPBI M.%
\BCBT {}\ \BBA {} Lall, V\BPBI F.%
\end{APACrefauthors}%
\unskip\
\newblock
\APACrefYearMonthDay{2003}{}{}.
\newblock
{\BBOQ}\APACrefatitle {Estimation of compensatory and noncompensatory
  multidimensional item response models using {M}arkov chain {M}onte {C}arlo}
  {Estimation of compensatory and noncompensatory multidimensional item
  response models using {M}arkov chain {M}onte {C}arlo}.{\BBCQ}
\newblock
\APACjournalVolNumPages{Applied Psychological Measurement}{27}{}{395--414}.
\PrintBackRefs{\CurrentBib}

\bibitem [\protect \citeauthoryear {%
Bonnabel%
}{%
Bonnabel%
}{%
{\protect \APACyear {2013}}%
}]{%
bonnabel2013stochastic}
\APACinsertmetastar {%
bonnabel2013stochastic}%
\begin{APACrefauthors}%
Bonnabel, S.%
\end{APACrefauthors}%
\unskip\
\newblock
\APACrefYearMonthDay{2013}{}{}.
\newblock
{\BBOQ}\APACrefatitle {Stochastic gradient descent on {Riemannian} manifolds}
  {Stochastic gradient descent on {Riemannian} manifolds}.{\BBCQ}
\newblock
\APACjournalVolNumPages{IEEE Transactions on Automatic
  Control}{58}{}{2217--2229}.
\PrintBackRefs{\CurrentBib}

\bibitem [\protect \citeauthoryear {%
Boyd%
, Boyd%
\BCBL {}\ \BBA {} Vandenberghe%
}{%
Boyd%
\ \protect \BOthers {.}}{%
{\protect \APACyear {2004}}%
}]{%
boyd2004convex}
\APACinsertmetastar {%
boyd2004convex}%
\begin{APACrefauthors}%
Boyd, S.%
, Boyd, S\BPBI P.%
\BCBL {}\ \BBA {} Vandenberghe, L.%
\end{APACrefauthors}%
\unskip\
\newblock
\APACrefYear{2004}.
\newblock
\APACrefbtitle {Convex optimization} {Convex optimization}.
\newblock
\APACaddressPublisher{Cambridge, UK}{Cambridge University Press}.
\PrintBackRefs{\CurrentBib}

\bibitem [\protect \citeauthoryear {%
B{\"u}hlmann%
\ \BBA {} {van de Geer}%
}{%
B{\"u}hlmann%
\ \BBA {} {van de Geer}%
}{%
{\protect \APACyear {2011}}%
}]{%
buhlmannStatisticsHighDimensionalData2011}
\APACinsertmetastar {%
buhlmannStatisticsHighDimensionalData2011}%
\begin{APACrefauthors}%
B{\"u}hlmann, P.%
\BCBT {}\ \BBA {} {van de Geer}, S.%
\end{APACrefauthors}%
\unskip\
\newblock
\APACrefYear{2011}.
\newblock
\APACrefbtitle {Statistics for High-Dimensional Data} {Statistics for
  high-dimensional data}.
\newblock
\APACaddressPublisher{Berlin, Heidelberg}{Springer Berlin Heidelberg}.
\PrintBackRefs{\CurrentBib}

\bibitem [\protect \citeauthoryear {%
Cai%
}{%
Cai%
}{%
{\protect \APACyear {2010}}%
{\protect \APACexlab {{\protect \BCnt {1}}}}}]{%
cai2010high}
\APACinsertmetastar {%
cai2010high}%
\begin{APACrefauthors}%
Cai, L.%
\end{APACrefauthors}%
\unskip\
\newblock
\APACrefYearMonthDay{2010{\protect \BCnt {1}}}{}{}.
\newblock
{\BBOQ}\APACrefatitle {High-dimensional exploratory item factor analysis by a
  {M}etropolis--{H}astings {R}obbins--{M}onro algorithm} {High-dimensional
  exploratory item factor analysis by a {M}etropolis--{H}astings
  {R}obbins--{M}onro algorithm}.{\BBCQ}
\newblock
\APACjournalVolNumPages{Psychometrika}{75}{}{33--57}.
\PrintBackRefs{\CurrentBib}

\bibitem [\protect \citeauthoryear {%
Cai%
}{%
Cai%
}{%
{\protect \APACyear {2010}}%
{\protect \APACexlab {{\protect \BCnt {2}}}}}]{%
cai2010metropolis}
\APACinsertmetastar {%
cai2010metropolis}%
\begin{APACrefauthors}%
Cai, L.%
\end{APACrefauthors}%
\unskip\
\newblock
\APACrefYearMonthDay{2010{\protect \BCnt {2}}}{}{}.
\newblock
{\BBOQ}\APACrefatitle {Metropolis-{H}astings {R}obbins-{M}onro algorithm for
  confirmatory item factor analysis} {Metropolis-{H}astings {R}obbins-{M}onro
  algorithm for confirmatory item factor analysis}.{\BBCQ}
\newblock
\APACjournalVolNumPages{Journal of Educational and Behavioral
  Statistics}{35}{}{307--335}.
\PrintBackRefs{\CurrentBib}

\bibitem [\protect \citeauthoryear {%
Carroll%
, Ruppert%
, Stefanski%
\BCBL {}\ \BBA {} Crainiceanu%
}{%
Carroll%
\ \protect \BOthers {.}}{%
{\protect \APACyear {2006}}%
}]{%
carroll2006measurement}
\APACinsertmetastar {%
carroll2006measurement}%
\begin{APACrefauthors}%
Carroll, R\BPBI J.%
, Ruppert, D.%
, Stefanski, L\BPBI A.%
\BCBL {}\ \BBA {} Crainiceanu, C\BPBI M.%
\end{APACrefauthors}%
\unskip\
\newblock
\APACrefYear{2006}.
\newblock
\APACrefbtitle {Measurement error in nonlinear models: A modern perspective}
  {Measurement error in nonlinear models: A modern perspective}.
\newblock
\APACaddressPublisher{Boca Raton, FL}{CRC Press}.
\PrintBackRefs{\CurrentBib}

\bibitem [\protect \citeauthoryear {%
Celeux%
}{%
Celeux%
}{%
{\protect \APACyear {1985}}%
}]{%
celeux1985sem}
\APACinsertmetastar {%
celeux1985sem}%
\begin{APACrefauthors}%
Celeux, G.%
\end{APACrefauthors}%
\unskip\
\newblock
\APACrefYearMonthDay{1985}{}{}.
\newblock
{\BBOQ}\APACrefatitle {The {SEM} algorithm: A probabilistic teacher algorithm
  derived from the {EM} algorithm for the mixture problem} {The {SEM}
  algorithm: A probabilistic teacher algorithm derived from the {EM} algorithm
  for the mixture problem}.{\BBCQ}
\newblock
\APACjournalVolNumPages{Computational Statistics Quarterly}{2}{}{73--82}.
\PrintBackRefs{\CurrentBib}

\bibitem [\protect \citeauthoryear {%
Chen%
, Li%
, Liu%
\BCBL {}\ \BBA {} Ying%
}{%
Chen%
\ \protect \BOthers {.}}{%
{\protect \APACyear {2018}}%
}]{%
chen2018robust}
\APACinsertmetastar {%
chen2018robust}%
\begin{APACrefauthors}%
Chen, Y.%
, Li, X.%
, Liu, J.%
\BCBL {}\ \BBA {} Ying, Z.%
\end{APACrefauthors}%
\unskip\
\newblock
\APACrefYearMonthDay{2018}{}{}.
\newblock
{\BBOQ}\APACrefatitle {Robust measurement via a fused latent and graphical item
  response theory model} {Robust measurement via a fused latent and graphical
  item response theory model}.{\BBCQ}
\newblock
\APACjournalVolNumPages{Psychometrika}{83}{}{538--562}.
\PrintBackRefs{\CurrentBib}

\bibitem [\protect \citeauthoryear {%
Chen%
, Li%
\BCBL {}\ \BBA {} Zhang%
}{%
Chen%
\ \protect \BOthers {.}}{%
{\protect \APACyear {2019}}%
}]{%
chen2018joint}
\APACinsertmetastar {%
chen2018joint}%
\begin{APACrefauthors}%
Chen, Y.%
, Li, X.%
\BCBL {}\ \BBA {} Zhang, S.%
\end{APACrefauthors}%
\unskip\
\newblock
\APACrefYearMonthDay{2019}{}{}.
\newblock
{\BBOQ}\APACrefatitle {Joint maximum likelihood estimation for high-dimensional
  exploratory item factor analysis} {Joint maximum likelihood estimation for
  high-dimensional exploratory item factor analysis}.{\BBCQ}
\newblock
\APACjournalVolNumPages{Psychometrika}{84}{}{124--146}.
\PrintBackRefs{\CurrentBib}

\bibitem [\protect \citeauthoryear {%
Chen%
, Li%
\BCBL {}\ \BBA {} Zhang%
}{%
Chen%
\ \protect \BOthers {.}}{%
{\protect \APACyear {2020}}%
}]{%
chen2019structured}
\APACinsertmetastar {%
chen2019structured}%
\begin{APACrefauthors}%
Chen, Y.%
, Li, X.%
\BCBL {}\ \BBA {} Zhang, S.%
\end{APACrefauthors}%
\unskip\
\newblock
\APACrefYearMonthDay{2020}{}{}.
\newblock
{\BBOQ}\APACrefatitle {Structured latent factor analysis for large-scale data:
  Identifiability, estimability, and their implications} {Structured latent
  factor analysis for large-scale data: Identifiability, estimability, and
  their implications}.{\BBCQ}
\newblock
\APACjournalVolNumPages{Journal of the American Statistical
  Association}{115}{}{1756--1770}.
\PrintBackRefs{\CurrentBib}

\bibitem [\protect \citeauthoryear {%
Chen%
, Liu%
, Xu%
\BCBL {}\ \BBA {} Ying%
}{%
Chen%
\ \protect \BOthers {.}}{%
{\protect \APACyear {2015}}%
}]{%
chen2015statistical}
\APACinsertmetastar {%
chen2015statistical}%
\begin{APACrefauthors}%
Chen, Y.%
, Liu, J.%
, Xu, G.%
\BCBL {}\ \BBA {} Ying, Z.%
\end{APACrefauthors}%
\unskip\
\newblock
\APACrefYearMonthDay{2015}{}{}.
\newblock
{\BBOQ}\APACrefatitle {Statistical analysis of {Q}-matrix based diagnostic
  classification models} {Statistical analysis of {Q}-matrix based diagnostic
  classification models}.{\BBCQ}
\newblock
\APACjournalVolNumPages{Journal of the American Statistical
  Association}{110}{}{850--866}.
\PrintBackRefs{\CurrentBib}

\bibitem [\protect \citeauthoryear {%
Chen%
\ \BBA {} Zhang%
}{%
Chen%
\ \BBA {} Zhang%
}{%
{\protect \APACyear {2020}}%
{\protect \APACexlab {{\protect \BCnt {1}}}}}]{%
chen2020estimation}
\APACinsertmetastar {%
chen2020estimation}%
\begin{APACrefauthors}%
Chen, Y.%
\BCBT {}\ \BBA {} Zhang, S.%
\end{APACrefauthors}%
\unskip\
\newblock
\APACrefYearMonthDay{2020{\protect \BCnt {1}}}{}{}.
\newblock
{\BBOQ}\APACrefatitle {Estimation Methods for Item Factor Analysis: An
  Overview} {Estimation methods for item factor analysis: An overview}.{\BBCQ}
\newblock
\APACjournalVolNumPages{arXiv preprint arXiv:2004.07579}{}{}{}.
\PrintBackRefs{\CurrentBib}

\bibitem [\protect \citeauthoryear {%
Chen%
\ \BBA {} Zhang%
}{%
Chen%
\ \BBA {} Zhang%
}{%
{\protect \APACyear {2020}}%
{\protect \APACexlab {{\protect \BCnt {2}}}}}]{%
chen2020latent}
\APACinsertmetastar {%
chen2020latent}%
\begin{APACrefauthors}%
Chen, Y.%
\BCBT {}\ \BBA {} Zhang, S.%
\end{APACrefauthors}%
\unskip\
\newblock
\APACrefYearMonthDay{2020{\protect \BCnt {2}}}{}{}.
\newblock
{\BBOQ}\APACrefatitle {A Latent {G}aussian process model for analysing
  intensive longitudinal data} {A latent {G}aussian process model for analysing
  intensive longitudinal data}.{\BBCQ}
\newblock
\APACjournalVolNumPages{British Journal of Mathematical and Statistical
  Psychology}{73}{}{237--260}.
\PrintBackRefs{\CurrentBib}

\bibitem [\protect \citeauthoryear {%
Chow%
, Lu%
, Sherwood%
\BCBL {}\ \BBA {} Zhu%
}{%
Chow%
\ \protect \BOthers {.}}{%
{\protect \APACyear {2016}}%
}]{%
chow2016fitting}
\APACinsertmetastar {%
chow2016fitting}%
\begin{APACrefauthors}%
Chow, S\BHBI M.%
, Lu, Z.%
, Sherwood, A.%
\BCBL {}\ \BBA {} Zhu, H.%
\end{APACrefauthors}%
\unskip\
\newblock
\APACrefYearMonthDay{2016}{}{}.
\newblock
{\BBOQ}\APACrefatitle {Fitting nonlinear ordinary differential equation models
  with random effects and unknown initial conditions using the stochastic
  approximation expectation--maximization (SAEM) algorithm} {Fitting nonlinear
  ordinary differential equation models with random effects and unknown initial
  conditions using the stochastic approximation expectation--maximization
  (saem) algorithm}.{\BBCQ}
\newblock
\APACjournalVolNumPages{Psychometrika}{81}{}{102--134}.
\PrintBackRefs{\CurrentBib}

\bibitem [\protect \citeauthoryear {%
Clogg%
}{%
Clogg%
}{%
{\protect \APACyear {1995}}%
}]{%
clogg1995latent}
\APACinsertmetastar {%
clogg1995latent}%
\begin{APACrefauthors}%
Clogg, C\BPBI C.%
\end{APACrefauthors}%
\unskip\
\newblock
\APACrefYearMonthDay{1995}{}{}.
\newblock
{\BBOQ}\APACrefatitle {Latent class models} {Latent class models}.{\BBCQ}
\newblock
\BIn{} G.~Arminger, C\BPBI C.~Clogg\BCBL {}\ \BBA {} M\BPBI E.~Sobel\ (\BEDS),
  \APACrefbtitle {Handbook of Statistical Modeling for the Social and
  Behavioral Sciences} {Handbook of statistical modeling for the social and
  behavioral sciences}\ (\BPGS\ 311--359).
\newblock
\APACaddressPublisher{Boston, MA}{Springer US}.
\PrintBackRefs{\CurrentBib}

\bibitem [\protect \citeauthoryear {%
de~la Torre%
}{%
de~la Torre%
}{%
{\protect \APACyear {2011}}%
}]{%
de2011generalized}
\APACinsertmetastar {%
de2011generalized}%
\begin{APACrefauthors}%
de~la Torre, J.%
\end{APACrefauthors}%
\unskip\
\newblock
\APACrefYearMonthDay{2011}{}{}.
\newblock
{\BBOQ}\APACrefatitle {The generalized {DINA} model framework} {The generalized
  {DINA} model framework}.{\BBCQ}
\newblock
\APACjournalVolNumPages{Psychometrika}{76}{}{179--199}.
\PrintBackRefs{\CurrentBib}

\bibitem [\protect \citeauthoryear {%
Dembo%
}{%
Dembo%
}{%
{\protect \APACyear {2016}}%
}]{%
demboProbabilityTheorySTAT310}
\APACinsertmetastar {%
demboProbabilityTheorySTAT310}%
\begin{APACrefauthors}%
Dembo, A.%
\end{APACrefauthors}%
\unskip\
\newblock
\APACrefYearMonthDay{2016}{}{}.
\newblock
\APACrefbtitle {Probability Theory: STAT310/MATH230, Lecture Notes.}
  {Probability theory: Stat310/math230, lecture notes.}
\newblock
\APACaddressPublisher{Stanford, CA}{}.
\newblock
\begin{APACrefURL}
  \url{http://statweb.stanford.edu/~adembo/stat-310b/lnotes.pdf}
  \end{APACrefURL}
\newblock
\APACrefnote{Last visited on 2020/07/16}
\PrintBackRefs{\CurrentBib}

\bibitem [\protect \citeauthoryear {%
Dempster%
, Laird%
\BCBL {}\ \BBA {} Rubin%
}{%
Dempster%
\ \protect \BOthers {.}}{%
{\protect \APACyear {1977}}%
}]{%
dempster1977maximum}
\APACinsertmetastar {%
dempster1977maximum}%
\begin{APACrefauthors}%
Dempster, A\BPBI P.%
, Laird, N\BPBI M.%
\BCBL {}\ \BBA {} Rubin, D\BPBI B.%
\end{APACrefauthors}%
\unskip\
\newblock
\APACrefYearMonthDay{1977}{}{}.
\newblock
{\BBOQ}\APACrefatitle {Maximum likelihood from incomplete data via the {EM}
  algorithm} {Maximum likelihood from incomplete data via the {EM}
  algorithm}.{\BBCQ}
\newblock
\APACjournalVolNumPages{Journal of the Royal Statistical Society: Series B
  (Statistical Methodology)}{39}{}{1--38}.
\PrintBackRefs{\CurrentBib}

\bibitem [\protect \citeauthoryear {%
Duchi%
\ \BBA {} Ruan%
}{%
Duchi%
\ \BBA {} Ruan%
}{%
{\protect \APACyear {2018}}%
}]{%
duchiStochasticMethodsComposite2018}
\APACinsertmetastar {%
duchiStochasticMethodsComposite2018}%
\begin{APACrefauthors}%
Duchi, J\BPBI C.%
\BCBT {}\ \BBA {} Ruan, F.%
\end{APACrefauthors}%
\unskip\
\newblock
\APACrefYearMonthDay{2018}{}{}.
\newblock
{\BBOQ}\APACrefatitle {Stochastic Methods for Composite and Weakly Convex
  Optimization Problems} {Stochastic methods for composite and weakly convex
  optimization problems}.{\BBCQ}
\newblock
\APACjournalVolNumPages{SIAM Journal on Optimization}{28}{}{3229--3259}.
\PrintBackRefs{\CurrentBib}

\bibitem [\protect \citeauthoryear {%
Dunson%
}{%
Dunson%
}{%
{\protect \APACyear {2000}}%
}]{%
dunson2000bayesian}
\APACinsertmetastar {%
dunson2000bayesian}%
\begin{APACrefauthors}%
Dunson, D\BPBI B.%
\end{APACrefauthors}%
\unskip\
\newblock
\APACrefYearMonthDay{2000}{}{}.
\newblock
{\BBOQ}\APACrefatitle {Bayesian latent variable models for clustered mixed
  outcomes} {Bayesian latent variable models for clustered mixed
  outcomes}.{\BBCQ}
\newblock
\APACjournalVolNumPages{Journal of the Royal Statistical Society: Series B
  (Statistical Methodology)}{62}{}{355--366}.
\PrintBackRefs{\CurrentBib}

\bibitem [\protect \citeauthoryear {%
Dunson%
}{%
Dunson%
}{%
{\protect \APACyear {2003}}%
}]{%
dunson2003dynamic}
\APACinsertmetastar {%
dunson2003dynamic}%
\begin{APACrefauthors}%
Dunson, D\BPBI B.%
\end{APACrefauthors}%
\unskip\
\newblock
\APACrefYearMonthDay{2003}{}{}.
\newblock
{\BBOQ}\APACrefatitle {Dynamic latent trait models for multidimensional
  longitudinal data} {Dynamic latent trait models for multidimensional
  longitudinal data}.{\BBCQ}
\newblock
\APACjournalVolNumPages{Journal of the American Statistical
  Association}{98}{}{555--563}.
\PrintBackRefs{\CurrentBib}

\bibitem [\protect \citeauthoryear {%
Edwards%
}{%
Edwards%
}{%
{\protect \APACyear {2010}}%
}]{%
edwards2010markov}
\APACinsertmetastar {%
edwards2010markov}%
\begin{APACrefauthors}%
Edwards, M\BPBI C.%
\end{APACrefauthors}%
\unskip\
\newblock
\APACrefYearMonthDay{2010}{}{}.
\newblock
{\BBOQ}\APACrefatitle {A {M}arkov chain {M}onte {C}arlo approach to
  confirmatory item factor analysis} {A {M}arkov chain {M}onte {C}arlo approach
  to confirmatory item factor analysis}.{\BBCQ}
\newblock
\APACjournalVolNumPages{Psychometrika}{75}{}{474--497}.
\PrintBackRefs{\CurrentBib}

\bibitem [\protect \citeauthoryear {%
Embretson%
\ \BBA {} Reise%
}{%
Embretson%
\ \BBA {} Reise%
}{%
{\protect \APACyear {2000}}%
}]{%
embretson2000item}
\APACinsertmetastar {%
embretson2000item}%
\begin{APACrefauthors}%
Embretson, S\BPBI E.%
\BCBT {}\ \BBA {} Reise, S\BPBI P.%
\end{APACrefauthors}%
\unskip\
\newblock
\APACrefYear{2000}.
\newblock
\APACrefbtitle {Item response theory for psychologists} {Item response theory
  for psychologists}.
\newblock
\APACaddressPublisher{Mahwah, NJ}{Lawrence Erlbaum Associates Publishers}.
\PrintBackRefs{\CurrentBib}

\bibitem [\protect \citeauthoryear {%
Fabian%
}{%
Fabian%
}{%
{\protect \APACyear {1968}}%
}]{%
fabian1968asymptotic}
\APACinsertmetastar {%
fabian1968asymptotic}%
\begin{APACrefauthors}%
Fabian, V.%
\end{APACrefauthors}%
\unskip\
\newblock
\APACrefYearMonthDay{1968}{}{}.
\newblock
{\BBOQ}\APACrefatitle {On asymptotic normality in stochastic approximation} {On
  asymptotic normality in stochastic approximation}.{\BBCQ}
\newblock
\APACjournalVolNumPages{The Annals of Mathematical
  Statistics}{39}{}{1327--1332}.
\PrintBackRefs{\CurrentBib}

\bibitem [\protect \citeauthoryear {%
Floudas%
}{%
Floudas%
}{%
{\protect \APACyear {1995}}%
}]{%
floudas1995nonlinear}
\APACinsertmetastar {%
floudas1995nonlinear}%
\begin{APACrefauthors}%
Floudas, C\BPBI A.%
\end{APACrefauthors}%
\unskip\
\newblock
\APACrefYear{1995}.
\newblock
\APACrefbtitle {Nonlinear and mixed-integer optimization: Fundamentals and
  applications} {Nonlinear and mixed-integer optimization: Fundamentals and
  applications}.
\newblock
\APACaddressPublisher{Oxford, UK}{Oxford University Press}.
\PrintBackRefs{\CurrentBib}

\bibitem [\protect \citeauthoryear {%
Fort%
, Moulines%
\BCBL {}\ \BBA {} Priouret%
}{%
Fort%
\ \protect \BOthers {.}}{%
{\protect \APACyear {2011}}%
}]{%
fortConvergenceAdaptiveInteracting2011}
\APACinsertmetastar {%
fortConvergenceAdaptiveInteracting2011}%
\begin{APACrefauthors}%
Fort, G.%
, Moulines, E.%
\BCBL {}\ \BBA {} Priouret, P.%
\end{APACrefauthors}%
\unskip\
\newblock
\APACrefYearMonthDay{2011}{}{}.
\newblock
{\BBOQ}\APACrefatitle {Convergence of adaptive and interacting {Markov} chain
  {Monte Carlo} algorithms} {Convergence of adaptive and interacting {Markov}
  chain {Monte Carlo} algorithms}.{\BBCQ}
\newblock
\APACjournalVolNumPages{Annals of Statistics}{39}{}{3262--3289}.
\PrintBackRefs{\CurrentBib}

\bibitem [\protect \citeauthoryear {%
Fort%
, Moulines%
, Schreck%
\BCBL {}\ \BBA {} Vihola%
}{%
Fort%
\ \protect \BOthers {.}}{%
{\protect \APACyear {2016}}%
}]{%
fort2016convergence}
\APACinsertmetastar {%
fort2016convergence}%
\begin{APACrefauthors}%
Fort, G.%
, Moulines, {\'E}.%
, Schreck, A.%
\BCBL {}\ \BBA {} Vihola, M.%
\end{APACrefauthors}%
\unskip\
\newblock
\APACrefYearMonthDay{2016}{}{}.
\newblock
{\BBOQ}\APACrefatitle {Convergence of {Markovian} Stochastic Approximation with
  discontinuous dynamics} {Convergence of {Markovian} stochastic approximation
  with discontinuous dynamics}.{\BBCQ}
\newblock
\APACjournalVolNumPages{SIAM Journal on Control and
  Optimization}{54}{}{866--893}.
\PrintBackRefs{\CurrentBib}

\bibitem [\protect \citeauthoryear {%
Friedman%
, Hastie%
\BCBL {}\ \BBA {} Tibshirani%
}{%
Friedman%
\ \protect \BOthers {.}}{%
{\protect \APACyear {2001}}%
}]{%
friedman2001elements}
\APACinsertmetastar {%
friedman2001elements}%
\begin{APACrefauthors}%
Friedman, J.%
, Hastie, T.%
\BCBL {}\ \BBA {} Tibshirani, R.%
\end{APACrefauthors}%
\unskip\
\newblock
\APACrefYear{2001}.
\newblock
\APACrefbtitle {The elements of statistical learning} {The elements of
  statistical learning}.
\newblock
\APACaddressPublisher{New York, NY}{Springer}.
\PrintBackRefs{\CurrentBib}

\bibitem [\protect \citeauthoryear {%
Ghosh%
}{%
Ghosh%
}{%
{\protect \APACyear {1995}}%
}]{%
ghosh1995inconsistent}
\APACinsertmetastar {%
ghosh1995inconsistent}%
\begin{APACrefauthors}%
Ghosh, M.%
\end{APACrefauthors}%
\unskip\
\newblock
\APACrefYearMonthDay{1995}{}{}.
\newblock
{\BBOQ}\APACrefatitle {Inconsistent maximum likelihood estimators for the
  {R}asch model} {Inconsistent maximum likelihood estimators for the {R}asch
  model}.{\BBCQ}
\newblock
\APACjournalVolNumPages{Statistics \& Probability Letters}{23}{}{165--170}.
\PrintBackRefs{\CurrentBib}

\bibitem [\protect \citeauthoryear {%
Gilks%
\ \BBA {} Wild%
}{%
Gilks%
\ \BBA {} Wild%
}{%
{\protect \APACyear {1992}}%
}]{%
gilksAdaptiveRejectionSampling1992}
\APACinsertmetastar {%
gilksAdaptiveRejectionSampling1992}%
\begin{APACrefauthors}%
Gilks, W\BPBI R.%
\BCBT {}\ \BBA {} Wild, P.%
\end{APACrefauthors}%
\unskip\
\newblock
\APACrefYearMonthDay{1992}{}{}.
\newblock
{\BBOQ}\APACrefatitle {Adaptive Rejection Sampling for {Gibbs} Sampling}
  {Adaptive rejection sampling for {Gibbs} sampling}.{\BBCQ}
\newblock
\APACjournalVolNumPages{Journal of the Royal Statistical Society. Series C
  (Applied Statistics)}{41}{}{337-348}.
\PrintBackRefs{\CurrentBib}

\bibitem [\protect \citeauthoryear {%
Goldfarb%
\ \BBA {} Idnani%
}{%
Goldfarb%
\ \BBA {} Idnani%
}{%
{\protect \APACyear {1983}}%
}]{%
goldfarb1983numerically}
\APACinsertmetastar {%
goldfarb1983numerically}%
\begin{APACrefauthors}%
Goldfarb, D.%
\BCBT {}\ \BBA {} Idnani, A.%
\end{APACrefauthors}%
\unskip\
\newblock
\APACrefYearMonthDay{1983}{}{}.
\newblock
{\BBOQ}\APACrefatitle {A numerically stable dual method for solving strictly
  convex quadratic programs} {A numerically stable dual method for solving
  strictly convex quadratic programs}.{\BBCQ}
\newblock
\APACjournalVolNumPages{Mathematical Programming}{27}{}{1--33}.
\PrintBackRefs{\CurrentBib}

\bibitem [\protect \citeauthoryear {%
Gu%
\ \BBA {} Kong%
}{%
Gu%
\ \BBA {} Kong%
}{%
{\protect \APACyear {1998}}%
}]{%
gu1998stochastic}
\APACinsertmetastar {%
gu1998stochastic}%
\begin{APACrefauthors}%
Gu, M\BPBI G.%
\BCBT {}\ \BBA {} Kong, F\BPBI H.%
\end{APACrefauthors}%
\unskip\
\newblock
\APACrefYearMonthDay{1998}{}{}.
\newblock
{\BBOQ}\APACrefatitle {A stochastic approximation algorithm with {Markov chain
  Monte-Carlo} method for incomplete data estimation problems} {A stochastic
  approximation algorithm with {Markov chain Monte-Carlo} method for incomplete
  data estimation problems}.{\BBCQ}
\newblock
\APACjournalVolNumPages{Proceedings of the National Academy of Sciences of the
  United States of America}{95}{}{7270-7274}.
\PrintBackRefs{\CurrentBib}

\bibitem [\protect \citeauthoryear {%
Haberman%
}{%
Haberman%
}{%
{\protect \APACyear {1977}}%
}]{%
haberman1977maximum}
\APACinsertmetastar {%
haberman1977maximum}%
\begin{APACrefauthors}%
Haberman, S\BPBI J.%
\end{APACrefauthors}%
\unskip\
\newblock
\APACrefYearMonthDay{1977}{}{}.
\newblock
{\BBOQ}\APACrefatitle {Maximum likelihood estimates in exponential response
  models} {Maximum likelihood estimates in exponential response models}.{\BBCQ}
\newblock
\APACjournalVolNumPages{The Annals of Statistics}{5}{}{815--841}.
\PrintBackRefs{\CurrentBib}

\bibitem [\protect \citeauthoryear {%
Haberman%
}{%
Haberman%
}{%
{\protect \APACyear {2004}}%
}]{%
haberman2004joint}
\APACinsertmetastar {%
haberman2004joint}%
\begin{APACrefauthors}%
Haberman, S\BPBI J.%
\end{APACrefauthors}%
\unskip\
\newblock
\APACrefYearMonthDay{2004}{}{}.
\newblock
{\BBOQ}\APACrefatitle {Joint and conditional maximum likelihood estimation for
  the {R}asch model for binary responses} {Joint and conditional maximum
  likelihood estimation for the {R}asch model for binary responses}.{\BBCQ}
\newblock
\APACjournalVolNumPages{ETS Research Report Series RR-04-20}{}{}{}.
\PrintBackRefs{\CurrentBib}

\bibitem [\protect \citeauthoryear {%
Hsiao%
}{%
Hsiao%
}{%
{\protect \APACyear {2014}}%
}]{%
hsiao2014analysis}
\APACinsertmetastar {%
hsiao2014analysis}%
\begin{APACrefauthors}%
Hsiao, C.%
\end{APACrefauthors}%
\unskip\
\newblock
\APACrefYear{2014}.
\newblock
\APACrefbtitle {Analysis of panel data} {Analysis of panel data}.
\newblock
\APACaddressPublisher{Cambridge, UK}{Cambridge University Press}.
\PrintBackRefs{\CurrentBib}

\bibitem [\protect \citeauthoryear {%
Ip%
}{%
Ip%
}{%
{\protect \APACyear {2002}}%
}]{%
ip2002single}
\APACinsertmetastar {%
ip2002single}%
\begin{APACrefauthors}%
Ip, E\BPBI H.%
\end{APACrefauthors}%
\unskip\
\newblock
\APACrefYearMonthDay{2002}{}{}.
\newblock
{\BBOQ}\APACrefatitle {On Single Versus Multiple Imputation for a Class of
  Stochastic Algorithms Estimating Maximum Likelihood} {On single versus
  multiple imputation for a class of stochastic algorithms estimating maximum
  likelihood}.{\BBCQ}
\newblock
\APACjournalVolNumPages{Computational Statistics}{17}{}{517--524}.
\PrintBackRefs{\CurrentBib}

\bibitem [\protect \citeauthoryear {%
Jacobucci%
, Grimm%
\BCBL {}\ \BBA {} McArdle%
}{%
Jacobucci%
\ \protect \BOthers {.}}{%
{\protect \APACyear {2016}}%
}]{%
jacobucci2016regularized}
\APACinsertmetastar {%
jacobucci2016regularized}%
\begin{APACrefauthors}%
Jacobucci, R.%
, Grimm, K\BPBI J.%
\BCBL {}\ \BBA {} McArdle, J\BPBI J.%
\end{APACrefauthors}%
\unskip\
\newblock
\APACrefYearMonthDay{2016}{}{}.
\newblock
{\BBOQ}\APACrefatitle {Regularized structural equation modeling} {Regularized
  structural equation modeling}.{\BBCQ}
\newblock
\APACjournalVolNumPages{Structural Equation Modeling: A Multidisciplinary
  Journal}{23}{}{555--566}.
\PrintBackRefs{\CurrentBib}

\bibitem [\protect \citeauthoryear {%
Jennrich%
}{%
Jennrich%
}{%
{\protect \APACyear {2002}}%
}]{%
jennrich2002simple}
\APACinsertmetastar {%
jennrich2002simple}%
\begin{APACrefauthors}%
Jennrich, R\BPBI I.%
\end{APACrefauthors}%
\unskip\
\newblock
\APACrefYearMonthDay{2002}{}{}.
\newblock
{\BBOQ}\APACrefatitle {A simple general method for oblique rotation} {A simple
  general method for oblique rotation}.{\BBCQ}
\newblock
\APACjournalVolNumPages{Psychometrika}{67}{}{7--19}.
\PrintBackRefs{\CurrentBib}

\bibitem [\protect \citeauthoryear {%
J{\"o}reskog%
\ \BBA {} Moustaki%
}{%
J{\"o}reskog%
\ \BBA {} Moustaki%
}{%
{\protect \APACyear {2001}}%
}]{%
joreskog2001factor}
\APACinsertmetastar {%
joreskog2001factor}%
\begin{APACrefauthors}%
J{\"o}reskog, K\BPBI G.%
\BCBT {}\ \BBA {} Moustaki, I.%
\end{APACrefauthors}%
\unskip\
\newblock
\APACrefYearMonthDay{2001}{}{}.
\newblock
{\BBOQ}\APACrefatitle {Factor analysis of ordinal variables: A comparison of
  three approaches} {Factor analysis of ordinal variables: A comparison of
  three approaches}.{\BBCQ}
\newblock
\APACjournalVolNumPages{Multivariate Behavioral Research}{36}{}{347--387}.
\PrintBackRefs{\CurrentBib}

\bibitem [\protect \citeauthoryear {%
Junker%
\ \BBA {} Sijtsma%
}{%
Junker%
\ \BBA {} Sijtsma%
}{%
{\protect \APACyear {2001}}%
}]{%
junker2001cognitive}
\APACinsertmetastar {%
junker2001cognitive}%
\begin{APACrefauthors}%
Junker, B\BPBI W.%
\BCBT {}\ \BBA {} Sijtsma, K.%
\end{APACrefauthors}%
\unskip\
\newblock
\APACrefYearMonthDay{2001}{}{}.
\newblock
{\BBOQ}\APACrefatitle {Cognitive assessment models with few assumptions, and
  connections with nonparametric item response theory} {Cognitive assessment
  models with few assumptions, and connections with nonparametric item response
  theory}.{\BBCQ}
\newblock
\APACjournalVolNumPages{Applied Psychological Measurement}{25}{}{258--272}.
\PrintBackRefs{\CurrentBib}

\bibitem [\protect \citeauthoryear {%
Lee%
, Sun%
\BCBL {}\ \BBA {} Saunders%
}{%
Lee%
\ \protect \BOthers {.}}{%
{\protect \APACyear {2014}}%
}]{%
lee2014proximal}
\APACinsertmetastar {%
lee2014proximal}%
\begin{APACrefauthors}%
Lee, J\BPBI D.%
, Sun, Y.%
\BCBL {}\ \BBA {} Saunders, M\BPBI A.%
\end{APACrefauthors}%
\unskip\
\newblock
\APACrefYearMonthDay{2014}{}{}.
\newblock
{\BBOQ}\APACrefatitle {Proximal {Newton}-type methods for minimizing composite
  functions} {Proximal {Newton}-type methods for minimizing composite
  functions}.{\BBCQ}
\newblock
\APACjournalVolNumPages{SIAM Journal on Optimization}{24}{}{1420--1443}.
\PrintBackRefs{\CurrentBib}

\bibitem [\protect \citeauthoryear {%
Lindstr{\o}m%
\ \BBA {} Dahl%
}{%
Lindstr{\o}m%
\ \BBA {} Dahl%
}{%
{\protect \APACyear {2020}}%
}]{%
lindstrom2020model}
\APACinsertmetastar {%
lindstrom2020model}%
\begin{APACrefauthors}%
Lindstr{\o}m, J\BPBI C.%
\BCBT {}\ \BBA {} Dahl, F\BPBI A.%
\end{APACrefauthors}%
\unskip\
\newblock
\APACrefYearMonthDay{2020}{}{}.
\newblock
{\BBOQ}\APACrefatitle {Model selection with {Lasso} in multi-group structural
  equation models} {Model selection with {Lasso} in multi-group structural
  equation models}.{\BBCQ}
\newblock
\APACjournalVolNumPages{Structural Equation Modeling: A Multidisciplinary
  Journal}{27}{}{33--42}.
\PrintBackRefs{\CurrentBib}

\bibitem [\protect \citeauthoryear {%
Little%
\ \BBA {} Rubin%
}{%
Little%
\ \BBA {} Rubin%
}{%
{\protect \APACyear {1987}}%
}]{%
little2019statistical}
\APACinsertmetastar {%
little2019statistical}%
\begin{APACrefauthors}%
Little, R\BPBI J.%
\BCBT {}\ \BBA {} Rubin, D\BPBI B.%
\end{APACrefauthors}%
\unskip\
\newblock
\APACrefYear{1987}.
\newblock
\APACrefbtitle {Statistical analysis with missing data} {Statistical analysis
  with missing data}.
\newblock
\APACaddressPublisher{Hoboken, NJ}{John Wiley \& Sons}.
\PrintBackRefs{\CurrentBib}

\bibitem [\protect \citeauthoryear {%
Louis%
}{%
Louis%
}{%
{\protect \APACyear {1982}}%
}]{%
louis1982finding}
\APACinsertmetastar {%
louis1982finding}%
\begin{APACrefauthors}%
Louis, T\BPBI A.%
\end{APACrefauthors}%
\unskip\
\newblock
\APACrefYearMonthDay{1982}{}{}.
\newblock
{\BBOQ}\APACrefatitle {Finding the observed information matrix when using the
  {EM} algorithm} {Finding the observed information matrix when using the {EM}
  algorithm}.{\BBCQ}
\newblock
\APACjournalVolNumPages{Journal of the Royal Statistical Society: Series B
  (Statistical Methodology)}{44}{}{226--233}.
\PrintBackRefs{\CurrentBib}

\bibitem [\protect \citeauthoryear {%
Magis%
, Tuerlinckx%
\BCBL {}\ \BBA {} De~Boeck%
}{%
Magis%
\ \protect \BOthers {.}}{%
{\protect \APACyear {2015}}%
}]{%
magis2015detection}
\APACinsertmetastar {%
magis2015detection}%
\begin{APACrefauthors}%
Magis, D.%
, Tuerlinckx, F.%
\BCBL {}\ \BBA {} De~Boeck, P.%
\end{APACrefauthors}%
\unskip\
\newblock
\APACrefYearMonthDay{2015}{}{}.
\newblock
{\BBOQ}\APACrefatitle {Detection of differential item functioning using the
  {Lasso} approach} {Detection of differential item functioning using the
  {Lasso} approach}.{\BBCQ}
\newblock
\APACjournalVolNumPages{Journal of Educational and Behavioral
  Statistics}{40}{}{111--135}.
\PrintBackRefs{\CurrentBib}

\bibitem [\protect \citeauthoryear {%
Nesterov%
}{%
Nesterov%
}{%
{\protect \APACyear {2004}}%
}]{%
nesterov2004introductory}
\APACinsertmetastar {%
nesterov2004introductory}%
\begin{APACrefauthors}%
Nesterov, Y.%
\end{APACrefauthors}%
\unskip\
\newblock
\APACrefYear{2004}.
\newblock
\APACrefbtitle {Introductory lectures on convex optimization: A basic course}
  {Introductory lectures on convex optimization: A basic course}.
\newblock
\APACaddressPublisher{Boston, MA}{Kluwer Academic Publishers}.
\PrintBackRefs{\CurrentBib}

\bibitem [\protect \citeauthoryear {%
Neyman%
\ \BBA {} Scott%
}{%
Neyman%
\ \BBA {} Scott%
}{%
{\protect \APACyear {1948}}%
}]{%
neyman1948consistent}
\APACinsertmetastar {%
neyman1948consistent}%
\begin{APACrefauthors}%
Neyman, J.%
\BCBT {}\ \BBA {} Scott, E\BPBI L.%
\end{APACrefauthors}%
\unskip\
\newblock
\APACrefYearMonthDay{1948}{}{}.
\newblock
{\BBOQ}\APACrefatitle {Consistent estimates based on partially consistent
  observations} {Consistent estimates based on partially consistent
  observations}.{\BBCQ}
\newblock
\APACjournalVolNumPages{Econometrica}{16}{}{1--32}.
\PrintBackRefs{\CurrentBib}

\bibitem [\protect \citeauthoryear {%
Nielsen%
}{%
Nielsen%
}{%
{\protect \APACyear {2000}}%
}]{%
nielsen2000stochastic}
\APACinsertmetastar {%
nielsen2000stochastic}%
\begin{APACrefauthors}%
Nielsen, S\BPBI F.%
\end{APACrefauthors}%
\unskip\
\newblock
\APACrefYearMonthDay{2000}{}{}.
\newblock
{\BBOQ}\APACrefatitle {The stochastic {EM} algorithm: Estimation and asymptotic
  results} {The stochastic {EM} algorithm: Estimation and asymptotic
  results}.{\BBCQ}
\newblock
\APACjournalVolNumPages{Bernoulli}{6}{}{457--489}.
\PrintBackRefs{\CurrentBib}

\bibitem [\protect \citeauthoryear {%
Parikh%
\ \BBA {} Boyd%
}{%
Parikh%
\ \BBA {} Boyd%
}{%
{\protect \APACyear {2014}}%
}]{%
parikh2014proximal}
\APACinsertmetastar {%
parikh2014proximal}%
\begin{APACrefauthors}%
Parikh, N.%
\BCBT {}\ \BBA {} Boyd, S.%
\end{APACrefauthors}%
\unskip\
\newblock
\APACrefYearMonthDay{2014}{}{}.
\newblock
{\BBOQ}\APACrefatitle {Proximal algorithms} {Proximal algorithms}.{\BBCQ}
\newblock
\APACjournalVolNumPages{Foundations and Trends{\textregistered} in
  Optimization}{1}{}{127--239}.
\PrintBackRefs{\CurrentBib}

\bibitem [\protect \citeauthoryear {%
Polyak%
\ \BBA {} Juditsky%
}{%
Polyak%
\ \BBA {} Juditsky%
}{%
{\protect \APACyear {1992}}%
}]{%
polyak1992acceleration}
\APACinsertmetastar {%
polyak1992acceleration}%
\begin{APACrefauthors}%
Polyak, B\BPBI T.%
\BCBT {}\ \BBA {} Juditsky, A\BPBI B.%
\end{APACrefauthors}%
\unskip\
\newblock
\APACrefYearMonthDay{1992}{}{}.
\newblock
{\BBOQ}\APACrefatitle {Acceleration of stochastic approximation by averaging}
  {Acceleration of stochastic approximation by averaging}.{\BBCQ}
\newblock
\APACjournalVolNumPages{SIAM Journal on Control and
  Optimization}{30}{}{838--855}.
\PrintBackRefs{\CurrentBib}

\bibitem [\protect \citeauthoryear {%
Rabe-Hesketh%
\ \BBA {} Skrondal%
}{%
Rabe-Hesketh%
\ \BBA {} Skrondal%
}{%
{\protect \APACyear {2004}}%
}]{%
rabe2004generalized}
\APACinsertmetastar {%
rabe2004generalized}%
\begin{APACrefauthors}%
Rabe-Hesketh, S.%
\BCBT {}\ \BBA {} Skrondal, A.%
\end{APACrefauthors}%
\unskip\
\newblock
\APACrefYear{2004}.
\newblock
\APACrefbtitle {Generalized latent variable modeling: Multilevel, longitudinal,
  and structural equation models} {Generalized latent variable modeling:
  Multilevel, longitudinal, and structural equation models}.
\newblock
\APACaddressPublisher{New York, NY}{Chapman and Hall/CRC}.
\PrintBackRefs{\CurrentBib}

\bibitem [\protect \citeauthoryear {%
Reckase%
}{%
Reckase%
}{%
{\protect \APACyear {2009}}%
}]{%
reckase2009multidimensional}
\APACinsertmetastar {%
reckase2009multidimensional}%
\begin{APACrefauthors}%
Reckase, M.%
\end{APACrefauthors}%
\unskip\
\newblock
\APACrefYear{2009}.
\newblock
\APACrefbtitle {Multidimensional item response theory} {Multidimensional item
  response theory}.
\newblock
\APACaddressPublisher{New York, NY}{Springer}.
\PrintBackRefs{\CurrentBib}

\bibitem [\protect \citeauthoryear {%
Robbins%
}{%
Robbins%
}{%
{\protect \APACyear {1956}}%
}]{%
robbins1956empirical}
\APACinsertmetastar {%
robbins1956empirical}%
\begin{APACrefauthors}%
Robbins, H.%
\end{APACrefauthors}%
\unskip\
\newblock
\APACrefYearMonthDay{1956}{}{}.
\newblock
{\BBOQ}\APACrefatitle {An empirical {B}ayes approach to statistics} {An
  empirical {B}ayes approach to statistics}.{\BBCQ}
\newblock
\BIn{} J.~Neyman\ (\BED), \APACrefbtitle {Proceedings of the Third {Berkeley}
  Symposium on Mathematical Statistics and Probability} {Proceedings of the
  third {Berkeley} symposium on mathematical statistics and probability}\
  (\BPGS\ 157--163).
\newblock
\APACaddressPublisher{Berkeley, CA}{University of California Press}.
\PrintBackRefs{\CurrentBib}

\bibitem [\protect \citeauthoryear {%
Robbins%
\ \BBA {} Monro%
}{%
Robbins%
\ \BBA {} Monro%
}{%
{\protect \APACyear {1951}}%
}]{%
robbins1951stochastic}
\APACinsertmetastar {%
robbins1951stochastic}%
\begin{APACrefauthors}%
Robbins, H.%
\BCBT {}\ \BBA {} Monro, S.%
\end{APACrefauthors}%
\unskip\
\newblock
\APACrefYearMonthDay{1951}{}{}.
\newblock
{\BBOQ}\APACrefatitle {A stochastic approximation method} {A stochastic
  approximation method}.{\BBCQ}
\newblock
\APACjournalVolNumPages{The Annals of Mathematical Statistics}{22}{}{400--407}.
\PrintBackRefs{\CurrentBib}

\bibitem [\protect \citeauthoryear {%
Rockafellar%
\ \BBA {} Wets%
}{%
Rockafellar%
\ \BBA {} Wets%
}{%
{\protect \APACyear {1998}}%
}]{%
rockafellar1998variational}
\APACinsertmetastar {%
rockafellar1998variational}%
\begin{APACrefauthors}%
Rockafellar, R\BPBI T.%
\BCBT {}\ \BBA {} Wets, R\BPBI J.%
\end{APACrefauthors}%
\unskip\
\newblock
\APACrefYear{1998}.
\newblock
\APACrefbtitle {Variational analysis} {Variational analysis}.
\newblock
\APACaddressPublisher{New York, NY}{Springer}.
\PrintBackRefs{\CurrentBib}

\bibitem [\protect \citeauthoryear {%
Rupp%
, Templin%
\BCBL {}\ \BBA {} Henson%
}{%
Rupp%
\ \protect \BOthers {.}}{%
{\protect \APACyear {2010}}%
}]{%
templin2010diagnostic}
\APACinsertmetastar {%
templin2010diagnostic}%
\begin{APACrefauthors}%
Rupp, A\BPBI A.%
, Templin, J.%
\BCBL {}\ \BBA {} Henson, R\BPBI A.%
\end{APACrefauthors}%
\unskip\
\newblock
\APACrefYear{2010}.
\newblock
\APACrefbtitle {Diagnostic measurement: Theory, methods, and applications}
  {Diagnostic measurement: Theory, methods, and applications}.
\newblock
\APACaddressPublisher{New York, NY}{Guilford Press}.
\PrintBackRefs{\CurrentBib}

\bibitem [\protect \citeauthoryear {%
Ruppert%
}{%
Ruppert%
}{%
{\protect \APACyear {1988}}%
}]{%
ruppert1988efficient}
\APACinsertmetastar {%
ruppert1988efficient}%
\begin{APACrefauthors}%
Ruppert, D.%
\end{APACrefauthors}%
\unskip\
\newblock
\APACrefYearMonthDay{1988}{}{}.
\newblock
{\BBOQ}\APACrefatitle {Efficient estimators from a slowly convergent
  {Robbins-Monro} procedure} {Efficient estimators from a slowly convergent
  {Robbins-Monro} procedure}.{\BBCQ}
\newblock
\APACjournalVolNumPages{School of Oper. Res. and Ind. Eng., Cornell Univ.,
  Ithaca, NY, Tech. Rep No. 781}{}{}{}.
\PrintBackRefs{\CurrentBib}

\bibitem [\protect \citeauthoryear {%
Schmidt%
, Roux%
\BCBL {}\ \BBA {} Bach%
}{%
Schmidt%
\ \protect \BOthers {.}}{%
{\protect \APACyear {2011}}%
}]{%
schmidtConvergenceRatesInexact2011a}
\APACinsertmetastar {%
schmidtConvergenceRatesInexact2011a}%
\begin{APACrefauthors}%
Schmidt, M.%
, Roux, N\BPBI L.%
\BCBL {}\ \BBA {} Bach, F\BPBI R.%
\end{APACrefauthors}%
\unskip\
\newblock
\APACrefYearMonthDay{2011}{}{}.
\newblock
{\BBOQ}\APACrefatitle {Convergence Rates of Inexact Proximal-Gradient Methods
  for Convex Optimization} {Convergence rates of inexact proximal-gradient
  methods for convex optimization}.{\BBCQ}
\newblock
\BIn{} \APACrefbtitle {Advances in {Neural Information Processing Systems} 24}
  {Advances in {Neural Information Processing Systems} 24}\ (\BPGS\
  1458--1466).
\newblock
\APACaddressPublisher{Red Hook, NY}{Curran Associates, Inc.}
\PrintBackRefs{\CurrentBib}

\bibitem [\protect \citeauthoryear {%
Spall%
}{%
Spall%
}{%
{\protect \APACyear {2003}}%
}]{%
spall2003introduction}
\APACinsertmetastar {%
spall2003introduction}%
\begin{APACrefauthors}%
Spall, J\BPBI C.%
\end{APACrefauthors}%
\unskip\
\newblock
\APACrefYear{2003}.
\newblock
\APACrefbtitle {Introduction to Stochastic Search and Optimization}
  {Introduction to stochastic search and optimization}.
\newblock
\APACaddressPublisher{Hoboken, NJ}{John Wiley \& Sons.}
\PrintBackRefs{\CurrentBib}

\bibitem [\protect \citeauthoryear {%
Sun%
, Chen%
, Liu%
, Ying%
\BCBL {}\ \BBA {} Xin%
}{%
Sun%
\ \protect \BOthers {.}}{%
{\protect \APACyear {2016}}%
}]{%
sun2016latent}
\APACinsertmetastar {%
sun2016latent}%
\begin{APACrefauthors}%
Sun, J.%
, Chen, Y.%
, Liu, J.%
, Ying, Z.%
\BCBL {}\ \BBA {} Xin, T.%
\end{APACrefauthors}%
\unskip\
\newblock
\APACrefYearMonthDay{2016}{}{}.
\newblock
{\BBOQ}\APACrefatitle {Latent variable selection for multidimensional item
  response theory models via {$L_1$} regularization} {Latent variable selection
  for multidimensional item response theory models via {$L_1$}
  regularization}.{\BBCQ}
\newblock
\APACjournalVolNumPages{Psychometrika}{81}{}{921--939}.
\PrintBackRefs{\CurrentBib}

\bibitem [\protect \citeauthoryear {%
Tibshirani%
}{%
Tibshirani%
}{%
{\protect \APACyear {1996}}%
}]{%
tibshirani1996regression}
\APACinsertmetastar {%
tibshirani1996regression}%
\begin{APACrefauthors}%
Tibshirani, R.%
\end{APACrefauthors}%
\unskip\
\newblock
\APACrefYearMonthDay{1996}{}{}.
\newblock
{\BBOQ}\APACrefatitle {Regression shrinkage and selection via the {Lasso}}
  {Regression shrinkage and selection via the {Lasso}}.{\BBCQ}
\newblock
\APACjournalVolNumPages{Journal of the Royal Statistical Society: Series B
  (Statistical Methodology)}{58}{}{267--288}.
\PrintBackRefs{\CurrentBib}

\bibitem [\protect \citeauthoryear {%
Turlach%
, Weingessel%
\BCBL {}\ \BBA {} Moler%
}{%
Turlach%
\ \protect \BOthers {.}}{%
{\protect \APACyear {2019}}%
}]{%
quadprog}
\APACinsertmetastar {%
quadprog}%
\begin{APACrefauthors}%
Turlach, B.%
, Weingessel, A.%
\BCBL {}\ \BBA {} Moler, C.%
\end{APACrefauthors}%
\unskip\
\newblock
\APACrefYearMonthDay{2019}{}{}.
\newblock
{\BBOQ}\APACrefatitle {quadprog: Functions to Solve Quadratic Programming
  Problems} {quadprog: Functions to solve quadratic programming
  problems}{\BBCQ}\ [\bibcomputersoftwaremanual].
\newblock
\APACrefnote{R package version 1.5-8}
\PrintBackRefs{\CurrentBib}

\bibitem [\protect \citeauthoryear {%
Tutz%
\ \BBA {} Schauberger%
}{%
Tutz%
\ \BBA {} Schauberger%
}{%
{\protect \APACyear {2015}}%
}]{%
tutz2015penalty}
\APACinsertmetastar {%
tutz2015penalty}%
\begin{APACrefauthors}%
Tutz, G.%
\BCBT {}\ \BBA {} Schauberger, G.%
\end{APACrefauthors}%
\unskip\
\newblock
\APACrefYearMonthDay{2015}{}{}.
\newblock
{\BBOQ}\APACrefatitle {A penalty approach to differential item functioning in
  {R}asch models} {A penalty approach to differential item functioning in
  {R}asch models}.{\BBCQ}
\newblock
\APACjournalVolNumPages{Psychometrika}{80}{}{21--43}.
\PrintBackRefs{\CurrentBib}

\bibitem [\protect \citeauthoryear {%
{V}an~{D}e Schoot%
, Hoijtink%
\BCBL {}\ \BBA {} Dekovi{\'c}%
}{%
{V}an~{D}e Schoot%
\ \protect \BOthers {.}}{%
{\protect \APACyear {2010}}%
}]{%
van2010testing}
\APACinsertmetastar {%
van2010testing}%
\begin{APACrefauthors}%
{V}an~{D}e Schoot, R.%
, Hoijtink, H.%
\BCBL {}\ \BBA {} Dekovi{\'c}, M.%
\end{APACrefauthors}%
\unskip\
\newblock
\APACrefYearMonthDay{2010}{}{}.
\newblock
{\BBOQ}\APACrefatitle {Testing inequality constrained hypotheses in {SEM}
  models} {Testing inequality constrained hypotheses in {SEM} models}.{\BBCQ}
\newblock
\APACjournalVolNumPages{Structural Equation Modeling}{17}{}{443--463}.
\PrintBackRefs{\CurrentBib}

\bibitem [\protect \citeauthoryear {%
Vasdekis%
, Cagnone%
\BCBL {}\ \BBA {} Moustaki%
}{%
Vasdekis%
\ \protect \BOthers {.}}{%
{\protect \APACyear {2012}}%
}]{%
vasdekis2012composite}
\APACinsertmetastar {%
vasdekis2012composite}%
\begin{APACrefauthors}%
Vasdekis, V\BPBI G.%
, Cagnone, S.%
\BCBL {}\ \BBA {} Moustaki, I.%
\end{APACrefauthors}%
\unskip\
\newblock
\APACrefYearMonthDay{2012}{}{}.
\newblock
{\BBOQ}\APACrefatitle {A composite likelihood inference in latent variable
  models for ordinal longitudinal responses} {A composite likelihood inference
  in latent variable models for ordinal longitudinal responses}.{\BBCQ}
\newblock
\APACjournalVolNumPages{Psychometrika}{77}{}{425--441}.
\PrintBackRefs{\CurrentBib}

\bibitem [\protect \citeauthoryear {%
von Davier%
\ \BBA {} Lee%
}{%
von Davier%
\ \BBA {} Lee%
}{%
{\protect \APACyear {2019}}%
}]{%
von2019handbook}
\APACinsertmetastar {%
von2019handbook}%
\begin{APACrefauthors}%
von Davier, M.%
\BCBT {}\ \BBA {} Lee, Y\BHBI S.%
\end{APACrefauthors}%
\unskip\
\newblock
\APACrefYear{2019}.
\newblock
\APACrefbtitle {Handbook of Diagnostic Classification Models} {Handbook of
  diagnostic classification models}.
\newblock
\APACaddressPublisher{New York, NY}{Springer}.
\PrintBackRefs{\CurrentBib}

\bibitem [\protect \citeauthoryear {%
Wu%
}{%
Wu%
}{%
{\protect \APACyear {1983}}%
}]{%
wu1983convergence}
\APACinsertmetastar {%
wu1983convergence}%
\begin{APACrefauthors}%
Wu, C\BPBI J.%
\end{APACrefauthors}%
\unskip\
\newblock
\APACrefYearMonthDay{1983}{}{}.
\newblock
{\BBOQ}\APACrefatitle {On the convergence properties of the {EM} algorithm} {On
  the convergence properties of the {EM} algorithm}.{\BBCQ}
\newblock
\APACjournalVolNumPages{The Annals of Statistics}{11}{}{95--103}.
\PrintBackRefs{\CurrentBib}

\bibitem [\protect \citeauthoryear {%
Xu%
}{%
Xu%
}{%
{\protect \APACyear {2017}}%
}]{%
xu2017identifiability}
\APACinsertmetastar {%
xu2017identifiability}%
\begin{APACrefauthors}%
Xu, G.%
\end{APACrefauthors}%
\unskip\
\newblock
\APACrefYearMonthDay{2017}{}{}.
\newblock
{\BBOQ}\APACrefatitle {Identifiability of restricted latent class models with
  binary responses} {Identifiability of restricted latent class models with
  binary responses}.{\BBCQ}
\newblock
\APACjournalVolNumPages{The Annals of Statistics}{45}{}{675--707}.
\PrintBackRefs{\CurrentBib}

\bibitem [\protect \citeauthoryear {%
C\BHBI H.~Zhang%
}{%
C\BHBI H.~Zhang%
}{%
{\protect \APACyear {2003}}%
}]{%
zhang2003compound}
\APACinsertmetastar {%
zhang2003compound}%
\begin{APACrefauthors}%
Zhang, C\BHBI H.%
\end{APACrefauthors}%
\unskip\
\newblock
\APACrefYearMonthDay{2003}{}{}.
\newblock
{\BBOQ}\APACrefatitle {Compound decision theory and empirical {Bayes} methods}
  {Compound decision theory and empirical {Bayes} methods}.{\BBCQ}
\newblock
\APACjournalVolNumPages{The Annals of Statistics}{31}{}{379--390}.
\PrintBackRefs{\CurrentBib}

\bibitem [\protect \citeauthoryear {%
H.~Zhang%
, Chen%
\BCBL {}\ \BBA {} Li%
}{%
H.~Zhang%
\ \protect \BOthers {.}}{%
{\protect \APACyear {2020}}%
}]{%
zhang2019note}
\APACinsertmetastar {%
zhang2019note}%
\begin{APACrefauthors}%
Zhang, H.%
, Chen, Y.%
\BCBL {}\ \BBA {} Li, X.%
\end{APACrefauthors}%
\unskip\
\newblock
\APACrefYearMonthDay{2020}{}{}.
\newblock
{\BBOQ}\APACrefatitle {A note on exploratory item factor analysis by singular
  value decomposition} {A note on exploratory item factor analysis by singular
  value decomposition}.{\BBCQ}
\newblock
\APACjournalVolNumPages{Psychometrika}{85}{}{358--372}.
\PrintBackRefs{\CurrentBib}

\bibitem [\protect \citeauthoryear {%
S.~Zhang%
, Chen%
\BCBL {}\ \BBA {} Liu%
}{%
S.~Zhang%
\ \protect \BOthers {.}}{%
{\protect \APACyear {2020}}%
}]{%
zhang2020improved}
\APACinsertmetastar {%
zhang2020improved}%
\begin{APACrefauthors}%
Zhang, S.%
, Chen, Y.%
\BCBL {}\ \BBA {} Liu, Y.%
\end{APACrefauthors}%
\unskip\
\newblock
\APACrefYearMonthDay{2020}{}{}.
\newblock
{\BBOQ}\APACrefatitle {An improved stochastic {EM} algorithm for large-scale
  full-information item factor analysis} {An improved stochastic {EM} algorithm
  for large-scale full-information item factor analysis}.{\BBCQ}
\newblock
\APACjournalVolNumPages{British Journal of Mathematical and Statistical
  Psychology}{73}{}{44--71}.
\PrintBackRefs{\CurrentBib}

\bibitem [\protect \citeauthoryear {%
Zou%
\ \BBA {} Hastie%
}{%
Zou%
\ \BBA {} Hastie%
}{%
{\protect \APACyear {2005}}%
}]{%
zou2005regularization}
\APACinsertmetastar {%
zou2005regularization}%
\begin{APACrefauthors}%
Zou, H.%
\BCBT {}\ \BBA {} Hastie, T.%
\end{APACrefauthors}%
\unskip\
\newblock
\APACrefYearMonthDay{2005}{}{}.
\newblock
{\BBOQ}\APACrefatitle {Regularization and variable selection via the elastic
  net} {Regularization and variable selection via the elastic net}.{\BBCQ}
\newblock
\APACjournalVolNumPages{Journal of the Royal Statistical Society: Series B
  (Statistical Methodology)}{67}{}{301--320}.
\PrintBackRefs{\CurrentBib}

\end{thebibliography}

\end{document}